\documentclass[a4paper,aps,pra,showpacs,twocolumn,superscriptaddress]{revtex4-2} 

\usepackage[utf8]{inputenc}  
\usepackage[T1]{fontenc}     
\usepackage[american]{babel}  
\usepackage[svgnames,dvipsnames]{xcolor}
\usepackage[colorlinks=true,citecolor=DarkBlue,linkcolor=DarkBlue,urlcolor=DarkBlue,linktocpage,unicode]{hyperref}
\usepackage{amsmath,amssymb,amsthm, bm,mathtools,amsfonts,mathrsfs,bbm,dsfont} 
\usepackage{centernot}
\newtheorem{theorem}{Theorem}
\newtheorem{observation}[theorem]{Observation}

\newcommand{\tr}{{\mathrm{tr}}}

\newcommand{\eins}{\mathbbm{1}}
\newcommand{\swap}{\mathbb{S}}
\renewcommand{\vr}{\ensuremath{\varrho}}
\renewcommand{\vec}[1]{\ensuremath{\boldsymbol{#1}}}
\newcommand{\trans}{{\rm T}}
\usepackage{braket}
\usepackage[normalem]{ulem}
\usepackage{comment}

\newcommand{\inv}{^{\text{-}1}}

\begin{document}
\title{Hierarchy of saturation conditions for multiparameter quantum metrology bounds}
\author{Satoya Imai\hyperlink{email1}{\textsuperscript{*}}}
\affiliation{Institute of Systems and Information Engineering, University of Tsukuba, Tsukuba, Ibaraki 305-8573, Japan}
\affiliation{Center for Artificial Intelligence Research (C-AIR), University of Tsukuba, Tsukuba, Ibaraki 305-8577, Japan}

\author{Jing Yang\hyperlink{email2}{\textsuperscript{\textdagger}}}
\affiliation{Institute of Fundamental and Transdisciplinary Research, Institute of Quantum Sensing, and Institute for Advanced Study in Physics, Zhejiang University, Hangzhou 310027, China}
\affiliation{Nordita, KTH Royal Institute of Technology and Stockholm University, Hannes Alfv\'ens v\"ag 12, 106 91 Stockholm, Sweden.}

\author{Luca Pezzè\hyperlink{email3}{\textsuperscript{\textdaggerdbl}}}
\affiliation{Istituto Nazionale di Ottica del Consiglio Nazionale delle Ricerche (CNR-INO), Largo Enrico Fermi 6, 50125 Firenze, Italy}
\affiliation{European Laboratory for Nonlinear Spectroscopy (LENS), Via N. Carrara 1, 50019 Sesto Fiorentino, Italy}

\date{\today}
\begin{abstract}
The quantum Cramér-Rao (QCR) bound sets the ultimate local precision limit for unbiased multiparameter estimation. Unlike in the single-parameter case, however, its saturability is not generally guaranteed and is often analyzed through a hierarchy of commutativity-based conditions. Here, we resolve the logical structure of these conditions for unitary parameter-encoding transformations. We identify strict separations among the conditions, reveal previously overlooked gaps in their implications, and construct explicit counterintuitive examples that expose the boundaries among distinct classes. In particular, we show that commutativity of the parameter-encoding generators alone does not guarantee saturation of the QCR bound when realistic noise leads to mixed probe states. Our results provide a systematic classification of saturation conditions in multiparameter quantum metrology and clarify fundamental precision limits of noisy distributed quantum sensing beyond idealized pure-state regimes.
\end{abstract}

\maketitle

\section{Introduction}\label{sec:Introduction}

\begin{figure}[t]
    \centering
    \includegraphics[width=0.95\linewidth]{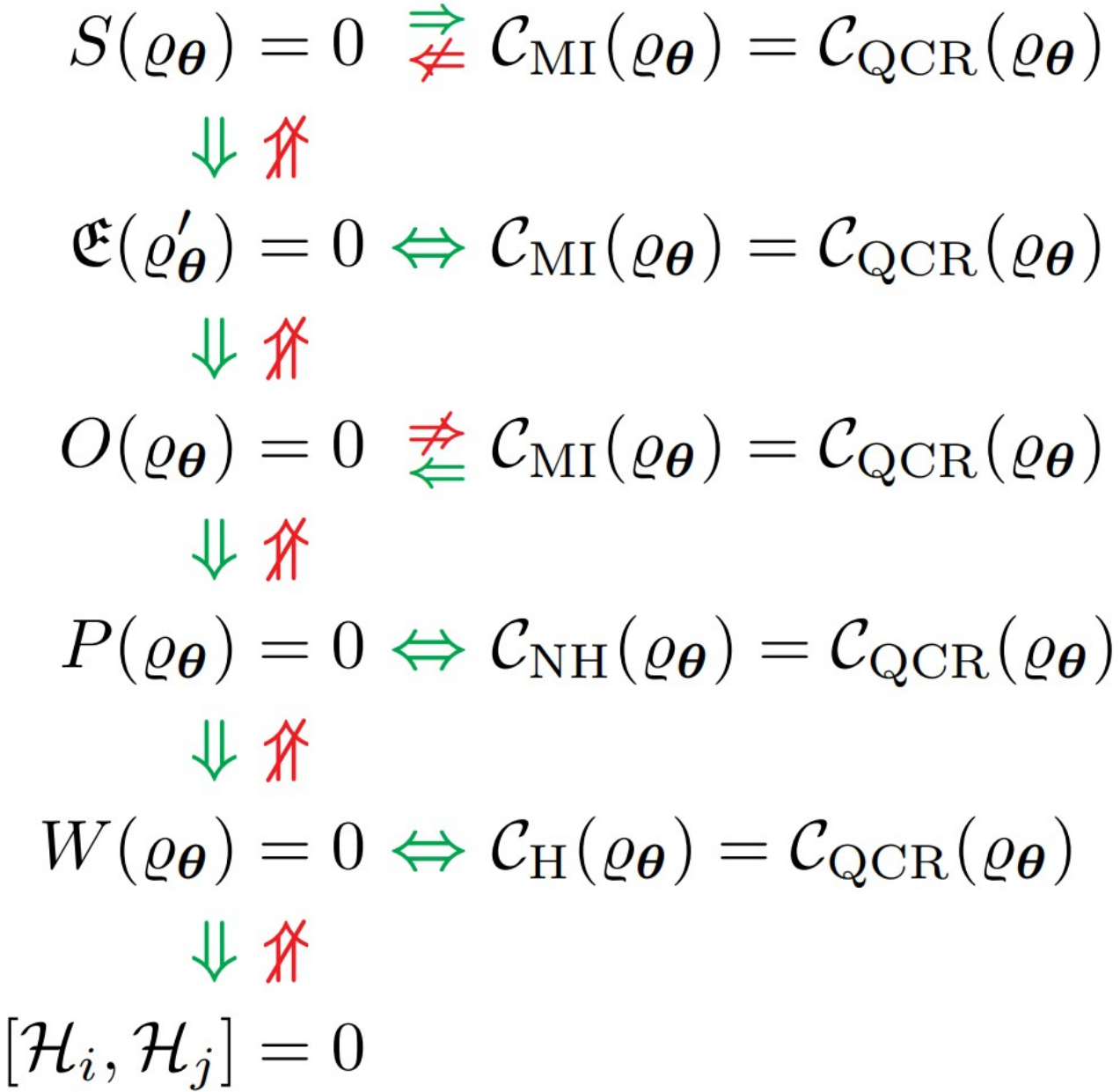}
    \caption{Schematic summary of the hierarchical chain in multiparameter quantum metrology. Here, the quantities $\mathcal{C}_{\rm MI} (\vr_{\vec{\theta}}), \mathcal{C}_{\rm QCR} (\vr_{\vec{\theta}}), \mathcal{C}_{\rm NH} (\vr_{\vec{\theta}})$, and $\mathcal{C}_{\rm H} (\vr_{\vec{\theta}})$ represent the most informative (MI), quantum Cramér-Rao (QCR), Nagaoka-Hayashi (NH), and Holevo (H) bounds, respectively defined in Section~\ref{sec:II.B}. For simplicity, we respectively denote $S(\vr_{\vec{\theta}})=0, \mathfrak{E}(\vr_{\vec{\theta}}^\prime) = 0,  O(\vr_{\vec{\theta}})=0, P(\vr_{\vec{\theta}})=0$, and $W(\vr_{\vec{\theta}})=0$ as the strong commutativity (SC), extended commutativity (EC), one-sided commutativity (OC), partial commutativity (PC), and weak commutativity (WC) conditions, explained in Section~\ref{sec:II.C}. The operator $\mathcal{H}_i$ is a parameter-encoding generator in Eq.~(\ref{eq:generator}). Further details are given in Section~\ref{sec:II.D}.}
    \label{fig1}
\end{figure}

Multiparameter quantum metrology aims to develop optimal strategies for simultaneously estimating several unknown parameters encoded in a quantum state~\cite{albarelli2020perspective,suzuki2020quantum,demkowicz2020multi,pezze2025advances}. This problem is practically relevant, as many quantum technologies involve multiparameter estimation tasks, including clock networks~\cite{ye2024essay}, biological imaging~\cite{aslam2023quantum}, learning quantum systems~\cite{gebhart2023learning}, designing quantum gates and algorithms~\cite{bharti2022noisy}, communication networks~\cite{wehner2018quantum}, and benchmarking quantum simulators~\cite{eisert2020quantum}. Despite recent advances in estimation protocols, several theoretical challenges remain, including computing relevant uncertainty bounds for the covariance matrix of locally-unbiased estimators, identifying conditions under which these bounds are attainable, and developing optimal estimation strategies.

The ultimate precision limit is characterized by the so-called most informative (MI) bound~\cite{albarelli2020perspective,suzuki2020quantum,demkowicz2020multi,pezze2025advances}, defined as the minimum of a chosen multiparameter figure of merit over all estimators and generalized measurements. In practice, however, evaluating the MI bound is generally infeasible. A common alternative approach is therefore to derive more tractable bounds and determine when they coincide with the MI bound. Within the theory of local asymptotic normality~\cite{kahn2009local,yamagata2013quantum,yang2019attaining,demkowicz2020multi}, the uncertainty bound introduced by Holevo~\cite{holevo2011probabilistic} approaches the MI bound for collective measurements on infinitely many copies of the state. Nevertheless, the Holevo bound remains computationally demanding~\cite{sidhu2021tight,albarelli2019evaluating,hayashi2023tight,hayashi2024finding} and experimentally challenging because it requires collective measurements~\cite{roccia2018entangling,conlon2023approaching}.

A more tractable, though less fundamental, alternative is the quantum Cramér-Rao (QCR) bound introduced by Helstrom~\cite{helstrom1969quantum,helstromBOOK1976}. The QCR bound has a simple expression in terms of the quantum state and its symmetric logarithmic derivative (SLD) operators. Still, it does not generally coincide with the MI bound, even in the asymptotic regime. This has motivated many efforts to characterize the conditions under which the QCR bound is saturable at the single-copy level. These saturation conditions are related to the commutation properties of the SLD operators and the quantum state, as summarized below (see Section~\ref{sec:II.C} for precise formulations):
\begin{itemize}
    \item[(i)] The weak commutativity (WC) condition~\cite{matsumoto2002new,ragy2016compatibility} is necessary for saturating the QCR bound and also becomes sufficient for pure states.

    \item[(ii)] The partial commutativity (PC) condition~\cite{yang2019optimal} is also necessary for saturating the QCR bound. It coincides with the WC condition for pure states and with the SC condition for full-rank states.

    \item[(iii)] The ``one-sided'' commutativity (OC) condition, as termed in this manuscript and motivated by Ref.~\cite{matsumoto2005geometricalPhD}, is necessary for saturating the QCR bound and is stronger than the PC condition.

    \item[(iv)] The extended commutativity (EC) condition on an enlarged Hilbert space~\cite{suzuki2020quantum,conlon2022gap} is equivalent to the saturation condition for the QCR bound, which requires identifying a suitable isometry into the extended space. Since it does not specify how to choose the extended SLD operators, establishing a constructive and testable characterization for general rank-deficient single-copy states remains an open problem in quantum information theory~\cite{horodecki2022five}.
    
    \item[(v)] The strong commutativity (SC) condition~\cite{nagaoka1987OnFisher,amari2000methods} is sufficient for saturating the QCR bound and also becomes necessary for full-rank states.
\end{itemize}

Multiparameter estimation exhibits a rich structure, involving pure or mixed states, separable or collective measurement strategies, commuting or incompatible generators of parameter encoding, and classical or quantum correlations in probe states. Despite extensive research, the overall picture remains fragmented, and the relations among the commutativity conditions above are still not fully understood. Clarifying these connections is essential for determining when the QCR bound can be saturated, especially in scenarios involving rank-deficient states and separable measurements. Such progress would facilitate the optimization of probe states and measurement schemes.

The goal of this manuscript is to clarify the hierarchical relations among the commutativity conditions associated with the saturation of the QCR bound. We identify gaps in this hierarchy, revealing both coincidences and distinctions between different conditions through several counterintuitive examples in physically relevant settings. Our results reveal the structure of commutativity relations, as summarized in Fig.~\ref{fig1}: green arrows denote established implications, while red arrows indicate forbidden relations. 

A key implication arises when parameter encoding is implemented through a unitary transformation generated by multiple Hamiltonians. In this setting, the WC condition becomes both necessary and sufficient to saturate the QCR bound for pure states. This implies the widely recognized fact: \textit{For pure states, the QCR bound is saturable when all parameter-encoding generators commute}~\cite{matsumoto2002new}. This observation is relevant to distributed quantum sensing applications, where commuting generators appear naturally~\cite{humphreys2013quantum,proctor2018multiparameter,ge2018distributed,guo2020distributed, liu2021distributed,malia2022distributed,malitesta2023distributed,kim2024distributed,hong2025quantum, pezze2025distributed,li2026multiparameter,minati2025multiparameter,yan2025scalable}. In this manuscript, we show that mixed separable and entangled probes can fail to saturate the QCR bound even when the generators commute.
This suggests that determining the ultimate sensitivity of realistic distributed sensors beyond idealized pure-state models is subtler than previously thought. 

This manuscript is organized as follows: In Section~\ref{sec:Background}, we first recall the general framework of multiparameter quantum metrology, following several in-depth review articles~\cite{albarelli2020perspective,suzuki2020quantum,demkowicz2020multi,pezze2025advances}. We then discuss the hierarchical structure of metrology bounds and outline the conditions under which they can be saturated. Also, we discuss the OC condition in detail (Observation~\ref{ob:OCisnecessary}) and the EC condition; for clarity and completeness of this manuscript, we will provide a comprehensive proof in Appendix~\ref{ap:commutativity_extension}. In Section~\ref{sec:III}, we focus on the WC condition as a simple route toward our goal. We present a general expression for determining the WC condition (Observation~\ref{ob:Gformsummary}) and provide several counterintuitive examples of the WC condition for commuting parameter-encoding generators (Observations~\ref{ob:commutingbutnotweak},~\ref{ob:local_ham_weak_always},~and~\ref{ob:local_ham_weak_always_twoqubit}). In Section~\ref{sec:IV}, we consider other commutativity conditions and give general forms for checking them (Observation~\ref{ob:GE:UPUandUMUandUSU}). We discuss whether, when a certain commutativity condition holds, the corresponding converse implication also holds for commuting Hamiltonians (Observation~\ref{ob:conversearrow}). Finally, in Section~\ref{sec:V}, we conclude and discuss the consequences of our work.

\section{Multiparameter quantum metrology}\label{sec:Background}
\subsection{General framework}\label{sec:II.A}
Multiparameter quantum metrology consists of four stages: state preparation, parameter encoding, measurement, and estimation. Let $\vr$ be an initial probe state in a preparation stage, and $\Lambda_{\vec{\theta}}$ be a quantum channel (completely positive and trace-preserving map) that encodes $m$ real parameters $\vec{\theta} = \{ \theta_1, \ldots, \theta_m \} \in \mathbb{R}^m$ into the probe state. We indicate $\vr_{\vec{\theta}} = \Lambda_{\vec{\theta}} (\vr)$ as the parameter-encoded state. More generally, one may have access to $\nu$ independent and identical copies of the state, $\vr_{\vec{\theta}}^{\otimes \nu}$.

At the measurement stage, a positive operator-valued measure (POVM) $\mathsf{E} = \{ E_\omega \}$ is applied, where $0 \leq E_\omega \leq \eins$ and $\sum_\omega E_\omega = \eins$. Upon measurement, the outcome $\omega$ occurs with probability $p(\omega | \vec{\theta}) = \tr(\vr_{\vec{\theta}}^{\otimes \nu} E_\omega)$. Repeating this procedure $n$ times yields a sequence of independent outcomes $\vec{\omega} = \{ \omega_1, \ldots, \omega_n \} \in \mathbb{R}^n$, with joint probability distribution $p(\vec{\omega} | \vec{\theta}) = \prod_{k=1}^{n} p(\omega_k | \vec{\theta})$. Note that $p(\vec{\omega} | \vec{\theta})$ depends on $\nu$ but we omit its dependency.

In the multi-copy setting ($\nu > 1$), several measurement strategies are employed (for details, see Ref.~\cite{suzuki2020quantum}): (i)~Local measurements (also known as uncorrelated or repetitive measurements), where the same measurement is independently applied to each copy of the state; (ii)~Adaptive measurements, where each copy is measured locally, but the choice of future measurement may depend on the outcomes of previous measurements based on feedback operations; (iii)~Separable measurements, where classical mixtures of local positive operators across the $\nu$ copies are considered~\cite{chitambar2014everything}; (iv)~Collective measurements (also called global or joint measurements), are performed on joint state $\vr_{\vec{\theta}}^{\otimes \nu}$ as a whole, using entangled measurements across the $\nu$ copies. These four measurement strategies differ in implementation complexity and achievable estimation precision, with collective measurements generally providing higher precision than the other strategies.

At the parameter-estimation stage, an estimator $\Tilde{\vec{\theta}} = \{ \Tilde{\theta}_1, \ldots, \Tilde{\theta}_m \} \in \mathbb{R}^m$ is considered. Here, each $\Tilde{\theta}_i = \Tilde{\theta}_i(\vec{\omega})$ is a function of the observed data and aims to estimate the corresponding true parameter $\theta_i$. In particular, an estimator is called locally-unbiased if $\braket{\Tilde{\theta}_i} = \theta_i$ and $\partial_i \braket{\Tilde{\theta}_j} = \delta_{ij}$ for each $i,j \in [1, m]$, where $\braket{\Tilde{\theta}_i} \equiv \sum_{\vec{\omega}} p (\vec{\omega} | \vec{\theta}) \Tilde{\theta}_i (\vec{\omega})$ is the statistical mean value of the estimator and $\partial_i \equiv \partial/\partial \theta_i$. In what follows, we consider locally-unbiased estimators and the corresponding sensitivity bounds.

A central goal in the quantum metrology framework is to develop optimal strategies that enhance the precision of parameter estimation. This uncertainty is typically characterized by the $m \times m$ covariance matrix of the locally-unbiased estimators, denoted as ${\rm Cov}(\vr_{\vec{\theta}}^{\otimes \nu}, \mathsf{E}, \Tilde{\vec{\theta}})$ with elements
\begin{equation} \label{eq:covariance}
    [{\rm Cov}(\vr_{\vec{\theta}}^{\otimes \nu}, \mathsf{E}, \Tilde{\vec{\theta}})]_{ij}
    \equiv
    \sum_{\vec{\omega}}
    p (\vec{\omega} | \vec{\theta}) 
    (\theta_i - \Tilde{\theta}_i)
    (\theta_j - \Tilde{\theta}_j).
\end{equation}
This covariance matrix depends on the state $\vr_{\vec{\theta}}^{\otimes \nu}$, the POVM measurement $\mathsf{E}$, and the estimator $\Tilde{\vec{\theta}}$. One major challenge is that, in general, different parameters or their linear combinations cannot be estimated simultaneously with optimal sensitivity. Rather than optimizing the covariance matrix in Eq.~(\ref{eq:covariance}), one focuses on the scalar quantity
\begin{equation} \label{eq:scalarC}
    \mathcal{C}(\vr_{\vec{\theta}}^{\otimes \nu}, \mathsf{E}, \Tilde{\vec{\theta}})
    \equiv \tr[M {\rm Cov}(\vr_{\vec{\theta}}^{\otimes \nu}, \mathsf{E}, \Tilde{\vec{\theta}})],
\end{equation}
where $M$ is a generic weight matrix that expresses the desired estimation trade-off of different parameters or linear combinations of them. Here $M$ is an $m \times m$ real, symmetric, and positive-definite matrix. Note that $\mathcal{C}$ depends on the weight matrix $M$, but we omit this dependency.

For clarity, let us here summarize the notations used in this manuscript: $m$ denotes the number of parameters to be estimated; $n$ is the number of protocol repetitions; and $\nu$ is the number of identical copies of the state. 

\subsection{Hierarchy of metrology bounds}\label{sec:II.B}
The following hierarchy of scalar inequalities holds:
\begin{equation} \label{eq:hierarchy_classical}
    \mathcal{C}(\vr_{\vec{\theta}}^{\otimes \nu}, \mathsf{E}, \Tilde{\vec{\theta}})
    \geq
    \frac{\mathcal{C}_{\rm CCR} (\vr_{\vec{\theta}}^{\otimes \nu}, \mathsf{E})}{n}
    \geq
    \frac{\mathcal{C}_{\rm MI} (\vr_{\vec{\theta}}^{\otimes \nu})}{n}.
\end{equation}
Note that $\mathcal{C}_{\rm CCR}$ and $\mathcal{C}_{\rm MI}$ depend on the weight matrix $M$, but we omit that dependency. Here,
\begin{equation}
    \mathcal{C}_{\rm CCR} (\vr_{\vec{\theta}}^{\otimes \nu}, \mathsf{E})
    \equiv \tr[M F_C^{-1}(\vr_{\vec{\theta}}^{\otimes \nu}, \mathsf{E})]
\end{equation}
is called the classical Cramér-Rao (CCR) bound~\cite{cramer1999mathematical,rao1945information} and $F_C(\vr_{\vec{\theta}}^{\otimes \nu}, \mathsf{E})$ denotes the classical Fisher information matrix (CFIM). Its elements are
\begin{equation} \label{eq:CFIM_form}
    [F_C(\vr_{\vec{\theta}}^{\otimes \nu}, \mathsf{E})]_{ij}
    = \sum_{\omega} \frac{1}{p(\omega | \vec{\theta})} [\partial_i p(\omega | \vec{\theta})] [\partial_j p(\omega | \vec{\theta})],
\end{equation}
for the outcome $\omega$ occurring with probability $p(\omega | \vec{\theta}) = \tr(\vr_{\vec{\theta}}^{\otimes \nu} E_\omega)$. The left-hand inequality in Eq.~(\ref{eq:hierarchy_classical}) can be saturated by appropriate locally-unbiased estimators such as the maximum-likelihood in the limit $n \to \infty$ (see Ref.~\cite{kay1993fundamentals,lehmann1998theory}).

Furthermore, 
\begin{equation}
    \mathcal{C}_{\rm MI} (\vr_{\vec{\theta}}^{\otimes \nu})
    \equiv
    \min_{\mathsf{E}} \mathcal{C}_{\rm CCR} (\vr_{\vec{\theta}}^{\otimes \nu}, \mathsf{E})
\end{equation}
is called the most informative (MI) bound. Clearly, $\mathcal{C}_{\rm MI} (\vr_{\vec{\theta}}^{\otimes \nu}) = n \min_{\Tilde{\vec{\theta}}, \mathsf{E}} \mathcal{C}(\vr_{\vec{\theta}}^{\otimes \nu}, \mathsf{E}, \Tilde{\vec{\theta}})$. The MI bound represents the ultimate attainable bound in multiparameter quantum metrology, minimized over all possible estimators and all POVMs, for the given $\vr_{\vec{\theta}}^{\otimes \nu}$.

The MI bound is not additive (see below for details), which justifies considering the collective measurement strategy. Such an approach can outperform local/separable measurements by achieving higher estimation precision and enabling tighter bounds on achievable sensitivity. On the other hand, the direct computation of the MI bound is highly nontrivial in general, since the optimal POVM measurement can itself depend on the parameters. Developing effective algorithms and practical procedures for evaluating the MI bound is highly desirable, but this remains a central challenge in multiparameter quantum estimation.

A viable approach to bypass this challenge is to introduce lower bounds to the MI bound and to consider conditions under which these bounds can be saturated. Such bounds are independent of POVM operators and only depend on the state; we refer to them as \textit{quantum bounds}. In fact, the hierarchy in Eq.~(\ref{eq:hierarchy_classical}) is further extended to a sequence of quantum bounds:
\begin{equation} \label{eq:hierarchy_quantum}
    \mathcal{C}_{\rm MI} (\vr_{\vec{\theta}}^{\otimes \nu})
    \geq
    \mathcal{C}_{\rm NH} (\vr_{\vec{\theta}}^{\otimes \nu})
    \geq
    \mathcal{C}_{\rm H} (\vr_{\vec{\theta}}^{\otimes \nu})
    \geq
    \mathcal{C}_{\rm QCR} (\vr_{\vec{\theta}}^{\otimes \nu}),
\end{equation}
which holds for arbitrary weight matrices $M$. Here, $\mathcal{C}_{\rm NH}$ is the Nagaoka-Hayashi bound (originally proposed for two parameters in Refs.~\cite{nagaoka1989ieice,nagaoka2005generalization} reprinted in~\cite{hayashi2005asymptotic} and extended to more parameters in Refs.~\cite{hayashi1999simultaneous,conlon2021efficient}), $\mathcal{C}_{\rm H}$ is the Holevo bound~\cite{holevo2011probabilistic} (sometimes also called the Holevo-Nagaoka bound~\cite{hayashi2024alexander}), and $\mathcal{C}_{\rm QCR}$ is the quantum Cramér-Rao (QCR) bound, which was originally introduced by Helstrom~\cite{helstrom1969quantum,helstromBOOK1976}. These bounds will be defined more explicitly later.

We note that a metrology bound $\mathfrak{c}(\vr_{\vec{\theta}})$ is called \textit{additive} if $\mathfrak{c}(\vr_{\vec{\theta}}^{\otimes \nu}) = (1/\nu) \mathfrak{c}(\vr_{\vec{\theta}})$ for any integer $\nu$, while it is called \textit{subadditive} if $\mathfrak{c}(\vr_{\vec{\theta}}^{\otimes \nu}) \leq (1/\nu) \mathfrak{c}(\vr_{\vec{\theta}})$. It is known that the Holevo bound $\mathcal{C}_{\rm H}(\vr_{\vec{\theta}})$ and the QCR bound $\mathcal{C}_{\rm QCR}(\vr_{\vec{\theta}})$ are additive (see Ref.~\cite{hayashi2008asymptotic}). On the other hand, the MI bound $\mathcal{C}_{\rm MI} (\vr_{\vec{\theta}})$ and the Nagaoka-Hayashi bound $\mathcal{C}_{\rm NH} (\vr_{\vec{\theta}})$ are subadditive in general~\cite{conlon2022gap}.

\subsubsection{Nagaoka-Hayashi and Holevo bounds}
Let $\mathsf{X} = \{X_i\}_{i=1}^m$ be a collection of Hermitian operators $X_i$ and $\mathcal{X}_{\vec{\theta}} =\{ \mathsf{X}| \tr[(\partial_i \vr_{\vec{\theta}}) X_j] = \delta_{ij} \}$ be the set to which $\mathsf{X}$ belongs. The Nagaoka-Hayashi bound and Holevo bound are respectively given by 
\begin{subequations}
\begin{align}
    \label{eq:def:NHbnound}
    \mathcal{C}_{\rm NH} (\vr_{\vec{\theta}})
    &\equiv \min_{\mathsf{X} \in \mathcal{X}_{\vec{\theta}} }
    \{ \tr[M {\rm Re}(\mathsf{Z}_{\mathsf{X}})] + \zeta_{\rm NH} (\mathsf{X}|M) \},
    \\
    \label{eq:def:Hbnound}
    \mathcal{C}_{\rm H} (\vr_{\vec{\theta}})
    &\equiv  \min_{\mathsf{X} \in \mathcal{X}_{\vec{\theta}} }
    \{ \tr[M {\rm Re}(\mathsf{Z}_{\mathsf{X}})] + \zeta_{\rm H} (\mathsf{X}|M) \},
\end{align}
\end{subequations}
where $\mathsf{Z}_{\mathsf{X}}$ is the Hermitian matrix that depends on $\mathsf{X}$, with elements $[\mathsf{Z}_{\mathsf{X}}]_{ij} = \tr(\vr_{\vec{\theta}} X_i X_j)$. Here, $\zeta_{\rm NH}$ is defined as $\zeta_{\rm NH} (\mathsf{X}|M) \equiv \min_{V} \tr(V)$ subject to $V \geq 0$ and $\mathcal{A}(V) + \sqrt{M \otimes \vr_{\vec{\theta}}} \mathcal{A}(\mathsf{X} \mathsf{X}^\trans) \sqrt{M \otimes \vr_{\vec{\theta}}} = 0$, where $\mathcal{A} (V)$ represents the anti-symmetrized version of a matrix $V$ such that $[\mathcal{A} (V)]_{ij} = (V_{ij} - V_{ji})/2$~\cite{hayashi1999development,conlon2022gap}. Also, $\zeta_{\rm H}$ is defined as $\zeta_{\rm H} (\mathsf{X}|M) = \tr[\lvert \sqrt{M} {\rm Im}(\mathsf{Z}_{\mathsf{X}}) \sqrt{M}  \rvert]$~\cite{nagaoka1989ieice,hayashi2005asymptotic}, where $\tr[\lvert X \rvert]$ is the norm of $\lvert X \rvert = \sqrt{X^\dagger X}$ for a matrix $X$. The bounds $\mathcal{C}_{\rm NH} (\vr_{\vec{\theta}})$ and $\mathcal{C}_{\rm H} (\vr_{\vec{\theta}})$ are obtained by nontrivial constrained minimizations over a set of operators. This makes exact computation very challenging, though numerical evaluation has been investigated based on semidefinite programming~\cite{albarelli2019evaluating,conlon2021efficient} and conic linear programming~\cite{hayashi2023tight}.

In the two-parameter case ($m=2$), the bounds $\mathcal{C}_{\rm NH} (\vr_{\vec{\theta}})$ in Eq.~(\ref{eq:def:NHbnound}) and $\mathcal{C}_{\rm H} (\vr_{\vec{\theta}})$ in Eq.~(\ref{eq:def:Hbnound}) are further simplified, since $\zeta_{\rm NH}(\mathsf{X}|M)$ and $\zeta_{\rm H}(\mathsf{X}|M)$ with $\mathsf{X} = \{X_1, X_2\}$ for a weight matrix $M$ are rewritten as
\begin{subequations}
    \begin{align}
        \zeta_{\rm NH}
        &= f_M \, \tr[\lvert \sqrt{\vr_{\vec{\theta}}} (X_1 X_2 - X_2 X_1) \sqrt{\vr_{\vec{\theta}}} \rvert],
        \\
        \zeta_{\rm H}
        &= f_M \, \lvert \tr[\vr_{\vec{\theta}} (X_1 X_2 - X_2 X_1)] \rvert,
    \end{align}
\end{subequations}
where $f_M = 2 \sqrt{\det{(M)}}$. We remark that the inequality $\tr[\lvert X \rvert] \geq \lvert \tr[X] \rvert$ for a matrix $X$ directly leads to $\zeta_{\rm NH} \geq \zeta_{\rm H}$~\cite{nagaoka1989ieice,hayashi2005asymptotic}.

\subsubsection{QCR bound}
The QCR bound is defined as
\begin{equation}
    \mathcal{C}_{\rm QCR} (\vr_{\vec{\theta}})
    \equiv \tr[M F_Q^{-1} (\vr_{\vec{\theta}})],
\end{equation}
where $F_Q(\vr_{\vec{\theta}})$ is the quantum Fisher information matrix (QFIM) with the elements 
\begin{equation} \label{eq:QFImatrix}
    [F_Q(\vr_{\vec{\theta}})]_{ij} = \frac{1}{2} \tr [\vr_{\vec{\theta}} (L_i L_j + L_j L_i) ].
\end{equation}
Here, $L_i \equiv L_i (\vr_{\vec{\theta}})$ is the symmetric logarithmic derivative (SLD) operator defined as
\begin{equation} \label{eq:SLD_definition_derivative}
    \partial_i \vr_{\vec{\theta}} = \frac{1}{2} ( L_i \vr_{\vec{\theta}} + \vr_{\vec{\theta}} L_i ),
\end{equation}
with $\tr(\vr_{\vec{\theta}} L_i) = 0$~\cite{helstrom1969quantum, helstromBOOK1976}. More details about the SLD operator are given in Section~\ref{subsubsec:remarks}.

While so far we have referred to the scalar quantities $\mathcal{C}_{\rm CCR} (\vr_{\vec{\theta}}, \mathsf{E})$ and $\mathcal{C}_{\rm QCR} (\vr_{\vec{\theta}})$ as the CCR and QCR bounds, respectively, they also hold in matrix form~\cite{helstrom1969quantum,helstromBOOK1976}:
\begin{equation}
    \label{eq:chainCCRQCR}
    {\rm Cov}(\vr_{\vec{\theta}}, \mathsf{E}, \Tilde{\vec{\theta}})
    \geq
    \frac{F_C^{-1}(\vr_{\vec{\theta}}, \mathsf{E})}{n}
    \geq
    \frac{F_Q^{-1}(\vr_{\vec{\theta}})}{n}.
\end{equation}
As we will discuss in Appendix~\ref{ap:derivation_FC<=FQ}, Eq.~(\ref{eq:chainCCRQCR}) can lead to
\begin{equation}
    \label{eq:FC<=FQ}
    F_C(\vr_{\vec{\theta}}, \mathsf{E})
    \leq F_Q(\vr_{\vec{\theta}}),
\end{equation}
which holds for all POVMs and quantum states (see, e.g., Refs.~\cite{pezze2017optimal,yang2019optimal}), with equality saturated for all states in the single-parameter case $m=1$~\cite{braunstein1994statistical}.

\subsubsection{Tightness} \label{SubSec.tightness}
For two metrology bounds $\mathfrak{c}_1$ and $\mathfrak{c}_2$, an inequality $\mathfrak{c}_1\leq\mathfrak{c}_2$ is called \textit{tight} if there is some case such that $\mathfrak{c}_1=\mathfrak{c}_2$ holds. Let us make four remarks on the tightness in Eq.~(\ref{eq:hierarchy_quantum}) for an arbitrary weight matrix $M$. 

First, the inequality $\mathcal{C}_{\rm MI} \geq \mathcal{C}_{\rm H}$ is always tight in the asymptotic limit by collective measurements over infinitely many copies of the state, i.e., $\mathcal{C}_{\rm MI} (\vr_{\vec{\theta}}^{\otimes \nu}) = \mathcal{C}_{\rm H} (\vr_{\vec{\theta}}^{\otimes \nu})$ for $\nu \to \infty$~\cite{nagaoka1989ieice,hayashi2005asymptotic,kahn2009local,yamagata2013quantum,yang2019attaining}, but cannot be tight by separable measurements in general (see Ref.~\cite{conlon2021efficient}).

Next, in the two-parameter case, the inequality $\mathcal{C}_{\rm NH} \geq \mathcal{C}_{\rm H}$ is always tight in the asymptotic limit, i.e., $\mathcal{C}_{\rm NH} (\vr_{\vec{\theta}}^{\otimes \nu}) = \mathcal{C}_{\rm H} (\vr_{\vec{\theta}}^{\otimes \nu})$ for $\nu \to \infty$~\cite{conlon2022gap}. Third, in certain metrology settings, the inequality $\mathcal{C}_{\rm MI} \geq \mathcal{C}_{\rm NH}$ can be tight by separable measurements on a finite number of copies~\cite{conlon2021efficient}, but there are also settings such that the inequality $\mathcal{C}_{\rm MI} \geq \mathcal{C}_{\rm NH}$ cannot be tight by separable measurements. This means that the inequality $\mathcal{C}_{\rm MI} \geq \mathcal{C}_{\rm NH}$ is not always tight~\cite{hayashi2023tight,conlon2025role}.

Finally, let us introduce the quantity
\begin{equation} \label{eq:mathcalQ}
    \mathcal{Q}(\vr_{\vec{\theta}}^{\otimes \nu})
    \equiv \mathcal{C}_{\rm MI} (\vr_{\vec{\theta}}^{\otimes \nu}) - \mathcal{C}_{\rm QCR} (\vr_{\vec{\theta}}^{\otimes \nu}).
\end{equation}
It has been shown that $\mathcal{Q}(\vr_{\vec{\theta}}) = 0$ holds if and only if $\mathcal{Q}(\vr_{\vec{\theta}}^{\otimes \nu}) = 0$ holds for any finite $\nu$~\cite{conlon2022gap}.
This result is generally indicated as the gap persistence theorem~\cite{conlon2022gap}. We note that $\mathcal{Q}(\vr_{\vec{\theta}}) = 0$ holds if and only if there exist a POVM $\mathsf{E}$ and an estimator $\Tilde{\vec{\theta}}$ that achieve the equality ${\rm Cov}(\vr_{\vec{\theta}}, \mathsf{E}, \Tilde{\vec{\theta}}) = (1/n) F_Q^{-1} (\vr_{\vec{\theta}})$, for sufficiently large $n$.

\subsection{Commutativity conditions for the saturation of quantum bounds}\label{sec:II.C}
From the perspective of attaining the MI bound, it is essential to understand conditions for the saturation of the different quantum bounds in Eq.~(\ref{eq:hierarchy_quantum}) and their relations. Here we first make some remarks on SLD operators. Then we outline relevant saturation conditions associated with the chain in Eq.~(\ref{eq:hierarchy_quantum}).

\subsubsection{Remarks}\label{subsubsec:remarks}
The saturation conditions of quantum bounds have been completely determined for full-rank and pure states (see below for details). For mixed rank-deficient (neither pure nor full-rank) states, instead, the situation becomes substantially more involved. The main difficulty comes from the structure of the SLD operators.

Specifically, we denote as $\Pi_{\vr_{\vec{\theta}}}$ and $\Pi_{\vr_{\vec{\theta}}}^\perp$ the projectors onto the support and kernel spaces of $\vr_{\vec{\theta}}$, respectively: $\Pi_{\vr_{\vec{\theta}}} \vr_{\vec{\theta}} \Pi_{\vr_{\vec{\theta}}} = \vr_{\vec{\theta}}$ and $\Pi_{\vr_{\vec{\theta}}}^\perp \vr_{\vec{\theta}} \Pi_{\vr_{\vec{\theta}}}^\perp = 0$. Any SLD operator $L_i$ can be decomposed into contributions acting within and across these subspaces. In block matrix form, $L_i$ is written as
\begin{equation} \label{eq:SLDcomponents}
    L_i
    = \begin{bmatrix}
    L_i^{{\rm ss}} & L_i^{{\rm sk}} \\
    L_i^{{\rm ks}} & L_i^{{\rm kk}}
    \end{bmatrix}.
\end{equation}
Here, $L_i^{{\rm ss}} = \Pi_{\vr_{\vec{\theta}}} L_i^{{\rm ss}} \Pi_{\vr_{\vec{\theta}}}$ acts within the support, $L_i^{{\rm sk}} = \Pi_{\vr_{\vec{\theta}}} L_i^{{\rm sk}} \Pi_{\vr_{\vec{\theta}}}^\perp$ maps from the kernel to the support, $L_i^{{\rm ks}} = \Pi_{\vr_{\vec{\theta}}}^\perp L_i^{{\rm ks}} \Pi_{\vr_{\vec{\theta}}}$ maps from the support to the kernel with $(L_i^{{\rm ks}})^\dagger = L_i^{{\rm sk}}$, and $L_i^{{\rm kk}} = \Pi_{\vr_{\vec{\theta}}}^\perp L_i^{{\rm kk}} \Pi_{\vr_{\vec{\theta}}}^\perp$ acts within the kernel. These terms can be expressed in terms of support and kernel states with appropriate coefficients.

It is essential to clarify that $L_i^{{\rm kk}}$ is not uniquely determined by the relation in Eq.~(\ref{eq:SLD_definition_derivative}). In fact, $L_i^{{\rm kk}} \vr_{\vec{\theta}} = \vr_{\vec{\theta}} L_i^{{\rm kk}} =  0$, meaning that for any Hermitian operator $K$ whose support is ${\rm ker} (\vr_{\vec{\theta}})$, an operator $L_i + K$ is also qualified as SLD according to Eq.~(\ref{eq:SLD_definition_derivative}). That is, for a rank-deficient state, the SLD operator belongs to an equivalence class modulo Hermitian operators supported on the kernel space. In general, any Hermitian operator $\mathfrak{L}_i$ is called the SLD operator of $\vr_{\vec{\theta}}$ if and only if it satisfies the condition
\begin{equation} \label{eq:def_SLD_another}
     \mathfrak{L}_i \vr_{\vec{\theta}} = L_i \vr_{\vec{\theta}},
\end{equation}
where $L_i$ satisfies Eq.~(\ref{eq:SLD_definition_derivative}).

As we will explain later, computing commutators of $L_i$ and $L_j$ is relevant for the saturation conditions. Due to the orthogonality between the support and kernel subspaces, when computing the product $L_i L_j$ (eventually also multiplying by $\Pi_{\vr_{\vec{\theta}}}$), many components in Eq.~(\ref{eq:SLDcomponents}) vanish in the computation. For instance, in the weak commutativity and partial commutativity condition given in Eqs.~(\ref{eq:def_W}, \ref{eq:def_P}), only the contributions $L_i^{{\rm ss}} L_j^{{\rm ss}}$ and $L_i^{{\rm sk}} L_j^{{\rm ks}}$ remain, independently of $L_i^{{\rm kk}}$. In contrast, the one-sided commutativity condition given in Eq.~(\ref{eq:def_O}) also includes the additional terms $L_i^{{\rm ks}} L_j^{{\rm ss}}$ and $L_i^{{\rm kk}} L_j^{{\rm ks}}$, while the strong and extended commutativity conditions in Eqs.~(\ref{eq:def_S}, \ref{eq:def_Sprime}) have further terms.

The appearance of terms that depend explicitly on the kernel structure highlights the subtle role of rank deficiency and complicates the formulation of commutativity conditions in this case. More explicit differences between these commutativity conditions will be discussed in Observation~\ref{ob:GE:UPUandUMUandUSU} below. 

Finally, let us comment on pure states. The SLD operator has a compact form
\begin{equation} \label{SLDpurestates}
    L_i (\ket{\psi_{\vec{\theta}}})
    = 2 ( \ket{\partial_i \psi_{\vec{\theta}}}\! \bra{\psi_{\vec{\theta}}} + \ket{\psi_{\vec{\theta}}}\! \bra{\partial_i \psi_{\vec{\theta}}})
    + L_i^{{\rm kk}},
\end{equation}
which is obtained from Eq.~(\ref{eq:SLD_definition_derivative}) and $\vr_{\vec{\theta}}^2 = \vr_{\vec{\theta}}$. In this case, the analysis of the commutativity conditions is greatly simplified.

\subsubsection{Weak commutativity}
The weak commutativity (WC) condition is given by $W(\vr_{\vec{\theta}}) = 0$, where
\begin{equation}\label{eq:def_W}
    [W(\vr_{\vec{\theta}})]_{ij}
    = \tr [\vr_{\vec{\theta}} (L_i L_j - L_j L_i) ].
\end{equation}
The WC condition has been introduced in Ref.~\cite{matsumoto2002new} and shown to be both necessary and sufficient for the equivalence between the MI bound and the QCR bound for arbitrary weight matrices $M$ when a quantum state is pure, i.e., $\mathcal{Q}(\ket{\psi_{\vec{\theta}}}) = 0$ if and only if $W(\ket{\psi_{\vec{\theta}}}) = 0$. In addition, the WC condition is both necessary and sufficient for the equivalence of the Holevo bound and the QCR bound for arbitrary $M$ at the single-copy (not necessarily pure) state $\vr_{\vec{\theta}}$, i.e., $\mathcal{C}_{\rm H} (\vr_{\vec{\theta}}) = \mathcal{C}_{\rm QCR} (\vr_{\vec{\theta}})$ holds if and only if $W(\vr_{\vec{\theta}}) = 0$~\cite{ragy2016compatibility}. 

More specifically, the gap between the Holevo bound and the QCR bound is characterized as follows~\cite{carollo2019quantumness,tsang2020quantum}:
\begin{equation} \label{eq:gap_HQCR}
    \mathcal{C}_{\rm QCR} (\vr_{\vec{\theta}}) \leq 
    \mathcal{C}_{\rm H} (\vr_{\vec{\theta}}) \leq [1+ \mathcal{R}(\vr_{\vec{\theta}})]
    \mathcal{C}_{\rm QCR} (\vr_{\vec{\theta}}),
\end{equation}
where $\mathcal{R}(\vr_{\vec{\theta}}) \equiv (1/2) \lVert F_Q^{-1}(\vr_{\vec{\theta}}) W(\vr_{\vec{\theta}}) \rVert_{\infty} \in [0,1]$ and $\lVert X \rVert_{\infty}$ denotes the largest absolute value of eigenvalue of a matrix $X$. The quantity $\mathcal{R}(\vr_{\vec{\theta}})$ has been used to quantify the incompatibility in multiparameter quantum metrology~\cite{carollo2019quantumness} (see also Refs.~\cite{razavian2020quantumness,belliardo2021incompatibility}). The condition $\mathcal{R}(\vr_{\vec{\theta}}) = 0$ is equivalent to $W(\vr_{\vec{\theta}}) = 0$. In fact, if $\mathcal{R}(\vr_{\vec{\theta}}) = 0$, then $\mathcal{C}_{\rm H} (\vr_{\vec{\theta}}) = \mathcal{C}_{\rm QCR}$ and the WC condition holds; conversely, if $W(\vr_{\vec{\theta}}) = 0$ then $\mathcal{C}_{\rm H} (\vr_{\vec{\theta}}) = \mathcal{C}_{\rm QCR}$ and $\mathcal{R}(\vr_{\vec{\theta}}) = 0$.

From a geometric perspective, the WC condition corresponds to the vanishing of the mean Uhlmann curvature as a generalization of the Berry curvature to mixed states~\cite{uhlmann1986parallel,carollo2020geometry}. In other words, the WC condition is associated with the fact that the imaginary part of the quantum geometric tensor $G(\vr_{\vec{\theta}})$ vanishes: $G(\vr_{\vec{\theta}}) = F_Q(\vr_{\vec{\theta}}) + (1/2) W(\vr_{\vec{\theta}})$, where the real part of $G(\vr_{\vec{\theta}})$ is the QFIM. In particular, the quantity $\mathcal{R}(\vr_{\vec{\theta}})$ introduced above has recently been discussed in a geometric framework through the notion of semi-classical geometric tensor as a measurement-dependent counterpart of the quantum geometric tensor~\cite{imai2026semiclassical}.

Finally, we comment on the gap between the MI bound and the QCR bound. In the single-copy case ($\nu=1$), the WC condition is necessary to have $\mathcal{Q}(\vr_{\vec{\theta}}) = 0$ in Eq.~(\ref{eq:mathcalQ}), but it is not sufficient because the equivalence between the MI bound and the Holevo bound does not hold in general. This lack of sufficiency holds for any finite $\nu$ by the gap persistence theorem~\cite{conlon2022gap}. Only in the asymptotic limit ($\nu \to \infty$), the WC condition becomes both necessary and sufficient for the equivalence between the MI bound and the QCR bound. This follows from the two facts: one is that the Holevo bound coincides with the MI bound in this limit (see discussion in Section~\ref{SubSec.tightness}); the other is that the WC condition is both necessary and sufficient for the equivalence between the QCR bound and the Holevo bound, as mentioned above.

\subsubsection{Partial commutativity}
The partial commutativity (PC) condition is given by $P(\vr_{\vec{\theta}}) = 0$, where
\begin{equation}\label{eq:def_P}
    [P(\vr_{\vec{\theta}})]_{ij}
    = \Pi_{\vr_{\vec{\theta}}} (L_i L_j - L_j L_i) \Pi_{\vr_{\vec{\theta}}}.
\end{equation}
Here, $\Pi_{\vr_{\vec{\theta}}}$ is the projector on the support of the state $\vr_{\vec{\theta}}$, defined as $\Pi_{\vr_{\vec{\theta}}} = \sum_{k} \ket{\psi_k(\vec{\theta})}\! \bra{\psi_k(\vec{\theta})}$ for $\ket{\psi_k(\vec{\theta})}$ being the eigenstate of $\vr_{\vec{\theta}}$ with nonzero eigenvalue. The PC condition is both necessary and sufficient for the equivalence between the Nagaoka-Hayashi bound and the QCR bound for arbitrary $M$ at the single-copy state~\cite{conlon2022gap}, i.e., $P(\vr_{\vec{\theta}}) = 0$ holds if and only if $\mathcal{C}_{\rm NH} (\vr_{\vec{\theta}}) = \mathcal{C}_{\rm QCR} (\vr_{\vec{\theta}})$. 

In the single-copy case, the PC condition is necessary to have $\mathcal{Q}(\vr_{\vec{\theta}}) = 0$~\cite{yang2019optimal}. However, the PC condition cannot be sufficient because the equivalence between the MI bound and the Nagaoka-Hayashi bound does not hold in general~\cite{hayashi2023tight,conlon2025role}. Very recently, Ref.~\cite{yang2026geometric} has arrived at this conclusion from a geometric criterion based on the so-called hollowization condition. We note that whether the lack of this sufficiency remains true or not for any finite $\nu>1$ remains an open question~\cite{conlon2022gap}. Similarly to the WC condition, the PC condition becomes both necessary and sufficient for the equivalence between the MI bound and the QCR bound only in the asymptotic limit ($\nu \to \infty$). This follows from the two facts: one is that in this limit, the Nagaoka-Hayashi bound coincides with the Holevo bound, and the Holevo bound coincides with the MI bound (see discussion in Section~\ref{SubSec.tightness}); the other is that the PC condition is both necessary and sufficient for the equivalence between the QCR bound and the Nagaoka-Hayashi bound, as mentioned above.

The connection between the PC and WC conditions is clarified as follows~\cite{chen2022information,chen2022incompatibility}:
\begin{equation}
    \label{eq:equivalence_PC_WC}
    \lim_{\nu \to \infty}
    \frac{1}{\nu}
    [\mathfrak{p} (\vr_{\vec{\theta}}^{\otimes \nu})]_{ij}
    =
    \left \lvert
    [W(\vr_{\vec{\theta}})]_{ij}
    \right \rvert,
\end{equation}
where $[W(\vr_{\vec{\theta}})]_{ij}$ is given in Eq.~(\ref{eq:def_W}) and
\begin{equation}
    [\mathfrak{p} (\vr_{\vec{\theta}}^{\otimes \nu})]_{ij}
    \!=\! \tr \left[
    \left \lvert
    \sqrt{\vr_{\vec{\theta}}^{\otimes \nu}}
    \left( L_{i \nu} L_{j \nu} \!-\! L_{j \nu} L_{i \nu} \right)
    \sqrt{\vr_{\vec{\theta}}^{\otimes \nu}}
    \right \rvert
    \right].
\end{equation}
Here we denoted $\tr[\lvert X \rvert]$ as the norm of $\lvert X \rvert = \sqrt{X^\dagger X}$ for a matrix $X$, and we define $L_{i \nu} \equiv L_i (\vr_{\vec{\theta}}^{\otimes \nu})$ as the SLD operator of the $\nu$-copy state $\vr_{\vec{\theta}}^{\otimes \nu}$. For $\nu = 1$, one can see that $[\mathfrak{p} (\vr_{\vec{\theta}})]_{ij} = 0$ holds if and only if $[P(\vr_{\vec{\theta}})]_{ij} = 0$ holds, as in Eq.~(\ref{eq:def_P}). This equivalence follows from the linear independence of the eigenstates corresponding to distinct positive eigenvalues.

\subsubsection{Strong commutativity}
The strong commutativity (SC) condition is formulated as follows: There exists at least one choice of $\{L_i^{{\rm kk}}\}$ acting within the kernel space of $\vr_{\vec{\theta}}$ in Eq.~(\ref{eq:SLDcomponents}) such that $S(\vr_{\vec{\theta}}) = 0$, where
\begin{equation}\label{eq:def_S}
    [S(\vr_{\vec{\theta}})]_{ij}
    = L_i L_j - L_j L_i.
\end{equation}
The SC condition is sufficient for the equivalence between the MI bound and the QCR bound for arbitrary $M$ at the single-copy state~\cite{nagaoka1987OnFisher,amari2000methods}, i.e., if there exist $\{L_i^{{\rm kk}}\}$ such that $S(\vr_{\vec{\theta}}) = 0$, then $\mathcal{Q}(\vr_{\vec{\theta}}) = 0$. In this case, the optimal POVM is given by the simultaneous spectral decomposition of each SLD operator~\cite{nagaoka1987OnFisher,amari2000methods}.

We emphasize that the choice of $\{L_i^{{\rm kk}}\}$ in the SC condition, understood as the existence of commuting SLD operators, is preserved under arbitrary kernel additions of operators $K$ satisfying $K = \Pi_{\vr_{\vec{\theta}}}^\perp K \Pi_{\vr_{\vec{\theta}}}^\perp$. In contrast, the commutator $S(\vr_{\vec{\theta}})$ itself is not preserved under such additions. This fact does not contradict the physical insight that the saturation condition for $\mathcal{Q}(\vr_{\vec{\theta}}) = 0$ is invariant under any changes of the kernel. In practice, a common and straightforward way to verify the SC condition is to choose $L_i^{{\rm kk}} = 0$. On the other hand, it should be noted that even if such a choice yields $S(\vr_{\vec{\theta}}) \neq 0$, there might remain a possibility that $S(\vr_{\vec{\theta}}) = 0$ holds for other choices of $\{L_i^{{\rm kk}}\}$~\cite{conlon2025role}.

\subsubsection{One-sided commutativity}
The one-sided commutativity (OC) condition is formulated as follows: There exists at least one choice of $\{L_i^{{\rm kk}}\}$ acting within the kernel space of $\vr_{\vec{\theta}}$ in Eq.~(\ref{eq:SLDcomponents}) such that $O(\vr_{\vec{\theta}}) = 0$, where
\begin{equation}\label{eq:def_O}
    [O(\vr_{\vec{\theta}})]_{ij}
    = (L_i L_j - L_j L_i) \Pi_{\vr_{\vec{\theta}}}.
\end{equation}
Then one can obtain the following:
\begin{observation}\label{ob:OCisnecessary}
    The OC condition is necessary for the equivalence between the MI bound and the QCR bound for arbitrary $M$ at the single-copy state, i.e., if $\mathcal{Q}(\vr_{\vec{\theta}}) = 0$, then there exist $\{L_i^{{\rm kk}}\}$ such that $O(\vr_{\vec{\theta}}) = 0$ holds.
\end{observation}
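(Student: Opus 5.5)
We will derive the OC condition from the structure of a POVM that saturates the QCR bound, following the classical argument of Matsumoto. Assume $\mathcal{Q}(\vr_{\vec{\theta}})=0$. By the remark at the end of Section~\ref{SubSec.tightness}, there is a POVM $\mathsf{E}=\{E_\omega\}$ with $F_C(\vr_{\vec{\theta}},\mathsf{E})=F_Q(\vr_{\vec{\theta}})$; the equality holds for arbitrary $M$, hence as matrices. We may refine $\mathsf{E}$ into rank-one elements $E_\omega=\ket{\phi_\omega}\!\bra{\phi_\omega}$, since refinement cannot decrease $F_C$ while Eq.~(\ref{eq:FC<=FQ}) bounds it above.

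The first step is to redo the standard proof of Eq.~(\ref{eq:FC<=FQ}) for a vector $\vec{c}\in\mathbb{R}^m$ and track its equality conditions. Write $L_{\vec c}=\sum_i c_iL_i$. Then $\vec c^\trans F_C\vec c=\sum_\omega [{\rm Re}\,\tr(\vr E_\omega L_{\vec c})]^2/\tr(\vr E_\omega)$, which is at most $\sum_\omega|\tr(\vr E_\omega L_{\vec c})|^2/\tr(\vr E_\omega)$, which by Cauchy--Schwarz is at most $\sum_\omega \tr(E_\omega L_{\vec c}\vr L_{\vec c})=\vec c^\trans F_Q\vec c$. Equality for every $\vec c$ forces two things for each $\omega$ with $p(\omega)>0$:
\begin{equation}
E_\omega^{1/2}L_i\vr^{1/2}=\lambda_{i,\omega}E_\omega^{1/2}\vr^{1/2},\qquad \lambda_{i,\omega}\in\mathbb{R}.
\end{equation}
Reality of $\lambda_{i,\omega}$ comes from the first inequality; the proportionality itself comes from Cauchy--Schwarz. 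In rank-one form this reads $\bra{\phi_\omega}(L_i-\lambda_{i,\omega})\Pi_{\vr_{\vec{\theta}}}=0$. For outcomes with $p(\omega)=0$ we have $\bra{\phi_\omega}\Pi_{\vr_{\vec{\theta}}}=0$ (after the standard regularization of null outcomes), so $\ket{\phi_\omega}$ lies in the kernel.

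The second step is to construct the kernel blocks. Define $\tilde L_i=\sum_\omega\lambda_{i,\omega}\ket{\phi_\omega}\!\bra{\phi_\omega}$, where the sum runs over outcomes with $p(\omega)>0$; a null outcome may be assigned any real $\lambda_{i,\omega}$. Completeness $\sum_\omega E_\omega=\eins$ gives $\tilde L_i\Pi_{\vr_{\vec{\theta}}}=\sum_\omega\ket{\phi_\omega}\!\bra{\phi_\omega}L_i\Pi_{\vr_{\vec{\theta}}}=L_i\Pi_{\vr_{\vec{\theta}}}$. Hence $\tilde L_i\vr=L_i\vr$, so $\tilde L_i$ is an SLD in the sense of Eq.~(\ref{eq:def_SLD_another}). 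It differs from $L_i$ only by an operator $K_i$ with $K_i\Pi_{\vr_{\vec{\theta}}}=0$. Being Hermitian, $K_i$ therefore also satisfies $\Pi_{\vr_{\vec{\theta}}}K_i=0$, so $K_i$ is supported on the kernel and $\tilde L_i$ corresponds to a particular choice of $\{L_i^{\rm kk}\}$.

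The third step checks $O=0$. We compute $\tilde L_i\tilde L_j\Pi_{\vr_{\vec{\theta}}}=\tilde L_i L_j\Pi_{\vr_{\vec{\theta}}}=\sum_\omega\lambda_{i,\omega}\ket{\phi_\omega}\!\bra{\phi_\omega}L_j\Pi_{\vr_{\vec{\theta}}}=\sum_\omega\lambda_{i,\omega}\lambda_{j,\omega}\ket{\phi_\omega}\!\bra{\phi_\omega}\Pi_{\vr_{\vec{\theta}}}$, using $\bra{\phi_\omega}L_j\Pi_{\vr_{\vec{\theta}}}=\lambda_{j,\omega}\bra{\phi_\omega}\Pi_{\vr_{\vec{\theta}}}$ from the first step. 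This expression is symmetric in $i,j$, so $(\tilde L_i\tilde L_j-\tilde L_j\tilde L_i)\Pi_{\vr_{\vec{\theta}}}=0$. Only the right factor $\Pi_{\vr_{\vec{\theta}}}$ is needed, which is why one obtains one-sided rather than strong commutativity: the $\tilde L_i$ themselves need not commute because the $\ket{\phi_\omega}$ are not orthogonal.

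The main obstacle is the rigor of the first step. One must justify the rank-one refinement, and must handle outcomes with $p(\omega)=0$ but $\partial_i p(\omega)\neq 0$, where the classical Fisher information formula is singular. The plan is to note that saturation forces $\partial_ip(\omega)=0$ for such outcomes, by the same Cauchy--Schwarz bound applied to $\tr(E_\omega\partial_i\vr)={\rm Re}\,\tr(E_\omega L_i\vr)$, which vanishes when $E_\omega\vr=0$. Once this is established, the remaining steps are algebraic. If the POVM is continuous, the sums become integrals, and the same argument should go through with a density of rank-one elements.
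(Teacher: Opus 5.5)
Your proof is sound in structure and genuinely different from the paper's, but one step uses a fact you have not established.

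For a null outcome you only record $\Pi_{\vr_{\vec{\theta}}}\ket{\phi_\omega}=0$. The completeness step $\tilde L_i\Pi_{\vr_{\vec{\theta}}}=L_i\Pi_{\vr_{\vec{\theta}}}$, and likewise your third step, needs $\bra{\phi_\omega}L_i\Pi_{\vr_{\vec{\theta}}}=0$. That is, $\ket{\phi_\omega}$ must be orthogonal to the range of $L_i^{\rm ks}$; lying in the kernel is not enough.

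Your planned remedy, $\partial_i p(\omega|\vec{\theta})=0$, is automatic from $E_\omega\vr_{\vec{\theta}}=0$ and does not supply this. The missing fact is already in your own chain. The last equality $\sum_\omega\tr(E_\omega L_{\vec c}\vr_{\vec{\theta}}L_{\vec c})=\vec c^\trans F_Q\vec c$ runs over all outcomes, whereas $F_C$ only collects those with $p(\omega|\vec{\theta})>0$. So $F_C=F_Q$ forces $\tr(E_\omega L_i\vr_{\vec{\theta}}L_i)=\lVert\vr_{\vec{\theta}}^{1/2}L_i\ket{\phi_\omega}\rVert^2=0$ on null outcomes, which is exactly $\bra{\phi_\omega}L_i\Pi_{\vr_{\vec{\theta}}}=0$. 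With this, $\bra{\phi_\omega}L_i\Pi_{\vr_{\vec{\theta}}}=\lambda_{i,\omega}\bra{\phi_\omega}\Pi_{\vr_{\vec{\theta}}}$ holds for every $\omega$, and the rest goes through.

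The rank-one refinement is also unnecessary. The relation $E_\omega^{1/2}L_i\vr_{\vec{\theta}}^{1/2}=\lambda_{i,\omega}E_\omega^{1/2}\vr_{\vec{\theta}}^{1/2}$ already gives $E_\omega L_i\Pi_{\vr_{\vec{\theta}}}=\lambda_{i,\omega}E_\omega\Pi_{\vr_{\vec{\theta}}}$, and the same computation works with $\tilde L_i=\sum_\omega\lambda_{i,\omega}E_\omega$.

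\textbf{Comparison with the paper.} The paper invokes the necessity half of the EC theorem (Appendix~\ref{ap:commutativity_extension}, via Naimark dilation). It writes the commuting extended SLDs $\mathfrak{L}_i^\prime$ in block form over $\mathcal{V}(\mathcal{H}_{\rm S})\oplus\mathcal{V}(\mathcal{H}_{\rm K})\oplus\mathcal{H}_{\rm A}$. It then compresses $\mathcal{V}^\dagger[\mathfrak{E}(\vr_{\vec{\theta}}^\prime)]_{ij}\mathcal{V}\Pi_{\vr_{\vec{\theta}}}$ to $O(\vr_{\vec{\theta}})$ plus a kernel--support term from $X_i^{\rm kk}$, and absorbs $X_i^{\rm kk}$ into $L_i^{\rm kk}$. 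You instead stay on $\mathcal{H}$ and use only the equality conditions of $F_C\le F_Q$.

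The two constructions select the same kernel blocks. Taking the real part of $\bra{\phi_\omega}L_i\vr_{\vec{\theta}}\ket{\phi_\omega}=\lambda_{i,\omega}p(\omega|\vec{\theta})$ gives $\lambda_{i,\omega}=\partial_i\ln p(\omega|\vec{\theta})$. So your $\tilde L_i$ is, up to the free weights on null outcomes, the compression $\mathcal{V}^\dagger\mathfrak{L}_i^\prime\mathcal{V}$ of the paper's $\mathfrak{L}_i^\prime=\sum_j[F_Q]_{ij}\tilde E_j^\prime$.

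Your version has three advantages:
\begin{itemize}
\item it is self-contained and avoids the dilation;
\item it gives the explicit choice $L_i^{\rm kk}=\Pi_{\vr_{\vec{\theta}}}^\perp\bigl(\sum_{\omega:\,p(\omega|\vec{\theta})>0}\partial_i\ln p(\omega|\vec{\theta})\,E_\omega\bigr)\Pi_{\vr_{\vec{\theta}}}^\perp$ from any optimal POVM;
\item it shows why only one-sided commutation survives, namely that the $E_\omega$ are not mutually orthogonal.
\end{itemize}
The paper's version establishes arrow (II) in Eq.~(\ref{eq:several_hierarchy}) as such. It applies to arbitrary commuting extended SLDs, not only those generated from a POVM, which is the form the hierarchy needs.
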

We have several remarks. First, the statement of Observation~\ref{ob:OCisnecessary} is motivated by Ref.~\cite{matsumoto2005geometricalPhD}, and its proof is postponed below. Second, similar to the SC condition, the OC condition is invariant under arbitrary kernel additions, but this alone does not provide a recipe for choosing $\{L_i^{{\rm kk}}\}$. In fact, it has been reported that, even when $\mathcal{Q}(\vr_{\vec{\theta}}) = 0$ holds, simply choosing $L_i^{{\rm kk}} = 0$ does not guarantee that $O(\vr_{\vec{\theta}}) = 0$, see Example~B in Ref.~\cite{conlon2025role}. Third, it might be natural to ask whether the OC condition also becomes sufficient to have $\mathcal{Q}(\vr_{\vec{\theta}}) = 0$. Although this sufficiency is true for pure and full-rank states as discussed in Section~\ref{sec:II.D} below, this cannot be true for rank-deficient states, see Example~E in Ref.~\cite{conlon2025role}. Finally, we note that the condition $[O(\vr_{\vec{\theta}})]_{ij} = 0$ holds if and only if $(\partial_i L_j - \partial_j L_i ) \vr_{\vec{\theta}} = 0$ holds, due to the identity relation $\partial_i \partial_j \vr_{\vec{\theta}} - \partial_j \partial_i \vr_{\vec{\theta}} = 0$.

\subsubsection{Extended commutativity on an enlarged Hilbert space}\label{subsubsec:ECcondition}
Let $\mathcal{H}$ be the Hilbert space where the state $\vr_{\vec{\theta}}$ is defined, and $\mathcal{H}^\prime$ be an extended Hilbert space with $\mathcal{H}^\prime \supseteq \mathcal{H}$. Let $\mathcal{V}:\mathcal{H} \to \mathcal{H}^\prime$ be an isometry operator, so that $\mathcal{V}^\dagger \mathcal{V}=\eins$ on $\mathcal{H}$, and $\Pi_{\mathcal{V}}=\mathcal{V}\mathcal{V}^\dagger$ be the orthogonal projector onto $\mathcal{V}(\mathcal{H}) \subseteq \mathcal{H}^\prime$. Here, $\mathcal{V}(\mathcal{H})$ is the image (range) of the operator $\mathcal{V}$, i.e., the set of all outputs of $\mathcal{V}$ when applied to elements of $\mathcal{H}$. The state $\vr_{\vec{\theta}}^\prime = \mathcal{V} \vr_{\vec{\theta}} \mathcal{V}^\dagger \in \mathcal{V}(\mathcal{H})$ is thus the isometric extension of $\vr_{\vec{\theta}}$. In analogy with Eq.~(\ref{eq:def_SLD_another}), we say that a Hermitian operator $\mathfrak{L}_i^\prime \in \mathcal{H}^\prime$ is a SLD operator in the extended space if and only if
\begin{equation} \label{eq:def_SLD_another_extension}
    \mathfrak{L}_i^\prime \vr_{\vec{\theta}}^\prime = \mathcal{V} L_i \mathcal{V}^\dagger \vr_{\vec{\theta}}^\prime.
\end{equation}

The extended commutativity (EC) condition on an enlarged Hilbert space~\cite{suzuki2020quantum,conlon2022gap} is formulated as follows: There exist an isometry $\mathcal{V}$ and a set of SLD operators $\{ \mathfrak{L}_i^\prime \}$ on the extended Hilbert space $\mathcal{H}^\prime$, satisfying Eq.~(\ref{eq:def_SLD_another_extension}), such that $\mathfrak{E} (\vr_{\vec{\theta}}^\prime) = 0$, where 
\begin{equation}\label{eq:def_Sprime}
    [\mathfrak{E} (\vr_{\vec{\theta}}^\prime)]_{ij}
    = \mathfrak{L}_i^\prime \mathfrak{L}_j^\prime - \mathfrak{L}_j^\prime \mathfrak{L}_i^\prime.
\end{equation}
The EC condition is both necessary and sufficient for the equivalence between the MI bound and the QCR bound for arbitrary $M$ at the single-copy state, i.e., $\mathcal{Q}(\vr_{\vec{\theta}}) = 0$ if and only if there exist $\mathcal{V}$ and $\{\mathfrak{L}_i^\prime \}$ such that $\mathfrak{E} (\vr_{\vec{\theta}}^\prime) = 0$~\cite{suzuki2020quantum,conlon2022gap}. For clarity and completeness of this manuscript, we will provide a comprehensive proof in Appendix~\ref{ap:commutativity_extension}. 

Accordingly, one can prove Observation~\ref{ob:OCisnecessary}:
\begin{proof}[Proof of Observation~\ref{ob:OCisnecessary}]
    Let us begin by decomposing the extended Hilbert space $\mathcal{H}^\prime$ as
    \begin{equation}
        \label{eq:H'space}
        \mathcal{H}^\prime
        = \mathcal{V}(\mathcal{H}) \oplus (\mathcal{V}(\mathcal{H}))^\perp
        = \mathcal{V}(\mathcal{H}_{\rm S}) \oplus \mathcal{V}(\mathcal{H}_{\rm K}) \oplus \mathcal{H}_{\rm A},
    \end{equation}
    where $\mathcal{H}_{\rm S} = {\rm supp} (\vr_{\vec{\theta}})$ and $\mathcal{H}_{\rm K} = {\rm ker} (\vr_{\vec{\theta}})$ respectively denote the support and kernel subspaces of $\vr_{\vec{\theta}}$, and $\mathcal{H}_{\rm A} = (\mathcal{V}(\mathcal{H}))^\perp$ represents an ancillary (additional) Hilbert space as an orthogonal complement to $\mathcal{V}(\mathcal{H})$. Then the SLD operator $\mathfrak{L}_i^\prime \in \mathcal{H}^\prime$ satisfying Eq.~(\ref{eq:def_SLD_another_extension}) can be written as
    \begin{equation} \label{eq:SLD'extendedwithK'}
        \mathfrak{L}_i^\prime
        = \mathcal{V} L_i \mathcal{V}^\dagger + X_i,
    \end{equation}
    where $X_i$ is an Hermitian operator whose support is ${\rm ker} (\vr_{\vec{\theta}}^\prime) = \mathcal{V}(\mathcal{H}_{\rm K}) \oplus \mathcal{H}_{\rm A}$. Using the notations in Eq.~(\ref{eq:SLDcomponents}), we can decompose Eq.~(\ref{eq:SLD'extendedwithK'}) into components acting within and across these subspaces. In the block matrix form, $\mathfrak{L}_i^\prime$ is written as
    \begin{equation} \label{eq:SLD'_block}
    \mathfrak{L}_i^\prime
    =
    \begin{bmatrix}
    \mathcal{V} L_i^{\rm ss} \mathcal{V}^\dagger & \mathcal{V} L_i^{\rm sk} \mathcal{V}^\dagger & 0 \\
    \mathcal{V} L_i^{\rm ks} \mathcal{V}^\dagger & \mathcal{V} L_i^{\rm kk} \mathcal{V}^\dagger + X_i^{\rm kk} & X_i^{\rm ka} \\
    0 & X_i^{\rm ak} & X_i^{\rm aa}
    \end{bmatrix},
    \end{equation}
    where the script $\rm a$ represents the ancillary part of $X$, and $\mathfrak{L}_i^\prime$ has no coupling between $\mathcal{V}(\mathcal{H}_{\rm S})$ and $\mathcal{H}_{\rm A}$, due to Eq.~(\ref{eq:def_SLD_another_extension}) and $\partial_i \vr_{\vec{\theta}}^\prime =\mathcal{V} \partial_i \vr_{\vec{\theta}} \mathcal{V}^\dagger \in \mathcal{V}(\mathcal{H})$. 
    
    Denoting $\Pi_{\vr_{\vec{\theta}}}$ and $\Pi_{\vr_{\vec{\theta}}}^\perp$ as the projectors on $\mathcal{H}_{\rm S}$ and $\mathcal{H}_{\rm A}$, we have
    \begin{subequations}
        \begin{align}
        \Pi_{\vr_{\vec{\theta}}} \mathcal{V}^\dagger
        \mathfrak{L}_i^\prime \mathfrak{L}_j^\prime
        \mathcal{V} \Pi_{\vr_{\vec{\theta}}}
        &= \Pi_{\vr_{\vec{\theta}}}  L_i L_j \Pi_{\vr_{\vec{\theta}}},
        \\
        \Pi_{\vr_{\vec{\theta}}}^\perp \mathcal{V}^\dagger
        \mathfrak{L}_i^\prime \mathfrak{L}_j^\prime
        \mathcal{V} \Pi_{\vr_{\vec{\theta}}}
        &=\Pi_{\vr_{\vec{\theta}}}^\perp
        ( L_i L_j +  Y_{ij})
        \Pi_{\vr_{\vec{\theta}}},
    \end{align}
    \end{subequations}
    where $Y_{ij} = \mathcal{V}^\dagger X_i^{\rm kk} \mathcal{V} L_j^{\rm ks}$. Summing these and using $\Pi_{\vr_{\vec{\theta}}} + \Pi_{\vr_{\vec{\theta}}}^\perp = \eins_{\mathcal{H}}$ reduces to 
    \begin{equation}
        \label{eq:L'Lrel}
        \mathcal{V}^\dagger
        [\mathfrak{E} (\vr_{\vec{\theta}}^\prime)]_{ij}
        \mathcal{V} \Pi_{\vr_{\vec{\theta}}}
        = [O(\vr_{\vec{\theta}})]_{ij} + \Pi_{\vr_{\vec{\theta}}}^\perp (Y_{ij} -Y_{ji}) \Pi_{\vr_{\vec{\theta}}},
    \end{equation}
    where $O(\vr_{\vec{\theta}})$ and $\mathfrak{E} (\vr_{\vec{\theta}}^\prime)$ are defined in Eq.~(\ref{eq:def_O},\ref{eq:def_Sprime}).

    Finally, let us suppose that $\mathcal{Q}(\vr_{\vec{\theta}}) = 0$. According to the EC condition, there exist $\mathcal{V}$ and $\{\mathfrak{L}_i^\prime \}$ such that $\mathfrak{E} (\vr_{\vec{\theta}}^\prime) = 0$. Keeping such a SLD operator $\mathfrak{L}_i^{\prime}$ unchanged, we can redefine $L_i^{\mathrm{kk}}$ as $\tilde{L}_i^{\mathrm{kk}}=L_i^{\mathrm{kk}}-\mathcal{V}^\dagger X_i^{\mathrm{kk}} \mathcal{V}$ such that $\tilde{X}_i^{\mathrm{kk}}=0$. This is always possible because $\Pi_{\mathcal{V}} X_i^{\mathrm{kk}} \Pi_{\mathcal{V}} = X_i^{\mathrm{kk}} \in \mathcal{V}(\mathcal{H}_{\rm K})$ and the fact that the EC condition is invariant under any kernel shifts as discussed. Thus, Eq.~(\ref{eq:L'Lrel}) directly implies the OC condition. Hence, we can complete the proof.
\end{proof}

\subsection{Hierarchy of commutativity conditions}\label{sec:II.D}
Here, we summarize the logical connection between the commutativity conditions explained above. The following chain of implications holds:
\begin{widetext}
\begin{equation}\label{eq:several_hierarchy}
    \{ \exists \vec{L}^{\rm kk}| S(\vr_{\vec{\theta}}) = 0\}
    \ \stackrel{\rm (I)}{\Rightarrow} \
    \{ \exists \vec{\mathfrak{L}}^\prime| \mathfrak{E}(\vr_{\vec{\theta}}^\prime) = 0\}
    \ \stackrel{\rm (II)}{\Rightarrow} \
    \{ \exists \vec{L}^{\rm kk}| O(\vr_{\vec{\theta}}) = 0\}
    \ \stackrel{\rm (III)}{\Rightarrow} \
    P (\vr_{\vec{\theta}}) = 0
    \ \stackrel{\rm (IV)}{\Rightarrow} \
    W (\vr_{\vec{\theta}}) = 0,
\end{equation}
\end{widetext}
where we denote
\begin{equation}
    \vec{L}^{\rm kk}
    = \{L_i^{\rm kk} \ {\rm in} \ {\rm Eq.}~(\ref{eq:SLDcomponents}) \},
    \ \
    \vec{\mathfrak{L}}^\prime
    = \{ \mathfrak{L}_i^\prime \ {\rm in} \ {\rm Eq.}~(\ref{eq:def_SLD_another_extension}) \}.
\end{equation}
Note that $\mathcal{Q}(\vr_{\vec{\theta}}), W (\vr_{\vec{\theta}}), P (\vr_{\vec{\theta}}),  S(\vr_{\vec{\theta}}), O (\vr_{\vec{\theta}})$, and $\mathfrak{E}(\vr_{\vec{\theta}}^\prime)$ are respectively given in Eqs.~(\ref{eq:mathcalQ}, \ref{eq:def_W}, \ref{eq:def_P}, \ref{eq:def_S}, \ref{eq:def_O}, \ref{eq:def_Sprime}). Here, the implication (III)~follows from $P (\vr_{\vec{\theta}}) = \Pi_{\vr_{\vec{\theta}}} O(\vr_{\vec{\theta}})$, and the implication (IV)~follows from $W(\vr_{\vec{\theta}}) = \tr[\vr_{\vec{\theta}} P(\vr_{\vec{\theta}})]$.

We have six remarks on Eq.~(\ref{eq:several_hierarchy}). For simplicity, we here denote ($\rm N \inv$) as the converse of an implication ($\rm N$) for $\rm N = I, II, III, IV$. First, for general (both pure and mixed) states, the EC condition is both necessary and sufficient for $\mathcal{Q}(\vr_{\vec{\theta}}) = 0$, i.e., 
\begin{equation}
    \mathcal{Q}(\vr_{\vec{\theta}}) = 0
    \Leftrightarrow
    \{ \exists \vec{\mathfrak{L}}^\prime| \mathfrak{E}(\vr_{\vec{\theta}}^\prime) = 0\}.
\end{equation}
Note that the EC condition alone cannot provide criteria for choosing extended SLD operators from the state.

Second, for pure states, i.e., $\vr_{\vec{\theta}} = \ket{\psi_{\vec{\theta}}}\!\bra{\psi_{\vec{\theta}}}$, the WC condition also becomes sufficient for $\mathcal{Q}(\ket{\psi_{\vec{\theta}}}) = 0$~\cite{matsumoto2005geometricalPhD}:
\begin{equation}
    \label{eq:P=W=0_pure}
    \mathcal{Q}(\ket{\psi_{\vec{\theta}}}) = 0
    \Leftrightarrow
    W(\ket{\psi_{\vec{\theta}}}) = 0.
\end{equation}
That is, all the converse implications (I$\inv$), (II$\inv$), (III$\inv$), and (IV$\inv$) hold for pure states. The proof of (I$\inv$) is given in Appendix~\ref{ap:proof_Iinv}, while the others can be seen since $O (\ket{\psi_{\vec{\theta}}}) = P (\ket{\psi_{\vec{\theta}}}) =  W(\ket{\psi_{\vec{\theta}}}) \ket{\psi_{\vec{\theta}}}\! \bra{\psi_{\vec{\theta}}}$ with $L_i^{\rm kk} = 0$. This can be straightforwardly checked by taking the SLD operator in Eq.~(\ref{SLDpurestates}) and noticing that $\braket{\partial_i \psi_{\vec{\theta}}|\psi_{\vec{\theta}}} + \braket{\psi_{\vec{\theta}}|\partial_i\psi_{\vec{\theta}}} = 0$.

Third, for full-rank states, i.e., $\det{(\vr_{\vec{\theta}})} > 0$, the SC condition also becomes necessary for $\mathcal{Q}(\vr_{\vec{\theta}}) = 0$~\cite{nagaoka1987OnFisher}:
\begin{equation}
    \label{eq:S=P=0_fullrank}
    \mathcal{Q}(\vr_{\vec{\theta}}) = 0
    \Leftrightarrow
    S(\vr_{\vec{\theta}}) = 0.
\end{equation}
Notice that the kernel space of $\vr_{\vec{\theta}}$ does not exist in this case, namely $\Pi_{\vr_{\vec{\theta}}} = \eins$, where $\Pi_{\vr_{\vec{\theta}}}$ is the projector on the support space. That is, the converse implications (I$\inv$), (II$\inv$), and (III$\inv$) hold, since $\mathfrak{E}(\vr_{\vec{\theta}}) = S (\vr_{\vec{\theta}}) = O (\vr_{\vec{\theta}}) = P (\vr_{\vec{\theta}})$. On the other hand, (IV$\inv$) does not hold; for details, see Observation~\ref{ob:conversearrow} below.

Fourth, for single-qubit states (either pure or full-rank), according to Eqs.~(\ref{eq:P=W=0_pure}, \ref{eq:S=P=0_fullrank}), both necessary and sufficient conditions for the equivalence $\mathcal{Q}(\vr_{\vec{\theta}}) = 0$ are established. The simplified form of $W(\vr_{\vec{\theta}})$ will be discussed in Eq.~(\ref{eq:siglequbit}). On the other hand, a complete and comprehensive understanding of the equivalence beyond single qubits, such as higher-dimensional systems or bipartite/multipartite systems, remains elusive. 

Fifth, for rank-deficient states, the kernel structure further complicates the analysis of commutativity conditions. In this case, all the converse implications (I$\inv$), (II$\inv$), (III$\inv$), and (IV$\inv$) do not hold. In particular, (III$\inv$) requires an additional condition: $O(\vr_{\vec{\theta}}) = 0$ holds if and only if $P (\vr_{\vec{\theta}}) = 0$ and $\Pi_{\vr_{\vec{\theta}}}^\perp (L_i L_j - L_j L_i) \Pi_{\vr_{\vec{\theta}}} = 0$ hold, where $\Pi_{\vr_{\vec{\theta}}}^\perp = \eins - \Pi_{\vr_{\vec{\theta}}}$ is the projector on the kernel space of $\vr_{\vec{\theta}}$. In Section~\ref{sec:IV.B}, we will systematically analyze this gap and provide explicit counterexamples. Note that there exist models that demonstrate that the converse implications (I$\inv$) and (II$\inv$) do not hold; see Appendix~\ref{ap:Iinv_counterexample} in this manuscript and Example~E in Ref.~\cite{conlon2025role}, respectively. 

Finally, consider a pure state with unitary parameter-encoding: $\ket{\psi_{\vec{\theta}}} = U_{\vec{\theta}} \ket{\psi}$ for a unitary $U_{\vec{\theta}}$ with $\mathcal{H}_i = - i (\partial_i U_{\vec{\theta}}^\dagger) U_{\vec{\theta}} = i U_{\vec{\theta}}^\dagger (\partial_i U_{\vec{\theta}})$ being the Hermitian generator associated with $\theta_i$. From Eq.~(\ref{eq:P=W=0_pure}), we then have:
\begin{equation} \label{eq:relation_pure_QCRB}
    \mathcal{Q}(\ket{\psi_{\vec{\theta}}}) = 0
    \Leftrightarrow
    \braket{\psi|\mathcal{H}_i \mathcal{H}_j - \mathcal{H}_j \mathcal{H}_i|\psi}
    = 0,
    \quad
    \forall i,j.
\end{equation}
This yields a crucial result: \textit{For pure states, $\mathcal{Q}(\ket{\psi_{\vec{\theta}}}) = 0$ holds when all parameter-encoding generators $\mathcal{H}_i$ commute}. One of the main goals of this manuscript is to determine whether this remains true for mixed states as well.

\subsection{Solved and open problems}\label{sec:II.E}
For readers' convenience, let us highlight important solved and open problems in the field and related to this manuscript: 
\begin{itemize}
    \item 
    Problem~(A): Provide general necessary and sufficient conditions for an unspecified single-copy state $\vr_{\vec{\theta}}$ to attain the equivalence $\mathcal{C}_{\rm MI} (\vr_{\vec{\theta}}) = \mathcal{C}_{\rm QCR} (\vr_{\vec{\theta}})$, i.e., equivalently $\mathcal{Q}(\vr_{\vec{\theta}}) = 0$. This problem asks about the \textit{existence} of POVM measurements to achieve the equivalence. This is considered one of the five open problems in quantum information theory (see Problem 3 and discussion in Ref.~\cite{horodecki2022five}).

    \item 
    Problem~(B): Provide general both necessary and sufficient conditions for POVM measurements $\mathsf{E}$ acting on an unspecified single-copy state $\vr_{\vec{\theta}}$ to attain the equivalence $\mathcal{C}_{\rm CCR} (\vr_{\vec{\theta}}, \mathsf{E}) = \mathcal{C}_{\rm QCR} (\vr_{\vec{\theta}})$, i.e., equivalently $F_C(\vr_{\vec{\theta}}, \mathsf{E}) = F_Q(\vr_{\vec{\theta}})$. This problem seeks a \textit{recipe} for constructing optimal POVM measurements that achieve the equivalence.
\end{itemize}

First, solving Problem~(B) can be more involved than solving Problem~(A). For pure states, the WC condition is the solution to Problem~(A) in Ref.~\cite{matsumoto2002new}, and Problem~(B) was solved in Ref.~\cite{pezze2017optimal}. Note that Refs.~\cite{imai2026semiclassical,wang2025tight} recently addressed Problem~(B) more quantitatively by identifying the gap between the CFIM and QFIM. For full-rank states, to our knowledge, it has been reported that both Problems~(A)~and~(B) were solved by Nagaoka (e.g., see Ref.~\cite{nagaoka1987OnFisher} and Theorem~11 in Ref.~\cite{matsumoto2005geometrical}), where the SC condition is the solution to Problem~(A).

For general (rank-deficient) states, complete solutions to both Problems~(A)~and~(B) are believed to remain open. Although the EC condition has addressed Problem~(A) (see Theorem~14 in Ref.~\cite{conlon2022gap} and Appendix~B1 in Ref.~\cite{suzuki2020quantum}), this alone cannot identify criteria for choosing extended SLD operators from a given state. Very recently, Ref.~\cite{yang2026geometric} has addressed Problem~(B) from a geometric perspective. Yet, a comprehensive understanding of these connections remains elusive.

\section{Weak commutativity}\label{sec:III}
In the following, we focus on the case
\begin{equation} \label{eq:unitary}
    \vr_{\vec{\theta}} = U_{\vec{\theta}} \vr U_{\vec{\theta}}^\dagger,
\end{equation}
where $\vr$ is the initial probe state and $U_{\vec{\theta}}$ is a unitary parameter-encoding transformation with Hermitian generators 
\begin{equation} \label{eq:generator}
\mathcal{H}_i = - i (\partial_i U_{\vec{\theta}}^\dagger) U_{\vec{\theta}} = i U_{\vec{\theta}}^\dagger (\partial_i U_{\vec{\theta}}).
\end{equation}
We also denote with $\{\lambda_k, \ket{\psi_k}\}$ the (positive) eigenvalues and the corresponding eigenstates of the state $\vr$ as
\begin{equation} \label{eq:rho}
    \vr = \sum_k \lambda_k \ket{\psi_k} \! \bra{\psi_k},
\end{equation}
and define the projector $\Pi_k = \ket{\psi_k} \! \bra{\psi_k}$.

We are concerned with the question of whether or not the equivalence $\mathcal{Q}(\vr_{\vec{\theta}}) = 0$ holds for mixed quantum states when all parameter-generating Hamiltonians commute. A straightforward strategy to address this issue is to consider the weak commutativity (WC) condition explained in the previous section as a necessary condition for $\mathcal{Q}(\vr_{\vec{\theta}}) = 0$. Let us first present the general expression of $W(\vr_{\vec{\theta}})$ and then discuss relations between the WC condition and commuting generators. Finally, we provide several counterintuitive examples of the WC condition for commuting Hamiltonians.

\subsection{General expression of weak commutativity}\label{sec:III.A}
Let us first present the following:
\begin{observation} \label{ob:Gformsummary}
    Consider a rank-$r$ quantum state $\vr$ transforming according to Eq.~(\ref{eq:unitary}). Then, it holds that
    \begin{equation} \label{eq:Gformdecomp}
    W(\vr_{\vec{\theta}})
    = \Gamma(\vr_{\vec{\theta}}) + \Delta(\vr_{\vec{\theta}}),
    \end{equation}
    where $W(\vr_{\vec{\theta}})$ is defined in Eq.~(\ref{eq:def_W}), and
    \begin{subequations}
    \begin{align}
        \label{eq:Gamma_rho}
        [\Gamma(\vr_{\vec{\theta}})]_{ij}
        &= 4 \tr [\vr (\mathcal{H}_i \mathcal{H}_j - \mathcal{H}_j \mathcal{H}_i) ],
        \\
        \label{eq:Delta_rho}
        [\Delta (\vr_{\vec{\theta}})]_{ij}
        \!&=\! 4 \sum_{ k < l}^r  \gamma_{kl}
        \tr[ \swap \Pi_{kl} (\mathcal{H}_i \! \otimes \! \mathcal{H}_j -  \mathcal{H}_j \! \otimes \! \mathcal{H}_i ) ].
    \end{align}
    \end{subequations}
    Here, $\gamma_{kl} \equiv - 4 (\lambda_k -\lambda_{l})\lambda_k \lambda_{l}/(\lambda_k +\lambda_{l})^2$, $\swap$ denotes the SWAP operator with $\swap \ket{x} \otimes \ket{y} = \ket{y} \otimes \ket{x}$, and $\Pi_{kl} = \Pi_k \otimes \Pi_l$ with $\Pi_k = \ket{\psi_k} \! \bra{\psi_k}$. We recall $\ket{\psi_k}$ are eigenstates and $\lambda_k$ are eigenvalues of $\vr$ as in Eq.~(\ref{eq:rho}).  
\end{observation}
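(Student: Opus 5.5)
The plan is to reduce everything to the unrotated probe state $\vr$ by unitary covariance, and then compute $W$ directly in the eigenbasis $\{\ket{\psi_k}\}$ of $\vr$, splitting the resulting double sum into a ``trivial'' part, which becomes $\Gamma$, and a correction supported only on pairs of support eigenvectors, which becomes $\Delta$.

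\textbf{Step 1 (covariance).} From Eq.~(\ref{eq:generator}) we have $\partial_i U_{\vec{\theta}} = -iU_{\vec{\theta}}\mathcal{H}_i$ and $\partial_i U_{\vec{\theta}}^\dagger = i\mathcal{H}_i U_{\vec{\theta}}^\dagger$. Hence
\begin{equation*}
\partial_i\vr_{\vec{\theta}} = U_{\vec{\theta}}\,(-i[\mathcal{H}_i,\vr])\,U_{\vec{\theta}}^\dagger .
\end{equation*}
It follows that $L_i = U_{\vec{\theta}}\tilde L_i U_{\vec{\theta}}^\dagger$, where $\tilde L_i$ solves $-i[\mathcal{H}_i,\vr] = \frac12\{\tilde L_i,\vr\}$. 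Since the trace is unitarily invariant, $[W(\vr_{\vec{\theta}})]_{ij} = \tr(\vr[\tilde L_i,\tilde L_j])$.

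\textbf{Step 2 (solve the SLD equation).} Write $h^{(i)}_{kl}=\braket{\psi_k|\mathcal{H}_i|\psi_l}$, with $\lambda_l=0$ allowed for kernel vectors. Taking matrix elements of the SLD equation gives, whenever $\lambda_k+\lambda_l>0$,
\begin{equation*}
(\tilde L_i)_{kl} = c_{kl}\,h^{(i)}_{kl}, \qquad c_{kl} = 2i\,\frac{\lambda_k-\lambda_l}{\lambda_k+\lambda_l}.
\end{equation*}
The kernel--kernel block is arbitrary, but it never enters $W$, as noted after Eq.~(\ref{eq:SLDcomponents}).

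\textbf{Step 3 (expand $W$).} Inserting a resolution of the identity gives
\begin{equation*}
\tr(\vr\tilde L_i\tilde L_j) = \sum_{k\in{\rm supp}}\sum_{l}\lambda_k\,c_{kl}c_{lk}\,h^{(i)}_{kl}h^{(j)}_{lk}.
\end{equation*}
Here $c_{kl}c_{lk}=4(\lambda_k-\lambda_l)^2/(\lambda_k+\lambda_l)^2$. When $l$ lies in the kernel this equals $4$. When $l$ lies in the support, I write it as
\begin{equation*}
4 - \frac{16\lambda_k\lambda_l}{(\lambda_k+\lambda_l)^2}.
\end{equation*}
The constant $4$, summed over all $l$, reassembles into $4\tr(\vr\mathcal{H}_i\mathcal{H}_j)$. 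After antisymmetrizing in $i\leftrightarrow j$, this yields exactly $\Gamma(\vr_{\vec{\theta}})$ of Eq.~(\ref{eq:Gamma_rho}).

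\textbf{Step 4 (the remainder gives $\Delta$).} What is left is
\begin{equation*}
-16\sum_{k,l\le r}\frac{\lambda_k^2\lambda_l}{(\lambda_k+\lambda_l)^2}\,A_{kl}, \qquad A_{kl}=h^{(i)}_{kl}h^{(j)}_{lk}-h^{(j)}_{kl}h^{(i)}_{lk}.
\end{equation*}
The key observation is that $A_{lk}=-A_{kl}$, so $A_{kk}=0$. Pairing the $(k,l)$ and $(l,k)$ terms therefore gives, for each $k<l$, the coefficient
\begin{equation*}
-16\,\frac{\lambda_k\lambda_l(\lambda_k-\lambda_l)}{(\lambda_k+\lambda_l)^2} = 4\gamma_{kl}.
\end{equation*}
Finally, I use $\tr[\swap(A\otimes B)]=\tr(AB)$ to identify
\begin{equation*}
\tr[\swap\,\Pi_{kl}(\mathcal{H}_i\otimes\mathcal{H}_j)] = \tr(\Pi_k\mathcal{H}_i\Pi_l\mathcal{H}_j) = h^{(i)}_{kl}h^{(j)}_{lk},
\end{equation*}
which matches Eq.~(\ref{eq:Delta_rho}).

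The main obstacle I expect is bookkeeping rather than anything conceptual: choosing the split $4-16\lambda_k\lambda_l/(\lambda_k+\lambda_l)^2$ so that the kernel and support contributions merge into a single full trace. I also need to verify carefully the antisymmetry of $A_{kl}$ that produces the $k<l$ form with the stated sign of $\gamma_{kl}$.
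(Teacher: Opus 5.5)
Your proof is correct and follows essentially the same route as the paper. Both write $\mathcal{L}_i=U_{\vec{\theta}}^\dagger L_i U_{\vec{\theta}}$ in the eigenbasis of $\vr$, use $(\lambda_k-\lambda_l)^2=(\lambda_k+\lambda_l)^2-4\lambda_k\lambda_l$ to peel off a trivial part that reassembles into $\Gamma$, and obtain $\Delta$ by pairing $(k,l)$ with $(l,k)$ and applying the SWAP trick. The only difference is cosmetic ordering: the paper first antisymmetrizes to get the weight $(\lambda_k-\lambda_l)^3/(\lambda_k+\lambda_l)^2$ and then splits it, whereas you split the weight inside $\tr(\vr\mathcal{L}_i\mathcal{L}_j)$ before antisymmetrizing in $i\leftrightarrow j$.
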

\begin{proof}
    We begin by rewriting Eq.~(\ref{eq:def_W}) as $[W(\vr_{\vec{\theta}})]_{ij} = \tr [\vr (\mathcal{L}_i \mathcal{L}_j - \mathcal{L}_i \mathcal{L}_j) ]$ with $\mathcal{L}_i \equiv U_{\vec{\theta}}^\dagger L_i U_{\vec{\theta}}$. For a rank-$r$ state $\vr$ in the Hilbert space with dimension $D$, we have
    \begin{equation} \label{eq:mathcal_L_i}
    \mathcal{L}_i
    = 2 i 
    \sum_{\substack{k, l = 1
    \\ \lambda_k + \lambda_l > 0}}^{D}
    \frac{\lambda_{k} - \lambda_{l}}{\lambda_k + \lambda_l}
    \mathfrak{h}_{kl}^{(i)}
    \ket{\psi_k} \! \bra{\psi_l},
    \end{equation} 
    where the sum runs over $k,l$ such that $\lambda_k + \lambda_l > 0$ and
    \begin{equation}
        \mathfrak{h}_{kl}^{(i)} \equiv \braket{\psi_k|\mathcal{H}_i|\psi_l}.
    \end{equation}
    Here, without loss of generality, we can set $L_i^{\rm kk}$ to zero, as discussed in Section~\ref{sec:II.C}. Inserting Eq.~(\ref{eq:mathcal_L_i}) into $\tr(\vr \mathcal{L}_i \mathcal{L}_j)$ leads to
    \begin{equation}
        \tr(\vr \mathcal{L}_i \mathcal{L}_j)
        = 4 \sum_{\substack{k, l = 1 \\ \lambda_k + \lambda_l > 0}}^D
        \frac{(\lambda_k - \lambda_{l})^2}{(\lambda_k +\lambda_{l})^2}
        \mathfrak{h}_{kl}^{(i)} \mathfrak{h}_{lk}^{(j)}
        \lambda_k,
    \end{equation}
    where we used $(\mathfrak{h}_{kl}^{(j)})^* = \mathfrak{h}_{lk}^{(j)}$. Then,
    \begin{equation} \label{eq:GQform_middle}
        [W(\vr_{\vec{\theta}})]_{ij}
        = 4 \sum_{\substack{k, l = 1\\ \lambda_k + \lambda_l > 0}}^D
        \frac{(\lambda_k - \lambda_{l})^3}{(\lambda_k +\lambda_{l})^2}
        \mathfrak{h}_{kl}^{(i)} \mathfrak{h}_{lk}^{(j)}.
    \end{equation}
    Now we apply the identity, $(\lambda_k - \lambda_{l})^3 = (\lambda_k + \lambda_{l})^2 (\lambda_k - \lambda_{l}) - 4(\lambda_k - \lambda_{l})\lambda_k \lambda_{l}$, which allows to rewrite Eq.~(\ref{eq:GQform_middle}) as
    \begin{equation}
        [W(\vr_{\vec{\theta}})]_{ij} =
        4 \sum_{\substack{k, l = 1\\ \lambda_k + \lambda_l > 0}}^D
        [(\lambda_k - \lambda_l) + \gamma_{kl}]
        \mathfrak{h}_{kl}^{(i)} \mathfrak{h}_{lk}^{(j)}.
    \end{equation}
    Clearly, in the second term, the sum runs over indices $k,l$ such that $\lambda_k \neq 0$, $\lambda_l \neq 0$, and $k \neq l$, due to the form $\gamma_{kl}$.
    
    The term $\Gamma(\vr_{\vec{\theta}})$ in Eq.~(\ref{eq:Gformdecomp}) can be obtained by noticing that $\sum_{k, l = 1}^D \lambda_k  \mathfrak{h}_{kl}^{(i)} \mathfrak{h}_{lk}^{(j)} = \tr(\vr \mathcal{H}_i \mathcal{H}_j)$, where we used $\sum_{l=1}^D \ket{\psi_l}\! \bra{\psi_l} = \eins$. Also, the term $\Delta(\vr_{\vec{\theta}})$ in Eq.~(\ref{eq:Gformdecomp}) can be obtained from the following:
    \begin{align} \nonumber
        \nonumber
        \mathfrak{h}_{kl}^{(i)} \mathfrak{h}_{lk}^{(j)} 
        &= \braket{\psi_k|\mathcal{H}_i|\psi_l} \!
        \braket{\psi_l|\mathcal{H}_j|\psi_k}
        \\
        \nonumber
        &= \braket{\psi_k \otimes \psi_l|\mathcal{H}_i \otimes \mathcal{H}_j |\psi_l \otimes \psi_k}
        \\
        \nonumber
        &= \tr[
        \ket{\psi_l \otimes \psi_k} \! \bra{\psi_k \otimes \psi_l}
        \mathcal{H}_i \otimes \mathcal{H}_j]
        \\
        \nonumber
        &= \tr[
        \swap \ket{\psi_k \otimes \psi_l} \! \bra{\psi_k \otimes \psi_l}
        \mathcal{H}_i \otimes \mathcal{H}_j]
        \\
        &= \tr[\swap \Pi_{kl} \mathcal{H}_i  \otimes  \mathcal{H}_j].
    \end{align}
    Using $\mathfrak{h}_{lk}^{(i)} \mathfrak{h}_{kl}^{(j)} = \mathfrak{h}_{kl}^{(j)} \mathfrak{h}_{lk}^{(i)}$ and $\gamma_{lk} = - \gamma_{kl}$ yields Eq.~(\ref{eq:Gformdecomp}).
\end{proof}

Here we have three main remarks on Observation~\ref{ob:Gformsummary}. First, the form of Eq.~(\ref{eq:Gformdecomp}) allows us to calculate $W(\vr_{\vec{\theta}})$ only using contributions from the support of the initial state $\vr$, excluding contributions from its kernel subspace. Thus, this calculation is analytically tractable and provides a useful method for determining whether a given state $\vr_{\vec{\theta}} = U_{\vec{\theta}} \vr U_{\vec{\theta}}^\dagger$ satisfies the WC condition. 

Second, the expression for $\Delta (\vr_{\vec{\theta}})$ in Eq.~(\ref{eq:Delta_rho}) explicitly involves the SWAP operator $\swap$. As we will see later in Observation~\ref{ob:local_ham_weak_always}, this form is useful to prove that the WC condition holds in certain examples. In those cases, we will employ a key identity, called the SWAP trick: $\tr[\swap (A \otimes B)] = \tr(AB)$ for operators $A$ and $B$.

Third, we notice that $\Delta(\vr_{\vec{\theta}})$ depends on the eigenbasis of the state $\vr$. In the following, we present two basis-independent representations of $W(\vr_{\vec{\theta}})$ using $\mathcal{H}_i \otimes \mathcal{H}_j - \mathcal{H}_j \otimes \mathcal{H}_i$, as shown in Appendix~\ref{ap:proof_another_Gform}. One is given by
\begin{equation} \label{eq:unitaryGform_basisInd}
    [W(\vr_{\vec{\theta}})]_{ij}
    = 4 \tr[
    \swap \, \mathcal{N}_\vr  \mathcal{T}_\vr
    (\mathcal{H}_i \otimes \mathcal{H}_j - \mathcal{H}_j \otimes \mathcal{H}_i) ],
\end{equation}
where $\mathcal{T}_\vr \equiv \int_{0}^{\infty} ds \int_{0}^{\infty} dt \, e^{- \vr (s + t)} \otimes e^{- \vr (s + t)}$ and $\mathcal{N}_\vr \equiv \vr^3 \otimes \eins + \vr \otimes \vr^2$. The other is given by
\begin{equation} \label{eq:unitaryGform_basisInd_hierarchical}
    [W(\vr_{\vec{\theta}})]_{ij}
    = 4 \lim_{\alpha \to \infty} [\mathcal{G}_\alpha(\vr_{\vec{\theta}})]_{ij},
\end{equation}
where $[\mathcal{G}_\alpha (\vr_{\vec{\theta}})]_{ij} = \sum_{a,b=0}^{2 \alpha} \xi_{ab}^{(\alpha)} \tr[\swap (\vr^a \otimes \vr^b) (\mathcal{H}_i  \otimes \mathcal{H}_j - \mathcal{H}_j  \otimes  \mathcal{H}_i)]$, with some coefficients $\xi_{ab}^{(\alpha)}$ and integers $a,b$. In the derivation of Eqs.~(\ref{eq:unitaryGform_basisInd}, \ref{eq:unitaryGform_basisInd_hierarchical}), we employed a technique similar to that considered in Ref.~\cite{rath2021quantum}.

\subsubsection{Pure states}\label{Sec.Rank1}
For a pure state, we have that $\Delta (\ket{\psi_{\vec{\theta}}})=0$ in Eq.~(\ref{eq:Gformdecomp}) by construction.
We thus recover that $W (\ket{\psi_{\vec{\theta}}}) = 0$ if and only if $\Gamma(\ket{\psi_{\vec{\theta}}}) = 0$, i.e., if and only if $\braket{\psi|\mathcal{H}_i \mathcal{H}_j - \mathcal{H}_j \mathcal{H}_i|\psi} = 0$ for all $i,j$, as discussed in Eqs.~(\ref{eq:P=W=0_pure}, \ref{eq:relation_pure_QCRB}).

\subsubsection{Rank-two states}
Consider a general rank-two state $\vr = \lambda \Pi_1 + (1-\lambda) \Pi_2$, where $\Pi_k = \ket{\psi_k} \! \bra{\psi_k}$ with $\lambda_1 = \lambda$ and $\lambda_2 = 1-\lambda$. Then, we can simplify $\Delta(\vr_{\vec{\theta}})$ in Eq.~(\ref{eq:Delta_rho}) as
\begin{equation} \label{eq:rank-two_formula}
    [\Delta(\vr_{\vec{\theta}})]_{ij}
    = 4\gamma_{12} \tr[ \swap \Pi_{12} (\mathcal{H}_i  \otimes  \mathcal{H}_j -  \mathcal{H}_j  \otimes \mathcal{H}_i ) ],
\end{equation}
where $\gamma_{12} = 4 (1-2\lambda)\lambda (1-\lambda)$ and $\Pi_{12} = \Pi_1 \otimes \Pi_2$. This expression provides a compact and more tractable form for evaluating $W(\vr_{\vec{\theta}})$ in the rank-two case. As we will discuss in Observation~\ref{ob:commutingbutnotweak}, we will apply this result to check whether the WC condition holds in specific examples.

\subsubsection{Full-rank states}
We will consider full-rank states in \textit{Example~2} in the next subsection.

\subsection{Weak commutativity and commuting generators}\label{sec:III.B}
Let us consider the case when the generators $\mathcal{H}_i$ in Eq.~(\ref{eq:generator}) commute:
\begin{equation}
        \mathcal{H}_i \mathcal{H}_j - \mathcal{H}_j \mathcal{H}_i = 0, \quad \forall i,j, 
\end{equation}
This condition leads to $\Gamma(\vr_{\vec{\theta}}) = 0$. As mentioned above, for pure states, this condition implies $W(\ket{\psi_{\vec{\theta}}})=0$ and thus $\mathcal{Q}(\ket{\psi_{\vec{\theta}}}) = 0$.

However, for general mixed states, the situation is more involved: If $\mathcal{H}_i \mathcal{H}_j - \mathcal{H}_j \mathcal{H}_i = 0$, then $\Gamma(\vr_{\vec{\theta}}) = 0$ holds, but $\Delta(\vr_{\vec{\theta}})$ does not necessarily vanish. That is, unlike pure states, commuting generators $\mathcal{H}_i$ do not lead to the WC condition, in general: see Observation~\ref{ob:commutingbutnotweak} for more details. The following two examples provide notable exceptions.

\subsubsection{Example 1}
For single-qubit states $\vr \in \mathbb{C}^2$, we can find
\begin{equation} \label{eq:siglequbit}
    W(\vr_{\vec{\theta}})
    = [2 \tr(\vr^2) - 1] \Gamma(\vr_{\vec{\theta}}).
\end{equation}
This can be directly derived using Eq.~(\ref{eq:rank-two_formula}) and the fact that $\Pi_1 + \Pi_2 = \eins$. If all $\mathcal{H}_i$ commute, then $\Gamma(\vr_{\vec{\theta}}) = 0$, and therefore, $W(\vr_{\vec{\theta}}) = 0$ for any single-qubit (either pure or rank-two mixed) state.

\subsubsection{Example 2}
For a mixed state of the form
\begin{equation}
    \vr_p = p \Pi_\psi + (1-p) \eins/d^N
\end{equation}
where $p \in [0,1]$ and $\Pi_\psi = \ket{\psi} \! \bra{\psi}$ is a $N$-qudit pure state, namely $\vr_p\in (\mathbb{C}^d)^{\otimes N}$. We have
\begin{equation} \label{eq:mixed_whitenoise}
    [W(\vr_{\vec{\theta}})]_{ij}
    = \frac{4 p^3 d^{2 N}}{[p (d^N - 2) + 2]^2}
    \braket{\psi|\mathcal{H}_i \mathcal{H}_j - \mathcal{H}_j \mathcal{H}_i|\psi}.
\end{equation}
This can be derived by $\lambda_1 = p + (1-p)/d^N$ with $\ket{\psi_1} = \ket{\psi}$ and $\lambda_2 = (1-p)/d^N$ (with multiplicity $d^N -1$), and using $\sum_{k=2}^{d^N-1} \ket{\psi_k}\! \bra{\psi_k} = \eins - \Pi_\psi$: $[\Gamma (\vr_{\vec{\theta}})]_{ij} = 4 p \tr[\Pi_\psi (\mathcal{H}_i \mathcal{H}_j - \mathcal{H}_j \mathcal{H}_i) ]$ and $[\Delta (\vr_{\vec{\theta}})]_{ij} = 4 \gamma_{12} \tr[\Pi_\psi (\mathcal{H}_i \mathcal{H}_j - \mathcal{H}_j \mathcal{H}_i) ]$. Clearly, if all $\mathcal{H}_i$ commute, then $W(\vr_{\vec{\theta}}) = 0$ in this case. 

\subsection{Weak commutativity and commuting Hamiltonians}\label{sec:III.C}
Here we focus on the unitary $U_{\vec{\theta}}$ given by
\begin{equation} \label{eq:unitary_specific}
    U_{\vec{\theta}} = e^{-i \sum_{i=1}^m \theta_i H_i},
\end{equation}
where $H_i$ are Hamiltonians. In general, the generator $\mathcal{H}_i$ as in Eq.~(\ref{eq:generator}) is not equal to $H_i$, which can be seen from the Wilcox formula~\cite{wilcox1967exponential}: for an operator $X$,
\begin{equation} \label{eq:wilcoxexponentialformula}
    \partial_i e^X = \int_0^1 ds \, e^{sX} (\partial_i X) e^{(1-s)X}.
\end{equation}
In the special case where all Hamiltonians commute, i.e., $H_i H_j - H_j H_i = 0$ for all $i,j$, the generator reduces to $\mathcal{H}_i = H_i$, and thus $\mathcal{H}_i \mathcal{H}_j - \mathcal{H}_j \mathcal{H}_i = 0$. This setting is particularly relevant in practical applications such as distributed quantum sensing~\cite{humphreys2013quantum,proctor2018multiparameter,ge2018distributed,guo2020distributed, liu2021distributed,malia2022distributed,malitesta2023distributed,kim2024distributed,hong2025quantum, pezze2025distributed,li2026multiparameter,minati2025multiparameter,yan2025scalable}, where parameter-encoding operations are performed locally across spatially separated parties. 

While commutativity of the Hamiltonians in the phase-encoding transformation in Eq.~(\ref{eq:unitary_specific}) implies the WC condition for pure states, it does not generally hold for mixed states. More precisely:
\begin{observation} \label{ob:commutingbutnotweak}    
    Even when the Hamiltonians $\{H_i\}$ commute in Eq.~(\ref{eq:unitary_specific}), there exist rank-deficient quantum states such that the WC condition does not hold, i.e.,
    \begin{equation}
        \forall i,j, \
        H_i H_j - H_j H_i = 0        
        \nRightarrow
        W(\vr_{\vec{\theta}}) = 0.
    \end{equation}
\end{observation}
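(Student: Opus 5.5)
The plan is to prove the statement with an explicit counterexample. I will use Observation~\ref{ob:Gformsummary} specialized to rank two, Eq.~(\ref{eq:rank-two_formula}). Commuting Hamiltonians give $\mathcal{H}_i = H_i$, as noted after Eq.~(\ref{eq:wilcoxexponentialformula}), and therefore $\Gamma(\vr_{\vec{\theta}})=0$. So $W(\vr_{\vec{\theta}})=\Delta(\vr_{\vec{\theta}})$, and it suffices to exhibit a rank-two state with $\Delta\neq 0$.

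By the SWAP identity used in the proof of Observation~\ref{ob:Gformsummary}, the rank-two formula becomes
\begin{equation*}
[\Delta(\vr_{\vec{\theta}})]_{ij} = 4\gamma_{12}\left(\mathfrak{h}^{(i)}_{12}\mathfrak{h}^{(j)}_{21} - \mathfrak{h}^{(j)}_{12}\mathfrak{h}^{(i)}_{21}\right) = 8 i\,\gamma_{12}\,{\rm Im}\!\left(\mathfrak{h}^{(i)}_{12}\mathfrak{h}^{(j)}_{21}\right).
\end{equation*}
We therefore need three ingredients:
\begin{itemize}
\item[(a)] $\lambda\neq 1/2$ with $\lambda\in(0,1)$, so that $\gamma_{12}\neq 0$;
\item[(b)] commuting $H_1,H_2$;
\item[(c)] orthonormal $\ket{\psi_1},\ket{\psi_2}$ for which $\braket{\psi_1|H_1|\psi_2}\braket{\psi_2|H_2|\psi_1}$ has a nonzero imaginary part.
\end{itemize}
A single qubit cannot work: by Example~1, commuting $H_i$ force $W=0$ there. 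I will therefore go to two qubits, where the state is automatically rank-deficient.

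Concretely, take $H_1=\sigma_z\otimes\eins$ and $H_2=\eins\otimes\sigma_z$. These commute, and they represent the local encodings typical of distributed sensing. Set $\ket{\psi_1}=\ket{+}\otimes\ket{+}$ and $\ket{\psi_2}=(\ket{+}\otimes\ket{-}+i\ket{-}\otimes\ket{+})/\sqrt{2}$, which are orthonormal. Since $\sigma_z\ket{+}=\ket{-}$, we get $H_1\ket{\psi_1}=\ket{-}\otimes\ket{+}$ and $H_2\ket{\psi_1}=\ket{+}\otimes\ket{-}$. This gives $\mathfrak{h}^{(1)}_{12}=\braket{-+|\psi_2}=i/\sqrt2$ and $\mathfrak{h}^{(2)}_{21}=\overline{\braket{+-|\psi_2}}=1/\sqrt2$, so the product equals $i/2$. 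Hence $[W(\vr_{\vec{\theta}})]_{12}=4i\gamma_{12}\neq 0$ for $\vr=\lambda\Pi_1+(1-\lambda)\Pi_2$ with any $\lambda\in(0,1)\setminus\{1/2\}$, and the WC condition fails.

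The only delicate step is choosing the pair $\ket{\psi_1},\ket{\psi_2}$ so that the relative phase yields a genuinely complex product. This is what the complex coefficient $i$ in $\ket{\psi_2}$ achieves; with real superpositions the product would be real and $\Delta$ would vanish. The remaining work is routine verification: checking orthonormality, checking that $\lambda=1/2$ (where $\gamma_{12}=0$) is excluded, and confirming the sign conventions for $\gamma_{12}$ in Eq.~(\ref{eq:rank-two_formula}).
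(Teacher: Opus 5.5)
Your proposal is correct and follows the same route as the paper. Commuting Hamiltonians give $\Gamma(\vr_{\vec{\theta}})=0$, so $W(\vr_{\vec{\theta}})=\Delta(\vr_{\vec{\theta}})$ by the rank-two formula of Observation~\ref{ob:Gformsummary}, and the statement follows from an explicit rank-two counterexample. Your two-qubit instance differs from the paper's Examples~3--6 but checks out: it gives $[W(\vr_{\vec{\theta}})]_{12}=4i\gamma_{12}=16i(1-2\lambda)\lambda(1-\lambda)\neq 0$ for $\lambda\in(0,1)\setminus\{1/2\}$.
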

This observation is supported by four explicit examples that we provide below. In particular, we consider rank-two mixed states of the form $\vr = \lambda \ket{\psi_1} \! \bra{\psi_1} + (1-\lambda) \ket{\psi_2} \! \bra{\psi_2}$. In this setting, we notice that $W(\vr_{\vec{\theta}}) = \Delta(\vr_{\vec{\theta}})$, as given in Eq.~(\ref{eq:rank-two_formula}).

\subsubsection{Example 3}
Consider a single-qutrit system $\mathbb{C}^3$ with commuting Hamiltonians given by
\begin{equation}
    H_1
    = \sqrt{\frac{3}{2}}
    \begin{bmatrix}
    0 & -i & 0 \\
    i & 0 & 0 \\
    0 & 0 & 0
    \end{bmatrix},
    \quad
    \label{eq:hamH2qutrit}
    H_2
    = \frac{1}{\sqrt{2}}
    \begin{bmatrix}
    1 & 0 & 0 \\
    0 & 1 & 0 \\
    0 & 0 & -2
    \end{bmatrix}.
\end{equation}
We construct the rank-two state $\vr$, where the eigenstates $\ket{\psi_k}$ for $k=1,2$ are chosen as $\ket{\psi_k} = \sin(\alpha) \sin(\beta_k) \ket{0} + \sin(\alpha) \cos(\beta_k) \ket{1} + \cos(\alpha) \ket{2}$ with $\beta_1 = - \beta_2 = \beta = (1/2) (\pi - \cos^{-1}[\cot^2(\alpha)] )$ chosen to ensure orthogonality: $\braket{\psi_1|\psi_2}=0$. Then we have 
\begin{equation}
[W(\vr_{\vec{\theta}})]_{12} = h_\lambda [1-\cos (4 \alpha)] \sqrt{\cos (2 \alpha + \pi) \csc^4(\alpha)}, 
\end{equation}
where $h_\lambda \equiv -6 i \sqrt{3} (1-\lambda) \lambda (2 \lambda-1)$. One can find values of $\lambda$ and $\alpha$ for which $W(\vr_{\vec{\theta}}) \neq 0$. This result demonstrates that the WC condition can become nontrivial in higher-dimensional systems, even when the Hamiltonians commute, in contrast to the single-qubit case in Eq.~(\ref{eq:siglequbit}).

\subsubsection{Example 4}
Consider a two-qubit system $(\mathbb{C}^2)^{\otimes 2}$ with Hamiltonians: $H_1 = \sigma_x \otimes \eins$ and $H_2 = \eins \otimes \sigma_y$, where $\sigma_i$ denotes the Pauli matrix along the $i$-axis ($i=x,y,z$). Clearly, these Hamiltonians, being local, commute. Let us take the initial state as the separable mixture $\vr = p \ket{00}\! \bra{00} + (1-p) \ket{++}\! \bra{++}$ with $\ket{+} = (1/\sqrt{2})(\ket{0}+\ket{1})$. Then we obtain 
\begin{equation}
    [W(\vr_{\vec{\theta}})]_{12} = -8i (1 - p) p^2,
\end{equation}
with the eigenvalue $\lambda = (1/2) [1 - \sqrt{1 - 3 (1-p) p}]$. Again, we have $W(\vr_{\vec{\theta}}) \neq 0$. This indicates that separable mixed states can violate the WC condition, even when the Hamiltonians are local.

In contrast, consider a separable state \textit{without} classical correlations $\vr_{\vec{\theta}} = \vr_{\vec{\theta}}^A \otimes \vr_{\vec{\theta}}^B$. In this case, one can write $W(\vr_{\vec{\theta}})$ as the additive form, 
\begin{equation}
W(\vr_{\vec{\theta}}) = W(\vr_{\vec{\theta}}^A) + W(\vr_{\vec{\theta}}^B),    
\end{equation}
due to the structure of the SLDs: $L_{i} = L_{i}^A \otimes \eins_B + \eins_A \otimes L_{i}^B$. Since the eigenvalues of $W$ always appear in $\pm$~pairs because of its antisymmetry $W^\trans = -W$, if the subsystem contributions satisfy $W(\vr_{\vec{\theta}}^A) = -W(\vr_{\vec{\theta}}^B)$, then it holds that $W(\vr_{\vec{\theta}}) = 0$. Such a case highlights a fundamental distinction between the WC condition at the level of the whole system and its marginal subsystems: Even when each part individually violates the WC condition, their contributions can cancel in the composite system.

\subsubsection{Example 5}
Consider a three-qubit system $(\mathbb{C}^2)^{\otimes 3}$ with local Hamiltonians: $H_1 = \sigma_z \otimes \eins \otimes \eins$, $H_2 = \eins \otimes \sigma_z \otimes \eins$, and $H_3 = \eins \otimes \eins \otimes \sigma_z$. We choose the eigenstates of the rank-two state $\vr$ as $\ket{\psi_1} = (\ket{001} + \omega \ket{010} + \omega^2 \ket{100})/\sqrt{3}$ and $\ket{\psi_2} = (\ket{001} + \omega^2 \ket{010} + \omega \ket{100})/\sqrt{3}$, where $\omega = e^{2 \pi i /3}$ satisfies $1 + \omega + \omega^2 = 0$. Then we obtain that $[W(\vr_{\vec{\theta}})]_{12} = [W(\vr_{\vec{\theta}})]_{23} = - [W(\vr_{\vec{\theta}})]_{13} = 64i (1-\lambda) \lambda (1-2 \lambda)/(3\sqrt{3})$. This shows that classical correlations in entangled states can also lead to a violation of the WC condition, even when identical Hamiltonians act on distinct subsystems.

\subsubsection{Example 6}
Consider two-qubit systems $(\mathbb{C}^2)^{\otimes 2}$ with Hamiltonians: $H_1 = \sigma_x \otimes \eins$ and $H_2 = \eins \otimes \sigma_x$. Let $\vr_{XY}$ be the two-qubit reduced states of a three-qubit pure state $\ket{\Psi}_{ABC} = (\ket{001} + \omega \ket{010} + \omega^2 \ket{100})/\sqrt{3}$, taken over the subsystems $XY = AB, BC, CA$. For each of these reduced states $\vr_{XY}$, the WC condition does not hold. We stress that this cannot occur if $\ket{\Psi}_{ABC}$ is a pure product state, since all the reduced states of any pure product state remain pure, and thus automatically satisfy the WC condition under commuting Hamiltonians. This example illustrates that, even when the WC condition is satisfied for an entangled whole state (such as $\ket{\Psi}_{ABC}$), the WC condition does not necessarily hold for its reduced subsystems (such as $\vr_{XY}$), which highlights a distinction between whole and marginal behavior under local Hamiltonians.

\subsection{Weak commutativity and two-qudit rank-deficient states}
Here we present another counterintuitive example illustrating the opposite scenario, in which the WC condition always holds. We have already seen such an example as the mixed state with white noise in Eq.~(\ref{eq:mixed_whitenoise}), which is a full-rank state. We have the following:
\begin{observation} \label{ob:local_ham_weak_always}
    For any local Hamiltonian of the form $H_1 = A \otimes \eins$ and $H_2 = \eins \otimes B$ in Eq.~(\ref{eq:unitary_specific}), where $A$ and $B$ are real and Hermitian operators (i.e., $A^\trans = A$ and $B^\trans = B$), there exist two-qudit rank-deficient states such that the WC condition always holds.
\end{observation}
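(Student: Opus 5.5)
Since $H_1=A\otimes\eins$ and $H_2=\eins\otimes B$ act on different tensor factors, they commute. By the discussion following Eq.~(\ref{eq:unitary_specific}), the generators therefore reduce to $\mathcal{H}_i=H_i$, and hence $\Gamma(\vr_{\vec{\theta}})=0$ in Observation~\ref{ob:Gformsummary}. The whole task is then to exhibit a family of rank-deficient two-qudit states $\vr$ for which $\Delta(\vr_{\vec{\theta}})$ in Eq.~(\ref{eq:Delta_rho}) also vanishes. I will use the decomposition $W=\Gamma+\Delta$ directly; it depends only on the support of $\vr$ and involves no choice of $L_i^{\rm kk}$.

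The construction takes every state whose eigenvectors $\ket{\psi_k}$ (for $\lambda_k>0$) can be chosen with real coefficients in the product computational basis. Concretely, I will use $\vr=\sum_{k=1}^{r}\lambda_k\Pi_k$ with real orthonormal $\ket{\psi_k}\in\mathbb{R}^d\otimes\mathbb{R}^d$ and $r<d^2$. Examples are a mixture of two real Bell-type states such as $\ket{\Phi^+}$ and $\ket{\Psi^+}$, or the real product mixtures used in Example~4. I will check that this class includes entangled states, so that the statement is not trivial.

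The key step is to rewrite each term of $\Delta$ as
\begin{equation*}
\tr[\swap\Pi_{kl}(\mathcal{H}_1\otimes\mathcal{H}_2-\mathcal{H}_2\otimes\mathcal{H}_1)]
=\mathfrak{h}^{(1)}_{kl}\mathfrak{h}^{(2)}_{lk}-\mathfrak{h}^{(2)}_{kl}\mathfrak{h}^{(1)}_{lk},
\end{equation*}
using the chain of equalities in the proof of Observation~\ref{ob:Gformsummary}. Because $A^\trans=A$ and $B^\trans=B$ are real symmetric, $H_1$ and $H_2$ are real symmetric matrices in the product basis. With real $\ket{\psi_k}$ this gives $\mathfrak{h}^{(i)}_{kl}=\braket{\psi_k|H_i|\psi_l}\in\mathbb{R}$ and $\mathfrak{h}^{(i)}_{kl}=\mathfrak{h}^{(i)}_{lk}$. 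Each summand therefore equals $\mathfrak{h}^{(1)}_{kl}\mathfrak{h}^{(2)}_{kl}-\mathfrak{h}^{(2)}_{kl}\mathfrak{h}^{(1)}_{kl}=0$, so $\Delta(\vr_{\vec{\theta}})=0$. Equivalently, in SWAP-trick language, for $\ket{\psi_k}=(M_k\otimes\eins)\ket{\Phi^+}$ with real $M_k$, the term reduces to a trace like $\tr(M_k^\trans A M_l B^\trans)$, and transposition invariance kills the antisymmetric combination. I may present this version to connect with the remark after Observation~\ref{ob:Gformsummary}.

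Combining $\Gamma=0$ and $\Delta=0$ gives $W(\vr_{\vec{\theta}})=0$ for all $\vec{\theta}$, for every $\lambda_k$, and for any rank $r<d^2$. No step is genuinely hard. The only point needing care is the claim that a single fixed real basis suffices for both $A\otimes\eins$ and $\eins\otimes B$. This holds because reality of $A$ and $B$ is basis-dependent: "real" means real in a fixed local basis on each factor, and the tensor product of those local bases is the basis used throughout. With that clarified, the observation follows. I will also note that Example~4 fails because $\sigma_y$ is not real, which shows that the reality assumption is essential.
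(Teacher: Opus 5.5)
Your argument is correct, but it takes a more direct route than the paper.

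The paper commits to one specific family, the real generalized Bell-diagonal states. It writes each eigenvector as $(U_A\otimes U_B)\ket{\Psi^+}$ and uses $\ket{\Psi^+}\!\bra{\Psi^+}=\swap^{\trans_A}/d$, partial-transpose identities and the SWAP trick. This reduces each summand of $\Delta$ to a product of two traces. The paper then uses that the Heisenberg--Weyl unitaries and $A$, $B$ are real to show this product is real, and concludes via $\tr[\swap\Pi_{kl}H_2\otimes H_1]=\tr[\swap\Pi_{kl}H_1\otimes H_2]^*$.

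You obtain the same reality in one line from real eigenvectors and real symmetric $H_i$, since each summand equals $2i\,\mathrm{Im}[\mathfrak{h}^{(1)}_{kl}(\mathfrak{h}^{(2)}_{kl})^*]$. What this buys:
\begin{itemize}
\item Generality: your class is every state of rank $r<d^2$ that is real in the product basis where $A$ and $B$ are real, and the paper's states are a special case.
\item Transparency: the mechanism is a complex-conjugation symmetry shared by $\vr$ and the Hamiltonians, not maximal entanglement of the eigenvectors. This matches $W=0$ in Examples~7--9 and its failure in Examples~3--6, where in the given basis a Hamiltonian or the state is not real.
\item Your basis remark also shows the reality hypothesis merely fixes a basis, since any Hermitian $A$, $B$ are real in their own eigenbases. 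The price is a state that depends on $A$, $B$.
\end{itemize}

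What the paper's construction buys is the link to entangled Bell-type states emphasized in the text. For two qubits (Observation~4) it also gives a single fixed state that works for all, not necessarily real, $h_{\vec{u}_A}$, $h_{\vec{u}_B}$ at once. Reality in one basis cannot give this, because $\sigma_x,\sigma_y,\sigma_z$ cannot all be real in the same basis. Getting it would need a different antiunitary symmetry, e.g.\ $\sigma_y\otimes\sigma_y$ composed with complex conjugation.

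One slip, confined to your optional SWAP-trick aside. For $\ket{\psi_k}=(M_k\otimes\eins)\ket{\Psi^+}$ the summand is a product of two traces, $\tfrac{1}{d^2}\tr(M_k^\dagger A M_l)\,\tr(M_l^\dagger M_k B^\trans)$, not a single trace $\tr(M_k^\trans A M_l B^\trans)$. Your main argument does not depend on it.
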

This is shown in the following example. Consider the generalized Bell-diagonal state~\cite{horodecki1996information,chruscinski2010bell,moerland2024bound}, the class of two-qudit mixed states formed by generalized orthonormal Bell states:
\begin{equation} \label{eq:belldiagonalstate}
    \vr_{\rm BD} = \sum_{k} \lambda_k \ket{\psi_k}\! \bra{\psi_k},
    \quad
    \braket{\psi_k|\psi_l} = \delta_{kl},
\end{equation}
where each eigenstate $\ket{\psi_k} \in (\mathbb{C}^d)^{\otimes 2}$ with corresponding eigenvalue $\lambda_k$ is a maximally entangled state, meaning that the reduced state for one particle is maximally mixed. For example, in the qubit case ($d = 2$), the Bell basis consists of the four states:
\begin{subequations}
\begin{align}
    \label{eq:BDS_d=2_v1}
    \!
    \ket{\psi_1} &= \frac{1}{\sqrt{2}} (\ket{00}+\ket{11}),
    \ \
    \ket{\psi_2} = \frac{1}{\sqrt{2}} (\ket{00}-\ket{11}),
    \\
    \label{eq:BDS_d=2_v2}
    \!
    \ket{\psi_3} &= \frac{1}{\sqrt{2}} (\ket{01}+\ket{10}),
    \ \
    \ket{\psi_4} = \frac{1}{\sqrt{2}} (\ket{01}-\ket{10}).
\end{align}
\end{subequations}

Below we prove that the \textit{real} generalized Bell-diagonal state, such that $\vr_{\rm BD}^\trans = \vr_{\rm BD}$, provides an explicit example that satisfies Observation~\ref{ob:local_ham_weak_always}. This result suggests that certain classical correlations in entangled states can always preserve the WC condition, even for an arbitrary real local Hamiltonian:
\begin{proof}
We begin by recalling that for commuting Hamiltonians $H_1$ and $H_2$, it holds that $W(\vr_{\vec{\theta}}) = \Delta(\vr_{\vec{\theta}})$ from Observation~\ref{ob:Gformsummary}. Moreover, $\Delta(\vr_{\vec{\theta}}) = 0$ holds, if the following condition holds:
\begin{equation} \label{eq:genetalizedBelldiagonal}
    \tr[\swap \Pi_{kl}
    (A \otimes \eins \otimes \eins \otimes B - 
    \eins \otimes B \otimes A \otimes \eins )] = 0,
    \forall
    k,l,
\end{equation}
where $\Pi_{kl} = \ket{\psi_k}\! \bra{\psi_k} \otimes \ket{\psi_l}\! \bra{\psi_l}$, and each $\ket{\psi_k}$ is now taken as a maximally entangled state.

Next, we note that for any maximally entangled state $\ket{\psi_k}$, there exists a local unitary $U_A \otimes U_B$ such that $\ket{\psi_k} = (U_A \otimes U_B) \ket{\Psi^+}$, where $\ket{\Psi^+} = (1/\sqrt{d}) \sum_{i=0}^{d-1} \ket{i} \otimes \ket{i}$. For simplicity, we write $\ket{\psi_k} = U_{AB} \ket{\Psi^+}$ and $\ket{\psi_l} = V_{AB} \ket{\Psi^+}$, where $U_{AB} = U_A \otimes U_B$ and $V_{AB} = V_A \otimes V_B$. Using the identity $\ket{\Psi^+}\! \bra{\Psi^+} = (1/d)\swap^{\trans_A}$, where $\swap$ denotes the SWAP operator and $\trans_A$ denotes the partial transpose with respect to the subsystem, we have
\begin{align} \nonumber
    \!\!\!
    &\Pi_{kl} (A \otimes \eins \otimes \eins \otimes B)
    \\
    \!=
    &\frac{1}{d^2}
    [U_{AB} \swap^{\trans_A} U_{AB}^\dagger (A \! \otimes \! \eins)]
    \! \otimes \!
    [V_{AB} \swap^{\trans_A} V_{AB}^\dagger (\eins \! \otimes \! B)].
    \label{eq:trick1}
\end{align}

Applying the SWAP trick $\tr[\swap(X \otimes Y)] = \tr[XY]$ and the relations $\tr(X Y^{\trans_A} Z Y^{\trans_A}) = \tr(X^{\trans_A} Y Z^{\trans_A} Y)$ and $\swap (X \otimes Y) \swap = Y \otimes X$ for operators $X,Y,Z$, we obtain
\begin{align}
    \nonumber
    &\tr \left[\swap \Pi_{kl} (A \otimes \eins \otimes \eins \otimes B) \right]
    \\ \!=\! \ \nonumber
    &\frac{1}{d^2} \tr \left[U_{AB} \swap^{\trans_A} U_{AB}^\dagger (A \otimes \eins)
    V_{AB} \swap^{\trans_A} V_{AB}^\dagger (\eins \otimes B) \right]
    \\ \!=\! \ \nonumber
    &\frac{1}{d^2} \tr \left[
    V_{AB}^\dagger (\eins \otimes B) U_{AB}
    \swap^{\trans_A}
    U_{AB}^\dagger (A \otimes \eins) V_{AB}
    \swap^{\trans_A} \right]
    \\ \!=\! \ \nonumber
    &\frac{1}{d^2} \tr \left\{
    [V_{AB}^\dagger (\eins \otimes B) U_{AB}]^{\trans_A}
    \swap
    [U_{AB}^\dagger (A \otimes \eins) V_{AB}]^{\trans_A}
    \swap \right\}
    \\ \!=\! \ \nonumber
    &\frac{1}{d^2}
    \tr \left[(V_{A}^\dagger \eins U_A)^\trans (U_B^\dagger \eins V_B) \right]
    \tr \left[ (V_{B}^\dagger B U_B) (U_A^\dagger A V_A)^\trans \right]
    \\ \!=\! \ 
    &\frac{1}{d^2} \tr \left[U_A^\trans V_{A}^* U_B^\dagger V_B \right]
    \tr \left[V_{B}^\dagger B U_B V_A^\trans A^\trans U_A^* \right],
    \label{eq:H1H2term}
\end{align}
where we denoted $\trans$, $*$, and $\dagger$ respectively as the transposition, the complex conjugation, and the Hermitian (conjugate transpose). Here we used 
\begin{subequations}
    \begin{align}
        [V_{AB}^\dagger (\eins \otimes B) U_{AB}]^{\trans_A}
        &= (V_{A}^\dagger \eins U_A)^\trans \otimes V_{B}^\dagger B U_B,
        \\
        [U_{AB}^\dagger (A \otimes \eins) V_{AB}]^{\trans_A}
        &= (U_A^\dagger A V_A)^\trans \otimes U_B^\dagger \eins V_B.
    \end{align}
\end{subequations}

To proceed, let us recall the three things that we have considered: (i)~each basis $\ket{\psi_k}$ is orthonormal as in Eq.~(\ref{eq:belldiagonalstate}); (ii)~the generalized Bell-diagonal state $\vr_{\rm BD}$ is real, $\vr_{\rm BD}^\trans = \vr_{\rm BD}^* = \vr_{\rm BD}$; (iii)~$A$ and $B$ are real (i.e., $A^\trans = A$ and $B^\trans = B$) in Observation~\ref{ob:local_ham_weak_always}.

Now, (i) imposes further conditions on the unitaries $U_A, U_B, V_A$, and $V_B$. In fact, one can rewrite $\ket{\psi_k}$ as $\ket{\psi_k} = (U_A \otimes \eins)  \ket{\Psi^+}$, where $U_A$ belongs to a class of Heisenberg-Weyl operators~\cite{chruscinski2010bell,moerland2024bound}: $\{U_{mn} = Z_m X_n\}$, for $Z_m = \sum_j w^{jm} \ket{j}\!\bra{j}$ and $X_n = \sum_j \ket{j-n}\! \bra{j}$ representing phase and shift operations, respectively. Then we can take $U_B = V_B = \eins$. Also, (ii) implies that the above unitary class $\{U_{mn} \}$ is real, i.e., $U_{mn}^* = U_{mn}$. Then we take $U_A^* = U_A$ and $V_A^* = V_A$. Combining these with (iii), we rewrite Eq.~(\ref{eq:H1H2term}) as $\tr \left[\swap \Pi_{kl} (A \otimes \eins \otimes \eins \otimes B) \right] = (1/d^2) \tr \left[U_A^\trans V_{A} \right] \tr \left[B V_A^\trans A U_A \right]$.

Finally, using $\tr[\swap \Pi_{kl} H_2 \otimes H_1] = \tr[\swap \Pi_{kl} H_1 \otimes H_2]^*$, we obtain that $\tr[\swap \Pi_{kl} (A \otimes \eins \otimes \eins \otimes B - \eins \otimes B \otimes A \otimes \eins )] = 0$, where we used $[\tr(X)]^* = \tr(X^\dagger)$ and $\tr(X^\trans) = \tr(X)$ for an operator $X$. According to Eq.~(\ref{eq:genetalizedBelldiagonal}), we can show that the real generalized Bell-diagonal state is an example in Observation~\ref{ob:local_ham_weak_always}.
\end{proof}

In particular, for the two-qubit case, we can provide a stronger statement than Observation~\ref{ob:local_ham_weak_always} by excluding the \textit{real} assumption on Hamiltonians as follows:
\begin{observation} \label{ob:local_ham_weak_always_twoqubit}
    For any local Hamiltonian of the form $H_1 = h_{\vec{u}_A} \otimes \eins$ and $H_2 = \eins \otimes h_{\vec{u}_B}$ in Eq.~(\ref{eq:unitary_specific}), where $h_{\vec{u}} \!=\! \sum_{i=x,y,z} u_i \sigma_i$ for a vector $\vec{u} \!=\! (u_x, u_y,u_z)$, there exist two-qubit rank-deficient states such that the WC condition always holds.
\end{observation}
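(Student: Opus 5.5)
The plan is to use the same family as in Observation~\ref{ob:local_ham_weak_always}: two-qubit Bell-diagonal states $\vr_{\rm BD}$ in Eq.~(\ref{eq:belldiagonalstate}) built from the Bell basis in Eqs.~(\ref{eq:BDS_d=2_v1}, \ref{eq:BDS_d=2_v2}), taken with rank two or three so that they are rank-deficient. The point is that for qubits the reality assumption on $A,B$ is no longer needed, because of the special structure of $\mathfrak{su}(2)\oplus\mathfrak{su}(2)\cong\mathfrak{so}(4)$ in the magic basis.

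\textbf{Step 1 (reduction).} Since $H_1=h_{\vec{u}_A}\otimes\eins$ and $H_2=\eins\otimes h_{\vec{u}_B}$ commute, we have $\mathcal{H}_i=H_i$ and $\Gamma(\vr_{\vec{\theta}})=0$. Observation~\ref{ob:Gformsummary} then gives $W(\vr_{\vec{\theta}})=\Delta(\vr_{\vec{\theta}})$. By the derivation in that proof, each term of $\Delta$ is
\begin{equation*}
\tr[\swap\Pi_{kl}(H_1\otimes H_2-H_2\otimes H_1)]=\mathfrak{h}^{(1)}_{kl}\mathfrak{h}^{(2)}_{lk}-\mathfrak{h}^{(2)}_{kl}\mathfrak{h}^{(1)}_{lk}=2i\,{\rm Im}\big(\mathfrak{h}^{(1)}_{kl}\mathfrak{h}^{(2)}_{lk}\big),
\end{equation*}
where the last equality uses Hermiticity, $\mathfrak{h}_{kl}^{*}=\mathfrak{h}_{lk}$. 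It therefore suffices to show that $\mathfrak{h}^{(1)}_{kl}\mathfrak{h}^{(2)}_{lk}$ is real for all pairs $k\neq l$ of Bell states.

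\textbf{Step 2 (phase invariance and the magic basis).} The product $\mathfrak{h}^{(1)}_{kl}\mathfrak{h}^{(2)}_{lk}$ is unchanged under rephasing $\ket{\psi_k}\to e^{i\phi_k}\ket{\psi_k}$, because the two phases cancel. Hence we may replace the Bell basis by the magic basis
\begin{equation*}
\ket{\psi_1},\quad i\ket{\psi_2},\quad i\ket{\psi_3},\quad \ket{\psi_4}
\end{equation*}
(the phase assignment to be fixed in the calculation). In this basis every local generator $\sigma_a\otimes\eins$ and $\eins\otimes\sigma_b$ is represented by a purely imaginary antisymmetric $4\times4$ matrix. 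This is the standard statement that local unitaries act as $SO(4)$ rotations in the magic basis. By linearity in $\vec{u}_A$ and $\vec{u}_B$, both $\mathfrak{h}^{(1)}_{kl}$ and $\mathfrak{h}^{(2)}_{lk}$ are purely imaginary, so their product is real.

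\textbf{Step 3 (conclusion).} With Step 2, every term of $\Delta(\vr_{\vec{\theta}})$ in Eq.~(\ref{eq:Delta_rho}) vanishes for arbitrary eigenvalues $\lambda_k$ and arbitrary $\vec{u}_A,\vec{u}_B$. Thus $W(\vr_{\vec{\theta}})=0$ for every Bell-diagonal state, including the rank-deficient ones.

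The main obstacle is Step 2, namely pinning down the phases and checking explicitly that all six local Pauli generators are imaginary in the chosen basis. I would do this directly by computing $\braket{\psi_k|\sigma_a\otimes\eins|\psi_l}=\tfrac12\tr(\sigma_k^\dagger\sigma_a\sigma_l)$, using $\ket{\psi_k}=(\sigma_k\otimes\eins)\ket{\Psi^+}$, and similarly for $\eins\otimes\sigma_b$ via the transpose trick $(\eins\otimes X)\ket{\Psi^+}=(X^\trans\otimes\eins)\ket{\Psi^+}$, reading off the phase pattern from there. If that bookkeeping becomes messy, a fallback is to verify directly that $\tr(\sigma_k\sigma_a\sigma_l)\,\tr(\sigma_l\sigma_k\sigma_b^{\trans})$ is real for all Pauli indices, using $\sigma_y^{\trans}=-\sigma_y$ and the multiplication table of the Pauli matrices.
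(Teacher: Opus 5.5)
Your proposal is correct and follows the paper's route. It uses the same two-qubit Bell-diagonal family and the same reduction $W(\vr_{\vec{\theta}})=\Delta(\vr_{\vec{\theta}})$ for commuting local Hamiltonians, so everything hinges on the vanishing of $\tr[\swap\Pi_{kl}(H_1\otimes H_2-H_2\otimes H_1)]=2i\,{\rm Im}(\mathfrak{h}^{(1)}_{kl}\mathfrak{h}^{(2)}_{lk})$ for every pair $k,l$. The paper leaves that vanishing as a direct check, whereas your rephasing to the magic basis $\ket{\psi_1}, i\ket{\psi_2}, i\ket{\psi_3}, \ket{\psi_4}$ gives a structural reason for it: in that basis all six local Pauli generators are indeed purely imaginary, as a short computation confirms. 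Your argument also explains why the reality assumption of Observation~\ref{ob:local_ham_weak_always} is unnecessary for qubits.
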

This can be seen by considering the two-qubit Bell-diagonal state, $\vr_{\rm BD} = \sum_{k=1}^4 \lambda_k \ket{\psi_k}\! \bra{\psi_k}$ for the states $\{\ket{\psi_k}\}$ given in Eqs.~(\ref{eq:BDS_d=2_v1}, \ref{eq:BDS_d=2_v2}). In this case, one can immediately check that $\tr[\swap \Pi_{kl} (h_{\vec{u}_A} \otimes \eins \otimes \eins \otimes h_{\vec{u}_B} - \eins \otimes h_{\vec{u}_B} \otimes h_{\vec{u}_A} \otimes \eins )]$ in Eq.~(\ref{eq:genetalizedBelldiagonal}) vanishes for all $k,l \in [1,4]$. This shows that the two-qubit Bell-diagonal state with any rank obeys the WC condition.

\section{Relation between commutativity conditions}\label{sec:IV}
Here, we first present general expressions relevant to the PC, OC, and SC conditions. Then we examine whether, when a certain commutativity condition within the hierarchy in Eq.~(\ref{eq:several_hierarchy}) holds, the corresponding converse implication also holds for commuting Hamiltonians. We provide several counterintuitive examples to illustrate these points. Below we use the notations in Eqs.~(\ref{eq:unitary}, \ref{eq:generator}, \ref{eq:rho}) and denote $\Pi_\vr = \sum_{k=1}^r \Pi_k$ and $\Pi_{\vr}^\perp = \eins - \Pi_\vr$ as the projectors onto the support and kernel spaces of a rank-$r$ state $\vr$ with $\Pi_k = \ket{\psi_k} \! \bra{\psi_k}$ for $\ket{\psi_k}$ being the eigenstates of $\vr$ with nonzero eigenvalue.

\subsection{General expressions}\label{sec:IV.A}
We note that the PC, OC, and SC conditions as in Eqs.~(\ref{eq:def_P}, \ref{eq:def_S}, \ref{eq:def_O}) are respectively rewritten as $\Pi_{\vr} (\mathcal{L}_i \mathcal{L}_j - \mathcal{L}_j \mathcal{L}_i) \Pi_{\vr} \!=\! 0$, $(\mathcal{L}_i \mathcal{L}_j - \mathcal{L}_j \mathcal{L}_i) \Pi_{\vr} \!=\! 0$, and $\mathcal{L}_i \mathcal{L}_j - \mathcal{L}_j \mathcal{L}_i \!=\! 0$, where $\mathcal{L}_i \equiv U_{\vec{\theta}}^\dagger L_i U_{\vec{\theta}}$ is defined in the proof of Observation~\ref{ob:Gformsummary}. Let us denote $\mathcal{P} (\vr_{\vec{\theta}}), \mathcal{O}(\vr_{\vec{\theta}})$, and $\mathcal{S}(\vr_{\vec{\theta}})$ as matrices with elements defined as
\begin{subequations}
\begin{align}
    [\mathcal{P} (\vr_{\vec{\theta}})]_{ij}
    &\equiv \Pi_{\vr} (\mathcal{L}_i \mathcal{L}_j - \mathcal{L}_j \mathcal{L}_i) \Pi_{\vr},
    \\
    [\mathcal{O}(\vr_{\vec{\theta}})]_{ij}
    &\equiv (\mathcal{L}_i \mathcal{L}_j - \mathcal{L}_j \mathcal{L}_i) \Pi_{\vr},
    \\
    [\mathcal{S}(\vr_{\vec{\theta}})]_{ij}
    &\equiv \mathcal{L}_i \mathcal{L}_j - \mathcal{L}_j \mathcal{L}_i.
\end{align}
\end{subequations}
Note that $\mathcal{P} (\vr_{\vec{\theta}}) \!=\! U_{\vec{\theta}}^\dagger P (\vr_{\vec{\theta}}) U_{\vec{\theta}}$, $\mathcal{O} (\vr_{\vec{\theta}}) \!=\! U_{\vec{\theta}}^\dagger O (\vr_{\vec{\theta}}) U_{\vec{\theta}}$, and $\mathcal{S} (\vr_{\vec{\theta}}) \!=\! U_{\vec{\theta}}^\dagger S (\vr_{\vec{\theta}}) U_{\vec{\theta}}$.

We provide the detailed expressions of these matrices, similarly to Observation~\ref{ob:Gformsummary}:
\begin{observation} \label{ob:GE:UPUandUMUandUSU}
    Consider a rank-$r$ quantum state $\vr$ transforming according to Eq.~(\ref{eq:unitary}), where $\Pi_\vr$ and $\Pi_{\vr}^\perp$ respectively denote the projectors on the support and kernel spaces of $\vr$. Then, it holds that
    \begin{subequations}
    \begin{align}   
        \label{eq:GE:UPU^daggwe}
        \mathcal{P} (\vr_{\vec{\theta}})
        &= \mathcal{I}_{\rm ss}(\vr_{\vec{\theta}}) + \mathcal{I}_{\rm ss}^\prime(\vr_{\vec{\theta}}),
        \\
        \label{eq:GE:UMU^daggwe}
        \mathcal{O} (\vr_{\vec{\theta}})
        &= \mathcal{P} (\vr_{\vec{\theta}})
        + \mathcal{I}_{\rm ks}(\vr_{\vec{\theta}})
        + \mathcal{J}_{\rm ks}(\vr_{\vec{\theta}}),
        \\
        \nonumber
        \mathcal{S} (\vr_{\vec{\theta}})
        &= \mathcal{O} (\vr_{\vec{\theta}})
        + \mathcal{I}_{\rm sk}(\vr_{\vec{\theta}}) 
        + \mathcal{I}_{\rm kk}(\vr_{\vec{\theta}})
        \\
        \label{eq:GE:USU^daggwe}
        &+ \mathcal{J}_{\rm sk}(\vr_{\vec{\theta}}) + \mathcal{J}_{\rm kk} (\vr_{\vec{\theta}}),
    \end{align}
    \end{subequations}
    where each of components represent support-support~(ss), support-kernel~(sk), kernel-support~(ks), and kernel-kernel~(kk) contribution of the state $\vr$ given by
    \begin{subequations}
    \begin{align}
    \label{eq:def_OathcalIss}
    [\mathcal{I}_{\rm ss}(\vr_{\vec{\theta}})]_{ij}
    &= 4 \Pi_\vr (\mathcal{H}_i \mathcal{H}_j - \mathcal{H}_i \mathcal{H}_j) \Pi_\vr,
    \\
    \label{eq:def_OathcalIssprime1}
    [\mathcal{I}_{\rm ss}^\prime(\vr_{\vec{\theta}})]_{ij}
    &=
    -4 \Biggl\{
    \sum_{k=1}^r
    \Pi_k \mathcal{D}_{ij}^k \Pi_k   
    \\ \label{eq:def_OathcalIssprime2}
    &+\! \! \sum_{k \neq l}^r
    \Bigl[
    \Pi_k \mathcal{D}_{ij}^{\vr} \Pi_l
    +
    \frac{ 4\lambda_{k}\lambda_{l}}{(\lambda_k \!+\! \lambda_l)^2}
    \Pi_k \mathcal{D}_{ij}^l \Pi_k
    \Bigr]
    \\
    &- \sum_{k \neq l \neq m}^r
    \eta_{km} \eta_{lm}
    \Pi_k \mathcal{D}_{ij}^m \Pi_l
    \Biggr\},
    \label{eq:def_OathcalIssprime3}
    \\
    \label{eq:def_OathcalIsk}
    [\mathcal{I}_{\rm sk}(\vr_{\vec{\theta}})]_{ij}
    &= - 4 \sum_{k \neq l}^r \frac{\lambda_{k} - \lambda_{l}}{\lambda_k + \lambda_l}
    \Pi_k \mathcal{D}_{ij}^l \Pi_{\vr}^\perp,
    \\
    \label{eq:def_OathcalIks}
    [\mathcal{I}_{\rm ks}(\vr_{\vec{\theta}})]_{ij}
    &=  - 4 \sum_{k \neq l}^r \frac{\lambda_{k} - \lambda_{l}}{\lambda_k + \lambda_l}
    \Pi_{\vr}^\perp \mathcal{D}_{ij}^l \Pi_k,
    \\
    \label{eq:def_OathcalIkk}
    [\mathcal{I}_{\rm kk}(\vr_{\vec{\theta}})]_{ij}
    &= 4 \Pi_{\vr}^\perp \mathcal{D}_{ij}^{\vr} \Pi_{\vr}^\perp,
    \\
    \label{eq:def_OathcalJsk}
    [\mathcal{J}_{\rm sk}(\vr_{\vec{\theta}})]_{ij}
    &= 2i \Pi_\vr \mathcal{H}_i \mathcal{L}_j^{{\rm kk}},
    \\
    \label{eq:def_OathcalJks}
    [\mathcal{J}_{\rm ks}(\vr_{\vec{\theta}})]_{ij}
    &= -2i \mathcal{L}_i^{{\rm kk}} \mathcal{H}_j \Pi_\vr,
    \\
    \label{eq:def_OathcalJkk}
    [\mathcal{J}_{\rm kk} (\vr_{\vec{\theta}})]_{ij}
    &=\mathcal{L}_i^{{\rm kk}} \mathcal{L}_j^{{\rm kk}} - \mathcal{L}_j^{{\rm kk}} \mathcal{L}_i^{{\rm kk}}.
    \end{align}
    \end{subequations}
    Here, $\eta_{kl} = (\lambda_{k} - \lambda_{l})/(\lambda_{k} + \lambda_{l})$, $\mathcal{D}_{ij}^{\vr} \equiv \mathcal{H}_i \Pi_\vr \mathcal{H}_j - \mathcal{H}_j \Pi_\vr \mathcal{H}_i$, and $\mathcal{D}_{ij}^k \equiv \mathcal{H}_i \Pi_k \mathcal{H}_j - \mathcal{H}_j \Pi_k \mathcal{H}_i$ with $\mathcal{D}_{ij}^\vr = \sum_{k=1}^r \mathcal{D}_{ij}^k$. The operator $\mathcal{L}_i^{{\rm kk}}$ acts within the kernel space of $\vr$ in Eq.~(\ref{eq:SLDcomponents}) via $\mathcal{L}_i = U_{\vec{\theta}}^\dagger L_i U_{\vec{\theta}}$, which is not uniquely determined by the definition of the SLD operator.
\end{observation}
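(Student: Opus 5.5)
We work directly with $\mathcal{L}_i = U_{\vec{\theta}}^\dagger L_i U_{\vec{\theta}}$, exactly as in the proof of Observation~\ref{ob:Gformsummary}. Since $\partial_i \vr_{\vec{\theta}} = -i U_{\vec{\theta}}[\mathcal{H}_i,\vr]U_{\vec{\theta}}^\dagger$, the rotated SLD must satisfy $\frac{1}{2}(\mathcal{L}_i\vr+\vr\mathcal{L}_i) = -i[\mathcal{H}_i,\vr]$. This time, however, we keep the free kernel block and write
\begin{equation*}
\mathcal{L}_i = \mathcal{L}_i^{\rm ss} + \mathcal{L}_i^{\rm sk} + \mathcal{L}_i^{\rm ks} + \mathcal{L}_i^{{\rm kk}}.
\end{equation*}
The blocks are obtained from the eigenbasis solution in Eq.~(\ref{eq:mathcal_L_i}). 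For the support block,
\begin{equation*}
\mathcal{L}_i^{\rm ss} = 2i\sum_{k,l\le r}\eta_{kl}\,\Pi_k\mathcal{H}_i\Pi_l .
\end{equation*}
When exactly one of $\lambda_k,\lambda_l$ vanishes the coefficient is $\pm 1$, which gives
\begin{equation*}
\mathcal{L}_i^{\rm sk} = 2i\,\Pi_\vr\mathcal{H}_i\Pi_\vr^\perp, \qquad \mathcal{L}_i^{\rm ks} = -2i\,\Pi_\vr^\perp\mathcal{H}_i\Pi_\vr .
\end{equation*}
A convenient compact form is $\mathcal{L}_i^{\rm ss} = 2i\sum_{k,l}(\eta_{kl}-1)\Pi_k\mathcal{H}_i\Pi_l + 2i(\Pi_\vr\mathcal{H}_i\Pi_\vr)$. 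This lets us pull out the plain term $\Pi_\vr\mathcal{H}_i\cdots$ that will produce $\mathcal{I}_{\rm ss}$.

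Next we expand the commutator $\mathcal{L}_i\mathcal{L}_j-\mathcal{L}_j\mathcal{L}_i$ and sort it by sandwiching with $\Pi_\vr$ and $\Pi_\vr^\perp$ on each side. The four sandwiches give, in order, $\mathcal{P}$, the ks part of $\mathcal{O}$, and the sk and kk parts of $\mathcal{S}$. This reproduces the nesting of Eqs.~(\ref{eq:GE:UPU^daggwe})--(\ref{eq:GE:USU^daggwe}) automatically, since $\mathcal{O}=\mathcal{S}\Pi_\vr$ and $\mathcal{P}=\Pi_\vr\mathcal{O}$.

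We then handle each block separately.

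\emph{ks block.} The terms containing $\mathcal{L}^{{\rm kk}}$ give $\mathcal{L}_i^{{\rm kk}}\mathcal{L}_j^{\rm ks}-(i\leftrightarrow j)$. Using the formula for $\mathcal{L}_j^{\rm ks}$, this is $\mathcal{J}_{\rm ks}$ up to an antisymmetrization convention, and we will check that convention against Eq.~(\ref{eq:def_OathcalJks}). The remaining piece is $\mathcal{L}_i^{\rm ks}\mathcal{L}_j^{\rm ss}-(i\leftrightarrow j)$, i.e. $4\sum_{l,k}\eta_{lk}\Pi_\vr^\perp\mathcal{H}_i\Pi_l\mathcal{H}_j\Pi_k-(i\leftrightarrow j)$. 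Pairing it with its $i\leftrightarrow j$ partner, the diagonal $k=l$ terms drop out because $\eta_{kk}=0$. The off-diagonal terms then combine into $\mathcal{D}^l_{ij}$, yielding $\mathcal{I}_{\rm ks}$.

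\emph{sk and kk blocks.} The sk block is the mirror image of the ks block and gives $\mathcal{I}_{\rm sk}$ and $\mathcal{J}_{\rm sk}$. In the kk block, $\mathcal{L}^{\rm ks}_i\mathcal{L}^{\rm sk}_j$ gives $4\Pi^\perp_\vr\mathcal{H}_i\Pi_\vr\mathcal{H}_j\Pi^\perp_\vr$, whose antisymmetrization is $\mathcal{I}_{\rm kk}$. The pure-kernel product gives $\mathcal{J}_{\rm kk}$.

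\emph{ss block (main obstacle).} This block consists of
\begin{equation*}
\Pi_\vr\mathcal{L}_i^{\rm ss}\mathcal{L}_j^{\rm ss}\Pi_\vr + \Pi_\vr\mathcal{L}_i^{\rm sk}\mathcal{L}_j^{\rm ks}\Pi_\vr - (i\leftrightarrow j).
\end{equation*}
The second product contributes $4\Pi_\vr\mathcal{H}_i\Pi_\vr^\perp\mathcal{H}_j\Pi_\vr$. The first is $-4\sum_{k,l,m}\eta_{km}\eta_{ml}\Pi_k\mathcal{H}_i\Pi_m\mathcal{H}_j\Pi_l$. We insert $\Pi_\vr^\perp=\eins-\Pi_\vr$, so that the second product becomes $4\Pi_\vr\mathcal{H}_i\mathcal{H}_j\Pi_\vr - 4\Pi_\vr\mathcal{H}_i\Pi_\vr\mathcal{H}_j\Pi_\vr$. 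The first of these two pieces is $\mathcal{I}_{\rm ss}$. The second must be merged with the triple sum and reorganized using $\eta_{mk}=-\eta_{km}$, $\eta_{kk}=0$ and $1-\eta_{kl}^2=4\lambda_k\lambda_l/(\lambda_k+\lambda_l)^2$. To do this we split the triple sum into the index patterns $k=l\ne m$, $k\ne l$ with $m\in\{k,l\}$, and $k,l,m$ all distinct:
\begin{itemize}
\item[] The $k=l$ terms give the $\Pi_k\mathcal{D}^k\Pi_k$ and $\frac{4\lambda_k\lambda_l}{(\lambda_k+\lambda_l)^2}\Pi_k\mathcal{D}^l\Pi_k$ contributions.
\item[] In the $k\ne l$ terms with $m\in\{k,l\}$, $\eta_{kk}=0$ kills the $\eta$ products, leaving only the $-4\Pi_k\mathcal{D}^\vr\Pi_l$ piece.
\item[] The fully distinct terms give the $\eta_{km}\eta_{lm}$ sum, with a sign flip from $\eta_{ml}=-\eta_{lm}$.
\end{itemize}
Getting the signs and the exact coefficient $1-\eta^2$ right in this bookkeeping is the step we expect to be delicate. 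We will cross-check the result by taking its expectation value in $\vr$ and confirming that it reproduces Observation~\ref{ob:Gformsummary}.
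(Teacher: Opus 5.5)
Your proposal is correct and is essentially the paper's own computation: expand $\mathcal{L}_i\mathcal{L}_j$ in the eigenbasis of $\vr$, sort by support/kernel blocks, insert $\Pi_\vr^\perp=\eins-\Pi_\vr$, and use $\eta_{kk}=0$ and $1-\eta_{kl}^2=4\lambda_k\lambda_l/(\lambda_k+\lambda_l)^2$. Your block-wise bookkeeping, including the ss block, checks out; on the convention you flagged, your antisymmetrized result $-2i(\mathcal{L}_i^{\rm kk}\mathcal{H}_j-\mathcal{L}_j^{\rm kk}\mathcal{H}_i)\Pi_\vr$ and its sk mirror $2i\Pi_\vr(\mathcal{H}_i\mathcal{L}_j^{\rm kk}-\mathcal{H}_j\mathcal{L}_i^{\rm kk})$ are the correct ones, since the stated $\mathcal{J}_{\rm ks}$ and $\mathcal{J}_{\rm sk}$ omit the $(i\leftrightarrow j)$ term (and $\mathcal{I}_{\rm ss}$ should read $\mathcal{H}_i\mathcal{H}_j-\mathcal{H}_j\mathcal{H}_i$).
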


\begin{proof}
    The derivation of Eq.~(\ref{eq:GE:USU^daggwe}) is given in Appendix~\ref{ap:GE:USU^daggwe}. We notice that Eqs.~(\ref{eq:GE:UPU^daggwe}, \ref{eq:GE:UMU^daggwe}) directly follow from Eq.~(\ref{eq:GE:USU^daggwe}), using the relation $\Pi_{\vr} X \Pi_{\vr} = 0$ for $X = \mathcal{I}_{\rm sk}, \mathcal{I}_{\rm ks}, \mathcal{I}_{\rm kk}, \mathcal{J}_{\rm sk}, \mathcal{J}_{\rm ks}, \mathcal{J}_{\rm kk}$ and $Y \Pi_{\vr} = 0$ for $Y = \mathcal{I}_{\rm sk}, \mathcal{I}_{\rm kk}, \mathcal{J}_{\rm sk}, \mathcal{J}_{\rm kk}$.
\end{proof}

Here we have three remarks on Observation~\ref{ob:GE:UPUandUMUandUSU}. First, when $\mathcal{H}_i \mathcal{H}_j - \mathcal{H}_j \mathcal{H}_i = 0$ for all $i,j$, then $\mathcal{I}_{\rm ss}(\vr_{\vec{\theta}}) = 0$ holds, but the other terms do not vanish in general. Second, similarly to $W(\vr_{\vec{\theta}})$ in Eq.~(\ref{eq:Gformdecomp}), the calculation of $\mathcal{P} (\vr_{\vec{\theta}})$ only requires contributions from the support space of $\vr$. Also many terms in $\mathcal{O} (\vr_{\vec{\theta}})$ and $\mathcal{S} (\vr_{\vec{\theta}})$ only contain support contributions without knowledge from the kernel since $\Pi_{\vr}^\perp = \eins - \Pi_\vr$, but the terms $\mathcal{J}_{\rm sk}(\vr_{\vec{\theta}}), \mathcal{J}_{\rm ks}(\vr_{\vec{\theta}}),$ and $\mathcal{J}_{\rm kk}(\vr_{\vec{\theta}})$ cannot be uniquely determined due to the presence of $\mathcal{L}_i^{{\rm kk}}$. This identifies explicit differences between these commutativity conditions, as discussed in Section~\ref{subsubsec:remarks}.

Third, $\mathcal{I}_{\rm ss}^\prime(\vr_{\vec{\theta}})$ allows us to reveal a distinction between the WC and PC conditions. To see this, let us recall that $W(\vr_{\vec{\theta}}) = \Gamma(\vr_{\vec{\theta}}) + \Delta(\vr_{\vec{\theta}})$ presented in Observation~\ref{ob:Gformsummary}. Using the relation $W(\vr_{\vec{\theta}}) = \tr[\vr \mathcal{P}(\vr_{\vec{\theta}})]$, we find that 
\begin{equation}
    \Gamma(\vr_{\vec{\theta}}) = \tr[\vr \mathcal{I}_{\rm ss}(\vr_{\vec{\theta}})],
    \quad
    \Delta(\vr_{\vec{\theta}}) = \tr[\vr \mathcal{I}_{\rm ss}^\prime(\vr_{\vec{\theta}})].
\end{equation}
Since $\tr[\vr \Pi_k \mathcal{D}_{ij}^k \Pi_k]  = \tr[\vr \Pi_k \mathcal{D}_{ij}^m \Pi_l] = \tr[\vr \Pi_k \mathcal{D}_{\vr} \Pi_l] = 0$ holds for $k \neq l$, the terms in Eqs.~(\ref{eq:def_OathcalIssprime1}, \ref{eq:def_OathcalIssprime2}) vanish in $\Delta(\vr_{\vec{\theta}})$, while the term in Eq.~(\ref{eq:def_OathcalIssprime3}) appear in $\Delta(\vr_{\vec{\theta}})$. In other words, the terms in Eqs.~(\ref{eq:def_OathcalIssprime1}, \ref{eq:def_OathcalIssprime2}) can identify a difference between $W(\vr_{\vec{\theta}})$ and $\mathcal{P}(\vr_{\vec{\theta}})$.

\subsection{Converses of commutativity chain of Eq.~(\ref{eq:several_hierarchy})}\label{sec:IV.B}
Here we present the following:
\begin{observation} \label{ob:conversearrow}
     Consider commuting Hamiltonians $\{H_i\}$ in Eq.~(\ref{eq:unitary_specific}).
     (a)~There exist rank-deficient quantum states such that converse implications within the hierarchical chain in Eq.~(\ref{eq:several_hierarchy}) do not hold, i.e.,
     \begin{widetext}
     \begin{subequations}
     \begin{align}
         \label{eq:arrow:WnotP_PnotO}
         &\{ \exists \vec{L}^{\rm kk}| O(\vr_{\vec{\theta}}) = 0\}
         \ \stackrel{\rm (III\inv)}{\nLeftarrow} \
         P (\vr_{\vec{\theta}}) = 0
         \ \stackrel{\rm (IV\inv)}{\nLeftarrow} \
         W (\vr_{\vec{\theta}}) = 0,
         \\
         \label{eq:arrow:OnotS}
         &\{ \exists \vec{L}^{\rm kk}| O(\vr_{\vec{\theta}}) = 0\}
         \ \ \, \nRightarrow \ \ \,
         \{ \exists \vec{L}^{\rm kk}| S(\vr_{\vec{\theta}}) = 0\}.
     \end{align}
     \end{subequations} 
     \end{widetext}
     (b)~There exist full-rank states such that the WC condition holds but $\mathcal{Q}(\vr_{\vec{\theta}})$ in Eq.~(\ref{eq:mathcalQ}) does not vanish, i.e.,
    \begin{equation}
        \label{eq:WCnotimplyQ}
        W (\vr_{\vec{\theta}}) = 0
        \ \nRightarrow \
        \mathcal{Q} (\vr_{\vec{\theta}}) = 0.
    \end{equation}
\end{observation}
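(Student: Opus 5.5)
The statement is existential, so I will prove each non-implication by an explicit model with commuting Hamiltonians, so that $\mathcal{H}_i=H_i$. I will evaluate the relevant conditions with Observations~\ref{ob:Gformsummary} and~\ref{ob:GE:UPUandUMUandUSU}. Because $\mathcal{H}_i\mathcal{H}_j=\mathcal{H}_j\mathcal{H}_i$, we have $\Gamma=\mathcal{I}_{\rm ss}=0$. Hence $W=\Delta$ and $\mathcal{P}=\mathcal{I}^\prime_{\rm ss}$, and only the remaining blocks need to be computed.

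For (IV$\inv$) I would reuse the two-qubit Bell-diagonal states of Observation~\ref{ob:local_ham_weak_always_twoqubit}, taken rank-deficient (rank two or three) with distinct eigenvalues, and local Hamiltonians $H_1=\sigma_x\otimes\eins$, $H_2=\eins\otimes\sigma_x$. Observation~\ref{ob:local_ham_weak_always_twoqubit} already gives $W=0$. It then suffices to exhibit one nonzero matrix element of $\mathcal{I}^\prime_{\rm ss}$. The natural candidates are the diagonal pieces $\Pi_k\mathcal{D}^l_{ij}\Pi_k$ or $\Pi_k\mathcal{D}^\vr_{ij}\Pi_l$, which were noted above to drop out of $\Delta$ but not out of $\mathcal{P}$. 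For $\Pi_k\mathcal{D}^l_{12}\Pi_k$ this requires $\mathrm{Im}(\langle\psi_k|H_1|\psi_l\rangle\langle\psi_l|H_2|\psi_k\rangle)\neq0$ for some $k\neq l$. Real Bell states and $\sigma_x$ give real matrix elements, so this fails. I would therefore switch to $H_2=\eins\otimes\sigma_y$, or to a non-real Bell basis, so that this imaginary part is nonzero.

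For (b) I would take the same Bell-diagonal family but full rank, with generic distinct $\lambda_k$ and local Hamiltonians as in Observation~\ref{ob:local_ham_weak_always_twoqubit}, so that $W=0$. For a full-rank state $\Pi_\vr=\eins$, so $\mathcal{S}=\mathcal{P}=\mathcal{I}^\prime_{\rm ss}$. By Eq.~(\ref{eq:S=P=0_fullrank}) it suffices to show $\mathcal{I}^\prime_{\rm ss}\neq0$, which is the same computation as in the previous step. This yields $\mathcal{Q}\neq0$ while $W=0$.

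For (III$\inv$) and the failure of OC $\Rightarrow$ SC, the difficulty is the quantifier over all kernel choices $\{L_i^{\rm kk}\}$, and I expect this to be the main obstacle. My plan is to use a small system with a one-dimensional kernel: a qutrit with a rank-two $\vr$, in the spirit of Example~3, or a two-qubit rank-three state. Then $\mathcal{L}_i^{\rm kk}=c_i\Pi_\vr^\perp$ with real scalars $c_i$, so $\mathcal{J}_{\rm kk}=0$ and $\mathcal{J}_{\rm ks}=-2ic_i\Pi^\perp_\vr\mathcal{H}_j\Pi_\vr$ is linear in the $c_i$.
\begin{itemize}
\item For (III$\inv$), I tune the eigenvalues and eigenvectors so that $\mathcal{I}^\prime_{\rm ss}=0$. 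Then I show that the kernel–support block $\mathcal{I}_{\rm ks}+\mathcal{J}_{\rm ks}$, antisymmetrized in $i,j$, cannot vanish for any real $(c_1,c_2)$. This is a finite overdetermined linear system whose inconsistency I check directly.
\item For OC $\not\Rightarrow$ SC, I instead choose a model where this linear system has a solution $(c_1,c_2)$, so that $O=0$. The support–kernel block of $\mathcal{S}$ is the adjoint of the kernel–support block, so it vanishes automatically. What remains is the kernel–kernel block $\mathcal{I}_{\rm kk}=4\Pi^\perp_\vr\mathcal{D}^\vr_{ij}\Pi^\perp_\vr$, since $\mathcal{J}_{\rm kk}=0$. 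I pick the model so that $\mathcal{I}_{\rm kk}\neq0$, i.e. $\mathrm{Im}\langle\kappa|H_1\Pi_\vr H_2|\kappa\rangle\neq0$ for the kernel vector $|\kappa\rangle$. This excludes SC for every kernel choice.
\end{itemize}
If a one-dimensional kernel makes these constraints incompatible, I would move to a two-dimensional kernel. There $\mathcal{J}_{\rm kk}$ is a commutator of $2\times2$ Hermitian matrices and can be parametrized explicitly.
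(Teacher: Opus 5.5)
There is a genuine gap in the part $\{\exists\vec{L}^{\rm kk}|O(\vr_{\vec{\theta}})=0\}\nRightarrow\{\exists\vec{L}^{\rm kk}|S(\vr_{\vec{\theta}})=0\}$: the one-dimensional-kernel model you propose cannot exist.

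For commuting generators, $\Pi_\vr^\perp[\mathcal{H}_1,\mathcal{H}_2]\Pi_\vr^\perp=0$. Hence $\mathcal{I}_{\rm kk}=4\Pi_\vr^\perp\mathcal{D}^{\vr}_{12}\Pi_\vr^\perp=-4[K_1,K_2]$, with $K_i\equiv\Pi_\vr^\perp\mathcal{H}_i\Pi_\vr^\perp$. If $\Pi_\vr^\perp=\ket{\kappa}\!\bra{\kappa}$, the $K_i$ are scalar multiples of $\Pi_\vr^\perp$, so $\mathcal{I}_{\rm kk}=0$. Equivalently, $\braket{\kappa|\mathcal{H}_1\Pi_\vr\mathcal{H}_2|\kappa}=\braket{\kappa|\mathcal{H}_1\mathcal{H}_2|\kappa}-\braket{\kappa|\mathcal{H}_1|\kappa}\braket{\kappa|\mathcal{H}_2|\kappa}$ is real, so your condition on its imaginary part can never hold.

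Add $\mathcal{J}_{\rm kk}=0$ and your own adjoint argument for the sk block. Then every kernel choice with $O=0$ already gives $S=0$, so OC and SC coincide whenever the kernel is one-dimensional. This is consistent with $\mathcal{I}_{\rm kk}=0$ in the paper's Examples~7 and~8.

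Everything therefore rests on your fallback to a larger kernel, and there the test ``$\mathcal{I}_{\rm kk}\neq0$'' no longer excludes SC. The choice $\mathcal{L}_i^{\rm kk}=2K_i$ gives $\mathcal{J}_{\rm kk}=-\mathcal{I}_{\rm kk}$. You would have to show that the linear ks constraints imposed by OC and the quadratic kk constraint have no common solution, and the proposal does not address this.

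Do not fall back on the paper's Example~9 either. It has a seven-dimensional kernel and only checks $\mathcal{L}_i^{\rm kk}=0$. Its $\vr$ and $H_1,H_2$ are block diagonal on $\mathrm{span}\{\ket{0},\ket{1}\}^{\otimes 2}\oplus\mathrm{span}\{\ket{02},\ket{12},\ket{20},\ket{21}\}\oplus\mathrm{span}\{\ket{22}\}$, with $\vr$ pure on each block. Commuting pure-state SLDs chosen blockwise (cf.\ Appendix~\ref{ap:proof_Iinv}) show that SC actually holds there. Your (III$\inv$) plan, by contrast, follows the same route as Examples~7--8.

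For (IV$\inv$) and (b), the witness you aim for cannot appear. We have $\braket{\psi_k|\mathcal{D}^l_{12}|\psi_k}=2i\,\mathrm{Im}(\mathfrak{h}^{(1)}_{kl}\mathfrak{h}^{(2)}_{lk})=\tr[\swap\Pi_{kl}(\mathcal{H}_1\otimes\mathcal{H}_2-\mathcal{H}_2\otimes\mathcal{H}_1)]$. This is exactly the quantity Observation~\ref{ob:local_ham_weak_always_twoqubit} shows to vanish for all $k,l$ and all local pairs; it is the mechanism behind $W=0$.

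For $\sigma_x\otimes\eins$ and $\eins\otimes\sigma_y$ the product $\mathfrak{h}^{(1)}_{kl}\mathfrak{h}^{(2)}_{lk}$ is even identically zero, because the two operators pair the Bell vectors differently. Leaving the Bell basis loses that observation. Moreover, in rank two $[W(\vr_{\vec{\theta}})]_{12}\propto\gamma_{12}\braket{\psi_1|\mathcal{D}^2_{12}|\psi_1}$, so a nonzero diagonal piece destroys WC.

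Rank two is in fact useless in this family. Local traceless operators have zero Bell-state expectation, which also kills $\Pi_k\mathcal{D}^{\vr}_{12}\Pi_l$, so $\mathcal{P}=0$.

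Your switch to $\sigma_y$ does work in rank three, but through the off-diagonal and triple terms. For support $\{\ket{\psi_1},\ket{\psi_2},\ket{\psi_3}\}$ with $H_1=\sigma_x\otimes\eins$ and $H_2=\eins\otimes\sigma_y$, one gets $\braket{\psi_1|[\mathcal{P}(\vr_{\vec{\theta}})]_{12}|\psi_2}=-4i(1-\eta_{13}\eta_{23})\neq0$. The paper's Example~8 instead uses $x$--$z$ Hamiltonians, with $f\propto a_xb_z-a_zb_x$.

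For (b), full rank is not ``the same computation''. Setting $\Pi_\vr=\eins$ makes $\mathcal{D}^{\vr}_{12}=[\mathcal{H}_1,\mathcal{H}_2]=0$, so only the triple term survives. With the same Hamiltonians, $\braket{\psi_1|[\mathcal{S}(\vr_{\vec{\theta}})]_{12}|\psi_2}=4i(\eta_{13}\eta_{23}-\eta_{14}\eta_{24})$, which is nonzero for generic spectra. For your initial $\sigma_x\otimes\eins$, $\eins\otimes\sigma_x$ it vanishes.

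Once corrected this way, your (b) is a valid alternative to the paper's Example~10. That example uses one distinguished Bell vector, an otherwise degenerate spectrum, and $\sigma_A=a_x\sigma_x+a_z\sigma_z$ on both sides.
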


Observation~\ref{ob:conversearrow} is supported by explicit examples: (III$\inv$) and (IV$\inv$) in Eq.~(\ref{eq:arrow:WnotP_PnotO}) are shown in Examples~7~and~8; Eq.~(\ref{eq:arrow:OnotS}) is shown in Example~9; Eq.~(\ref{eq:WCnotimplyQ}) is shown in Example~10. There, we demonstrate the above coincidence and distinction between commutativity conditions more explicitly.

To this end, we note that for a rank-two mixed state $\vr = \lambda \ket{\psi_1} \! \bra{\psi_1} + (1-\lambda) \ket{\psi_2} \! \bra{\psi_2}$, the terms $\mathcal{I}_{\rm ss}^\prime(\vr_{\vec{\theta}})$ and $\mathcal{I}_{\rm ks}(\vr_{\vec{\theta}})$ are given by
\begin{align}
    \nonumber
    [\mathcal{I}_{\rm ss}^\prime(\vr_{\vec{\theta}})]_{ij}
    &= 
    -4\Bigl[
    \Pi_1 \mathcal{D}_{ij}^{1} \Pi_1
    + \Pi_2 \mathcal{D}_{ij}^{2} \Pi_2
    \\ \nonumber
    &\quad
    + \Pi_1 \mathcal{D}_{ij}^{\vr} \Pi_2
    + \Pi_2 \mathcal{D}_{ij}^{\vr} \Pi_1
    \\
    &\quad
    + 4 \lambda(1-\lambda)
    (\Pi_1 \mathcal{D}_{ij}^2 \Pi_1 + \Pi_2 \mathcal{D}_{ij}^1 \Pi_2)
    \Bigr],
    \label{eq:ranktwo_ssprime}
    \\
    [\mathcal{I}_{\rm ks}(\vr_{\vec{\theta}})]_{ij}
    &=4 (1 - 2 \lambda ) \Bigl[
    \Pi_{\vr}^\perp \mathcal{D}_{ij}^2 \Pi_1
    - \Pi_{\vr}^\perp \mathcal{D}_{ij}^1 \Pi_2
    \Bigr],
    \label{eq:ranktwo_ks}
\end{align}
where $\Pi_{\vr}^\perp = \eins - \Pi_1 - \Pi_2$ and $\Pi_k = \ket{\psi_k}\! \bra{\psi_k}$ for $k=1,2$.

We remark that the statement of Eq.~(\ref{eq:WCnotimplyQ}) does not extend to the asymptotic limit: there exist \textit{no} full-rank states such that $W(\vr_{\vec{\theta}}^{\otimes \nu}) = 0$ but $\mathcal{Q}(\vr_{\vec{\theta}}^{\otimes \nu}) \neq 0$ in the limit $\nu \to \infty$~\cite{demkowicz2020multi,ragy2016compatibility}. This follows from the fact that the inequality $\mathcal{C}_{\rm MI} \geq \mathcal{C}_{\rm H}$ is always saturated asymptotically by collective measurements on infinitely many copies, i.e., $\mathcal{C}_{\rm MI} (\vr_{\vec{\theta}}^{\otimes \nu}) = \mathcal{C}_{\rm H} (\vr_{\vec{\theta}}^{\otimes \nu})$ for $\nu \to \infty$~\cite{kahn2009local, yamagata2013quantum, yang2019attaining, demkowicz2020multi}.

\subsubsection{Example 7}
Consider the rank-two state $\vr$ based on the eigenstates $\{ \ket{\psi_1}, \ket{\psi_2} \}$ presented at \textit{Example 3} in Section~\ref{sec:III.C}, with commuting Hamiltonians: $H_1 = a h  + a^\prime h^\prime $ and $H_2$ as in Eq.~(\ref{eq:hamH2qutrit}), where $a, a^\prime$ are real coefficients, and $h, h^\prime$ are given by
\begin{equation}
    \label{eq:hamhhprime}
    h
    = \sqrt{\frac{3}{2}}
    \begin{bmatrix}
    0 & 1 & 0 \\
    1 & 0 & 0 \\
    0 & 0 & 0
    \end{bmatrix},
    \quad
    h^\prime
    = \sqrt{\frac{3}{2}}
    \begin{bmatrix}
    1 & 0 & 0 \\
    0 & -1 & 0 \\
    0 & 0 & 0
    \end{bmatrix}.
\end{equation}
In this case, we obtain that $W (\vr_{\vec{\theta}}) = 0$ for any $a,a^\prime$ but
\begin{equation}
    \label{eq:PC_value_C3}
    [\mathcal{P} (\vr_{\vec{\theta}})]_{12}
    = 12 a \begin{bmatrix}
    0 & s & t \\
    -s & 0 & 0 \\
    -t & 0 & 0
    \end{bmatrix},
\end{equation}
where the parameters $s \equiv - \sqrt{3} \cos^2(\alpha) \cos (2 \alpha)$ and $t \equiv \sqrt{6} \sin(\alpha) \cos^3(\alpha) \sqrt{1-\cot^2(\alpha)}$ depend on $\alpha$ via the eigenstates $\{ \ket{\psi_1}, \ket{\psi_2} \}$ of \textit{Example 3}. This result demonstrates~(IV$\inv$) in Observation~\ref{ob:conversearrow}, where the WC condition holds but the PC condition does not hold, even when the Hamiltonians commute. Note that $\mathcal{I}_{\rm kk}(\vr_{\vec{\theta}}) = 0$ holds.

In this case, taking $\alpha = \pi/4$ yields $\mathcal{P} (\vr_{\vec{\theta}}) = 0$ but
\begin{equation}
    [\mathcal{I}_{\rm ks}(\vr_{\vec{\theta}})]_{12}
    = \begin{bmatrix}
    0 & 0 & 0 \\
    0 & 0 & 3 a \sqrt{3} (1-2 \lambda ) \\
    0 & 0 & 0
    \end{bmatrix}.
\end{equation}
Now, since the dimension of the kernel space of $\vr$ is one, we can set $\mathcal{L}_i^{{\rm kk}} = \zeta_i \Pi_{\vr}^\perp$ for complex coefficients $\zeta_i$. Then we obtain that $[\mathcal{J}_{\rm ks}(\vr_{\vec{\theta}})]_{12} = -2i \zeta_1 \Pi_{\vr}^\perp \mathcal{H}_2 \Pi_\vr = 0$. Thus, $\mathcal{O} (\vr_{\vec{\theta}})$ does not vanish for any choice of $\{ \mathcal{L}_i^{{\rm kk}} \}$. This demonstrates~(III$\inv$) in Observation~\ref{ob:conversearrow}, where the PC condition holds but the OC condition does not hold.

Moreover, in the case with $\alpha = \pi/4$ and $a=0$, we can see that $\mathcal{O} (\vr_{\vec{\theta}}) = \mathcal{S} (\vr_{\vec{\theta}}) = 0$. Note that in this case with $a^\prime = 1$, the QFIM in Eq.~(\ref{eq:QFImatrix}) is given by
\begin{equation}
    \label{eq:QFIM_qutrit}
    F_Q(\vr_{\vec{\theta}})
    = \begin{bmatrix}
    3(1-2\lambda)^2 & 3\sqrt{3}(1-2\lambda)^2 \\
    3\sqrt{3}(1-2\lambda)^2 & 9(1-2\lambda)^2
    \end{bmatrix}.
\end{equation}

\subsubsection{Example 8}
Consider the two-qubit rank-three Bell-diagonal state $\vr_{\rm BD} \in (\mathbb{C}^2)^{\otimes 2}$ in Eq.~(\ref{eq:belldiagonalstate}) with local Hamiltonians: $H_1 = \sigma_A \otimes \eins$ and $H_2 = \eins \otimes \sigma_B$, where $\sigma_A = a_x \sigma_x + a_z \sigma_z$ and $\sigma_B = b_x \sigma_x + b_z \sigma_z$. Choosing the eigenstates $\{\ket{\psi_1}, \ket{\psi_2}, \ket{\psi_3} \}$ with the eigenvalues $\{\lambda_1, \lambda_2, 1-\lambda_1-\lambda_2\}$ in Eqs.~(\ref{eq:BDS_d=2_v1}, \ref{eq:BDS_d=2_v2}), we notice that $W(\vr_{\vec{\theta}}) = 0$ as discussed in Observation~\ref{ob:local_ham_weak_always_twoqubit} but
\begin{equation}
    \label{eq:PC_value_C22Rank3}
    [\mathcal{P} (\vr_{\vec{\theta}})]_{12}
    = \begin{bmatrix}
    0 & f & f & 0 \\
    -f & 0 & 0 & f \\
    -f & 0 & 0 & f \\
    0 & -f & -f & 0
    \end{bmatrix},
\end{equation}
where $f \!=\! [4 (1 \!-\! \lambda_1) \lambda_1 (a_x b_z \!-\! a_z b_x)]/[(1 \!-\! \lambda_2) (\lambda_1 \!+\! \lambda_2)]$. This demonstrates~(IV$\inv$) in Observation~\ref{ob:conversearrow}, where the WC condition holds but the PC condition does not hold, even for local Hamiltonians. Note that $\mathcal{I}_{\rm kk}(\vr_{\vec{\theta}}) = 0$ holds.

In this case, taking $a_x b_z = a_z b_x$ yields $\mathcal{P} (\vr_{\vec{\theta}}) = 0$ but 
\begin{equation}
    [\mathcal{I}_{\rm ks}(\vr_{\vec{\theta}})]_{12}
    = \begin{bmatrix}
    0 & 0 & 0 & 0 \\
    g & 0 & 0 & g \\
    -g & 0 & 0 & -g \\
    0 & 0 & 0 & 0
    \end{bmatrix},
\end{equation}
where $g \!=\! [4 \lambda_1 (1 \!-\! \lambda_1 \!-\! 2 \lambda_2) (a_x b_z \!+\! a_z b_x)]/[(1 \!-\! \lambda_2) (\lambda_1 \!+\! \lambda_2)]$. Similarly to \textit{Example~7}, setting $\mathcal{L}_i^{{\rm kk}} = \zeta_i \Pi_{\vr}^\perp$ for complex $\zeta_i$ since the dimension of the kernel space is one, we obtain that
\begin{subequations}
\begin{align}
    [\mathcal{J}_{\rm ks}(\vr_{\vec{\theta}})]_{12}
    &= (-i\zeta_1 )
    \begin{bmatrix}
    0 & 0 & 0 & 0 \\
    - a_x & a_z & a_z & a_x \\
    a_x & -a_z & -a_z & - a_x \\
    0 & 0 & 0 & 0
    \end{bmatrix},
    \\
    [\mathcal{J}_{\rm ks}(\vr_{\vec{\theta}})]_{21}
    &= i\zeta_2
    \begin{bmatrix}
    0 & 0 & 0 & 0 \\
    - b_x & b_z & b_z & b_x \\
    b_x & -b_z & -b_z & -b_x \\
    0 & 0 & 0 & 0
    \end{bmatrix}.
\end{align}
\end{subequations}
Thus, $\mathcal{O} (\vr_{\vec{\theta}})$ does not vanish in nonzero $a_x, a_z, b_x, b_z$ for any choice of $\{ \mathcal{L}_i^{{\rm kk}} \}$. This demonstrates~(III$\inv$) in Observation~\ref{ob:conversearrow}, where the PC condition holds but the OC condition does not hold, even for local Hamiltonians.

Furthermore, in the case with $a_z=b_z=0$, we can see that $\mathcal{O} (\vr_{\vec{\theta}}) = \mathcal{S} (\vr_{\vec{\theta}}) = 0$. Note that in this case, with $a_x=b_x=1$, the QFIM in Eq.~(\ref{eq:QFImatrix}) is given by
\begin{equation}
    \label{eq:QFIM_twoqubit}
    F_Q(\vr_{\vec{\theta}})
    \!=\! \frac{8}{1-\lambda_2}
    \begin{bmatrix}
     c & c \!-\! 2 (1 \!-\! \lambda_2) \lambda_2 \\
    c \!-\! 2 (1 \!-\! \lambda_2) \lambda_2 & c
    \end{bmatrix},
\end{equation}
where $c = 1 - \lambda_2 - 4 \lambda_1 (1-\lambda_1 -\lambda_2)$.

\subsubsection{Example 9}
Consider the two-\textit{qutrit} rank-two Bell-diagonal state $\vr_{\rm BD} \in (\mathbb{C}^3)^{\otimes 2}$ in Eq.~(\ref{eq:belldiagonalstate}) with local Hamiltonians: $H_1 = \eta \otimes \eins$ and $H_2 = \eins \otimes \eta$, where $\eta = a h  + a^\prime h^\prime $ for $h, h^\prime$ as in Eq.~(\ref{eq:hamhhprime}). Taking the eigenstates $\ket{\psi_1} = (1/\sqrt{2}) (\ket{01}+\ket{10})$ and $\ket{\psi_2} = (1/\sqrt{2}) (\ket{12}+\ket{21})$ with $\braket{\psi_1|\psi_2} = 0$, we obtain that $W(\vr_{\vec{\theta}}) = \mathcal{P}(\vr_{\vec{\theta}}) = 0$. Also, choosing $\mathcal{L}_i^{{\rm kk}} = 0$, we have that $\mathcal{O}(\vr_{\vec{\theta}}) = 0$ due to $\mathcal{I}_{\rm ks}(\vr_{\vec{\theta}}) = 0$.

However, $\mathcal{I}_{\rm kk}(\vr_{\vec{\theta}})$ does not vanish for nonzero $a$. This demonstrates Eq.~(\ref{eq:arrow:OnotS}) in Observation~\ref{ob:conversearrow}, where the OC condition holds but the SC condition does not hold. In the case with $a=0$ and $a^\prime = 1$, we have that $\mathcal{S} (\vr_{\vec{\theta}}) = 0$ while the QFIM in Eq.~(\ref{eq:QFImatrix}) is given by
\begin{equation}
    \label{eq:QFIM_twoqutrit}
    F_Q(\vr_{\vec{\theta}})
    \!=\! \frac{3(1+3\lambda)}{2}
    \begin{bmatrix}
     1 & -1\\
     -1 & 1
    \end{bmatrix}.
\end{equation}

\subsubsection{Example 10}
Consider the full-rank state $\vr \in \mathbb{C}^D$ that has an eigenstate $\ket{\psi_{1}} = \ket{\psi}$ with the eigenvalue $\lambda$ and all the other eigenstates $\ket{\psi_{k}}$ for $k \in [2, D]$ with the same eigenvalue $\lambda^* = (1-\lambda)/(D-1)$. Note that this state is the generalization of the mixed state with white noise as in Eq.~(\ref{eq:mixed_whitenoise}). In this case, we have that
\begin{align}
    \nonumber
     [\mathcal{S} (\vr_{\vec{\theta}})]_{ij}
    &= 4 \left( \frac{\lambda - \lambda^*}{\lambda + \lambda^*} \right)^2
    \bigl[
    \Pi_\psi (\mathcal{H}_i \mathcal{H}_j - \mathcal{H}_j \mathcal{H}_i) \Pi_\psi
    \\
    &\quad
    + \mathcal{D}_{ij}^{\psi} - \Pi_\psi \mathcal{D}_{ij}^{\psi} - \mathcal{D}_{ij}^{\psi} \Pi_\psi
    \Bigr],
    \label{eq:fullrank_S}
\end{align}
where $\Pi_\psi = \ket{\psi}\! \bra{\psi}$ and $\mathcal{D}_{ij}^{\psi} \equiv \mathcal{H}_i \Pi_\psi \mathcal{H}_j - \mathcal{H}_j \Pi_\psi \mathcal{H}_i$. This can be derived using the form of $\mathcal{L}_i = 2 i [( \lambda - \lambda^* ) / (\lambda + \lambda^*)] \left( \Pi_\psi \mathcal{H}_i - \mathcal{H}_i \Pi_\psi \right)$ and $\sum_{k=2}^D \ket{\psi_k}\!\bra{\psi_k} = \eins - \Pi_\psi$, since the kernel space of $\vr$ does not exist. Now, using $W (\vr_{\vec{\theta}}) = \tr[\vr \mathcal{S} (\vr_{\vec{\theta}})]$, we obtain
\begin{equation}
    [W (\vr_{\vec{\theta}})]_{ij} 
    = 4 \frac{(\lambda - \lambda^*)^3}{(\lambda + \lambda^*)^2}
    \braket{\psi|\mathcal{H}_i \mathcal{H}_j - \mathcal{H}_j \mathcal{H}_i|\psi},
    \label{eq:fullrank_W}
\end{equation}
where we used that $\tr(\Pi_\psi \mathcal{D}_{ij}^{\psi}) = 0$.

Hence, for commuting Hamiltonians $\{H_i\}$, the WC condition holds, i.e., $W (\vr_{\vec{\theta}}) = 0$, while the SC condition holds if and only if $\mathcal{D}_{ij}^{\psi} - \Pi_\psi \mathcal{D}_{ij}^{\psi} - \mathcal{D}_{ij}^{\psi} \Pi_\psi = 0$ holds. In fact, choosing the Bell state $\ket{\psi} = (1/\sqrt{2})(\ket{01}+\ket{10})$ and local Hamiltonians $H_1 = \sigma_A \otimes \eins$ and $H_2 = \eins \otimes \sigma_A$ for $\sigma_A= a_x \sigma_x + a_z \sigma_z$, we have
\begin{equation}
    \mathcal{D}_{ij}^{\psi} - \Pi_\psi \mathcal{D}_{ij}^{\psi} - \mathcal{D}_{ij}^{\psi} \Pi_\psi
    = a_x a_z
    \begin{bmatrix}
    0 & -1 & 1 & 0 \\
    1& 0 & 0 & 1 \\
    -1 & 0 & 0 & -1 \\
    0 & -1 & 1 & 0
    \end{bmatrix}.
\end{equation}
Let us recall that for full-rank states, the SC condition becomes both necessary and sufficient to have $\mathcal{Q}(\vr_{\vec{\theta}}) = 0$, as discussed in Eq.~(\ref{eq:S=P=0_fullrank}). Thus, we can conclude that this example demonstrates Eq.~(\ref{eq:WCnotimplyQ}) in Observation~\ref{ob:conversearrow}.

\section{Conclusion} \label{sec:V}
In this manuscript, we have studied the hierarchical relations among several commutativity conditions relevant to the saturation of the QCR bound in multiparameter quantum metrology. In particular, we have considered the scenario in which parameter-encoding is implemented via a unitary transformation. In this setting, we showed that our observations and examples identify gaps in the hierarchical chain and demonstrate its irreversibilities, as summarized in Fig.~\ref{fig1}, even when all parameter-encoding generators commute. We note that related questions have been considered in the Bayesian framework, where commuting generators can yield measurement incompatibility~\cite{albarelli2025measurement}.

There are several open avenues for further research. First, although the EC condition is both necessary and sufficient for the saturation of the QCR bound, it does not, on its own, provide any criteria for choosing extended SLD operators from a given state. It would therefore be worthwhile to consider a closed-form expression for this purpose. Second, understanding how the gaps between several commutativity conditions relate to geometric perspectives and provide insights into physical phenomena such as entanglement would be interesting. Finally, although our results provide several counterintuitive examples, they may encourage the development of experimental protocols in noisy distributed quantum sensing.

\section*{Acknowledgments}
We thank
Francesco Albarelli,
Syed Assad,
Lorcan Conlon, and
Jun Suzuki,
for discussions.
S.I. acknowledges support from JST ASPIRE (JPMJAP2339).
J.Y. acknowledges support from Zhejiang University start-up grants, Zhejiang Key Laboratory of R\&D and Application of Cutting-edge Scientific Instruments, and Wallenberg Initiative on Networks and Quantum Information (WINQ).
L.P. acknowledges support from the QuantERA project SQUEIS (Squeezing enhanced inertial sensing), funded by the European Union's Horizon Europe Program and the Agence Nationale de la Recherche (ANR-22-QUA2-0006). This publication has received funding under Horizon Europe programme HORIZON-CL4-2022-QUANTUM-02-SGA via the project 10113690 PASQuanS2.1.

\vspace{0.25cm}
{\footnotesize
\hypertarget{email1}{}\noindent\textsuperscript{*} \href{mailto:satoyaimai@yahoo.co.jp}{satoyaimai@yahoo.co.jp} \\
\hypertarget{email2}{}\textsuperscript{\textdagger} \href{jing.yang.quantum@zju.edu.cn}{jing.yang.quantum@zju.edu.cn} \\
\hypertarget{email3}{}\textsuperscript{\textdaggerdbl} \href{mailto:luca.pezze@ino.cnr.it}{luca.pezze@ino.cnr.it}
}

\clearpage
\newpage
\appendix
\onecolumngrid

\section{Extended commutativity (EC) condition on an enlarged Hilbert space}\label{ap:commutativity_extension}
Here, for clarity and completeness, and for readers' convenience, we provide a self-contained discussion of the extended commutativity (EC) condition presented in Section~\ref{subsubsec:ECcondition}, building on previously known results (see Refs.~\cite{helstrom1969quantum,helstromBOOK1976,fujiwara1995geometricalPhD,fujiwara1995quantum,holevo2011probabilistic,matsumoto2005geometricalPhD,demkowicz2020multi,liu2020quantum,suzuki2020quantum,conlon2022gap}). In Appendix~\ref{ap:notations_results}, we outline several notations and existing results. In Appendix~\ref{ap:proofs}, we prove the inequalities in Eq.~(\ref{eq:QCRBinequality}) and these saturation conditions in Eqs.~(\ref{eq:QCRBinequality_saturation:i}, \ref{eq:QCRBinequality_saturation:ii}). In Appendix~\ref{ap:Derivation_extended_commutativity}, we show that the EC condition is both necessary and sufficient for the saturation of the QCR bound~\cite{suzuki2020quantum,conlon2022gap}.

\subsection{Several notations and existing results} \label{ap:notations_results}
Here, we summarize several notations that are used in the main text and will be used in Appendix~\ref{ap:commutativity_extension}:
\begin{itemize}
    \item 
    $n$ is the number of protocol repetitions, and $\nu$ is the number of identical copies of the state. In the following, for the sake of simplicity, we consider the case $n = \nu = 1$.
    \item
    $\vr_{\vec{\theta}}$ is the parameter-encoded quantum state, and $L_i \equiv L_i (\vr_{\vec{\theta}})$ is its SLD operator defined by Eq.~(\ref{eq:SLD_definition_derivative}).
    \item 
    $\mathsf{E} = \{ E_\omega \}$ is a POVM, and $\omega$ is a measurement outcome obtained with a probability $p(\omega | \vec{\theta}) = \tr(\vr_{\vec{\theta}} E_\omega)$.
    \item 
    $\Tilde{\theta}_i = \Tilde{\theta}_i(\omega)$ is the estimator of the parameter $\theta_i$, and $\braket{\Tilde{\theta}_i} = \sum_{\omega} p (\omega |\vec{\theta}) \Tilde{\theta}_i (\omega)$ is its statistical mean value. In the following, we consider the vector $\Tilde{\vec{\theta}} = \{ \Tilde{\theta}_1, \ldots, \Tilde{\theta}_m \}$ and the locally-unbiased estimator, i.e., $\braket{\Tilde{\theta}_i} = \theta_i$ and $\partial_i \braket{\Tilde{\theta}_j} = \delta_{ij}$.
    \item 
    ${\rm Cov}(\vr_{\vec{\theta}}, \mathsf{E}, \Tilde{\vec{\theta}})$ is the covariance matrix with elements $[{\rm Cov}(\vr_{\vec{\theta}}, \mathsf{E}, \Tilde{\vec{\theta}})]_{ij} = \sum_{\omega} p (\omega | \vec{\theta})  (\theta_i - \Tilde{\theta}_i) (\theta_j - \Tilde{\theta}_j)$.
    \item 
    $\mathfrak{C}(\vr_{\vec{\theta}}, \mathsf{E}, \Tilde{\vec{\theta}})$ is a matrix with elements $[\mathfrak{C}(\vr_{\vec{\theta}}, \mathsf{E}, \Tilde{\vec{\theta}})]_{ij} = (1/2) \tr[\vr_{\vec{\theta}} (\Tilde{E}_i \Tilde{E}_j + \Tilde{E}_j \Tilde{E}_i)]$, where $\Tilde{E}_i \equiv \sum_\omega E_\omega [\tilde{\theta}_i - \theta_i]$.
    \item 
    $F_Q(\vr_{\vec{\theta}})$ is the QFIM with the elements $[F_Q(\vr_{\vec{\theta}})]_{ij} = (1/2) \tr [\vr_{\vec{\theta}} (L_i L_j + L_j L_i) ]$.
    \item 
    $\mathcal{Q}(\vr_{\vec{\theta}}) = \mathcal{C}_{\rm MI} (\vr_{\vec{\theta}}) - \mathcal{C}_{\rm QCR} (\vr_{\vec{\theta}})$, where $\mathcal{C}_{\rm MI} (\vr_{\vec{\theta}}) = \min_{\Tilde{\vec{\theta}}, \mathsf{E}} \tr[M {\rm Cov}(\vr_{\vec{\theta}}, \mathsf{E}, \Tilde{\vec{\theta}})]$ and $\mathcal{C}_{\rm QCR} (\vr_{\vec{\theta}}) = \tr[M F_Q^{-1} (\vr_{\vec{\theta}})]$ for $M$ being an $m \times m$ real, symmetric, and positive-definite matrix.
\end{itemize}

Next, we summarize several existing results previously considered in the literature. The following matrix inequalities are known to hold for all $\vr_{\vec{\theta}}, \mathsf{E}$, and $\Tilde{\vec{\theta}}$~\cite{helstrom1969quantum,helstromBOOK1976,demkowicz2020multi,liu2020quantum}:
\begin{equation}
    \label{eq:QCRBinequality}
    {\rm Cov}(\vr_{\vec{\theta}}, \mathsf{E}, \Tilde{\vec{\theta}})
    \geq 
    \mathfrak{C}(\vr_{\vec{\theta}}, \mathsf{E}, \Tilde{\vec{\theta}})
    \geq
    F_Q^{-1} (\vr_{\vec{\theta}}).
\end{equation}
The equality ${\rm Cov}(\vr_{\vec{\theta}}, \mathsf{E}, \Tilde{\vec{\theta}}) =  F_Q^{-1} (\vr_{\vec{\theta}})$ holds if and only if the following two conditions are satisfied:
\begin{subequations}
\begin{align}
    \label{eq:QCRBinequality_saturation:i}
    {\rm (i)}
    \ \
    &\Tilde{E}_i \vr_{\vec{\theta}} = \sum_j [F_Q^{-1}(\vr_{\vec{\theta}})]_{ij} L_j \vr_{\vec{\theta}},
    \ \ 
    \forall i,
    \\
    \label{eq:QCRBinequality_saturation:ii}
    {\rm (ii)}
    \ \
    &E_\omega [ \Tilde{E}_i - \tilde{e}_i (\omega) \eins ] \vr_{\vec{\theta}} = 0,
    \ \ 
    \forall i, \omega,
\end{align}

\end{subequations}
where
\begin{equation}
    \label{eq:tildeE_iL_ie_i}
    \Tilde{E}_i \equiv \sum_\omega \tilde{e}_i(\omega) E_\omega,
    \quad
    \tilde{e}_i(\omega) \equiv \tilde{\theta}_i(\omega) - \theta_i.
\end{equation}
In Appendix~\ref{ap:proofs}, we will prove the inequalities in Eq.~(\ref{eq:QCRBinequality}) and the saturation conditions in Eqs.~(\ref{eq:QCRBinequality_saturation:i}, \ref{eq:QCRBinequality_saturation:ii}), for clarity and completeness of this manuscript.

We have several remarks. First, taking $\Tilde{E}_i = \sum_j [F_Q^{-1}]_{ij} L_j$ is sufficient to satisfy the condition in Eq.~(\ref{eq:QCRBinequality_saturation:i}) but not necessary in general. For a full-rank state $\vr_{\vec{\theta}}$, the condition in Eq.~(\ref{eq:QCRBinequality_saturation:i}) reduces to $\Tilde{E}_i = \sum_j [F_Q^{-1}]_{ij} L_j$. Second, if a POVM is a projection-valued measure (PVM), i.e., $E_\omega E_{\eta} = \delta_{\omega \eta}E_\omega$ for all $\omega, \eta$, the condition in Eq.~(\ref{eq:QCRBinequality_saturation:ii}) is automatically satisfied. More generally, the condition in Eq.~(\ref{eq:QCRBinequality_saturation:ii}) is satisfied if and only if, on the support of $\vr_{\vec{\theta}}$, each POVM element $E_\omega$ is supported on a simultaneous eigenspace of the operators $\Tilde{E}_i$ with eigenvalue vector $\vec{\tilde{e}}(\omega) = \vec{\tilde{\theta}}(\omega) - \vec{\theta}$. Finally, we note that $\mathcal{Q}(\vr_{\vec{\theta}}) = 0$ holds for any $M$ if and only if there exist a POVM $\mathsf{E}$ and an estimator $\Tilde{\vec{\theta}}$ to achieve the equality ${\rm Cov}(\vr_{\vec{\theta}}, \mathsf{E}, \Tilde{\vec{\theta}}) =  F_Q^{-1} (\vr_{\vec{\theta}})$.

\subsection{Proof of Eqs.~(\ref{eq:QCRBinequality}, \ref{eq:QCRBinequality_saturation:i}, \ref{eq:QCRBinequality_saturation:ii})}\label{ap:proofs}
Here we first prove the inequality
\begin{equation}\label{eq:CgeqF-1}
    \mathfrak{C}(\vr_{\vec{\theta}}, \mathsf{E}, \Tilde{\vec{\theta}})
    \geq F_Q^{-1} (\vr_{\vec{\theta}}),
\end{equation}
and show that this saturation condition is given by Eq.~(\ref{eq:QCRBinequality_saturation:i}). Then, we prove the inequality
\begin{equation}\label{eq:CovgeqC}
    {\rm Cov}(\vr_{\vec{\theta}}, \mathsf{E}, \Tilde{\vec{\theta}})
    \geq \mathfrak{C}(\vr_{\vec{\theta}}, \mathsf{E}, \Tilde{\vec{\theta}}),
\end{equation}
and show that this saturation condition is given by Eq.~(\ref{eq:QCRBinequality_saturation:ii}).
\subsubsection*{Proof of Eqs.~(\ref{eq:QCRBinequality_saturation:i}, \ref{eq:CgeqF-1})}
\begin{proof}
    Let us begin by considering the matrix $\mathcal{B}$ with elements
    \begin{equation}
        \mathcal{B}_{ij} = \frac{1}{2} \tr[ (\vr_{\vec{\theta}} L_i + L_i \vr_{\vec{\theta}}) \Tilde{E}_j],
    \end{equation}
    where $\Tilde{E}_j = \sum_\omega \tilde{e}_j (\omega) E_\omega$ in Eq.~(\ref{eq:tildeE_iL_ie_i}). Using the locally-unbiased estimator condition, we obtain that $\mathcal{B} = \eins$, since
    \begin{subequations}
    \begin{align}
        \delta_{ij}
        &= \partial_i \braket{\Tilde{\theta}_j}
        \\
        &= \partial_i \left[ \sum_{\omega} p (\omega |\vec{\theta}) \Tilde{\theta}_j (\omega) \right]
        \\
        &= \tr\left[ (\partial_i  \vr_{\vec{\theta}}) \sum_{\omega} E_\omega \Tilde{\theta}_j (\omega)  \right] 
        \\
        &= \tr\left\{ (\partial_i  \vr_{\vec{\theta}}) \sum_{\omega} E_\omega [\Tilde{\theta}_j (\omega) -\theta_j]  \right\} 
        \\
        &= \tr\left[ (\partial_i  \vr_{\vec{\theta}}) \Tilde{E}_j \right] 
        \\
        &= \mathcal{B}_{ij},
    \end{align}
    \end{subequations}
    where we used that $p (\omega |\vec{\theta}) = \tr(\vr_{\vec{\theta}} E_\omega)$, $\tr(\partial_i  \vr_{\vec{\theta}}) = 0$, and $\tilde{e}_i(\omega) = \tilde{\theta}_i(\omega) - \theta_i$.

    Next, we consider Hermitian matrices $X = \sum_{i=1}^m x_i L_i$ and $Y = \sum_{i=1}^m y_i \Tilde{E}_i$, where $x_i, y_i$ are elements of vectors $\vec{x}, \vec{y} \in \mathbb{R}^m$, and $m$ is the number of parameters to be estimated. From a straightforward calculation, we have
    \begin{equation}
        \tr (\vr_{\vec{\theta}} X^2) = \vec{x}^\trans F_Q \vec{x},
        \quad
        \tr (\vr_{\vec{\theta}} Y^2) = \vec{y}^\trans \mathfrak{C} \vec{y},
        \quad
        \tr\left(\vr_{\vec{\theta}} \tfrac{XY + YX}{2} \right) = \vec{x}^\trans \mathcal{B} \vec{y} = \vec{x}^\trans \vec{y},
    \end{equation}
    where the dependencies on $\vec{\theta}$, $ \Tilde{\vec{\theta}}$, the state, and the POVM are omitted here and below. Now, let us apply the Cauchy-Schwarz inequality: $\left[ \tr\left( \sigma \tfrac{AB + BA}{2} \right ) \right]^2 \leq \tr(\sigma A^2) \tr(\sigma B^2)$ holds for any positive-semidefinite matrix $\sigma$ and Hermitian matrices $A, B$. Setting $\sigma = \vr_{\vec{\theta}}$, $A = X$, and $B = Y$ directly leads to
    \begin{equation}
        (\vec{x}^\trans \vec{y})^2 \leq (\vec{x}^\trans F_Q \vec{x}) (\vec{y}^\trans \mathfrak{C} \vec{y}),
    \end{equation}
    for any $\vec{x}, \vec{y} \in \mathbb{R}^m$. Hence, choosing $\vec{x} = F_Q^{-1}  \vec{y}$, we obtain the inequality $\mathfrak{C} \geq F_Q^{-1}$ in Eq.~(\ref{eq:CgeqF-1}).

    To prove the saturation condition in Eq.~(\ref{eq:QCRBinequality_saturation:i}), we remark that the above Cauchy-Schwarz inequality is saturated if and only if there exists a real coefficient $\mu$ such that $\tr[\sigma (A - \mu B)^2] = 0$, equivalently, $(A - \mu B) \sigma = 0$, since $\sigma$ is positive-semidefinite. Accordingly, the inequality $\mathfrak{C} \geq F_Q^{-1}$ is saturated if and only if there exists $\mu \in \mathbb{R}$ such that
    \begin{equation} \label{eq:saturation_condition_i_v1}
        \left( \sum_{i=1}^m x_i L_i - \mu \sum_{i=1}^m y_i \Tilde{E}_i \right) \vr_{\vec{\theta}} = 0.
    \end{equation}
    Since we have chosen $\vec{x} = F_Q^{-1}  \vec{y}$, Eq.~(\ref{eq:saturation_condition_i_v1}) is rewritten as
    \begin{equation}
        \Tilde{E}_i \vr_{\vec{\theta}} = \mu \sum_j [F_Q^{-1}]_{ij} L_j \vr_{\vec{\theta}},
        \ \ 
        \forall i,
    \end{equation}
    where $[F_Q]_{ij} = [F_Q]_{ji}$. Finally, we show that $\mu = 1$:
    \begin{subequations}
    \begin{align}
        \delta_{ij}
        &= \mathcal{B}_{ij}
        \\
        &=\frac{1}{2} \tr (L_i \Tilde{E}_j \vr_{\vec{\theta}} + L_i \vr_{\vec{\theta}} \Tilde{E}_j )
        \\
        &= \frac{\mu}{2} \sum_k
        [F_Q^{-1}]_{jk}
        \tr (L_i L_k \vr_{\vec{\theta}} + L_i \vr_{\vec{\theta}} L_k )
        \\
        &= \mu \sum_k
        [F_Q^{-1}]_{jk} [F_Q]_{ik}
        \\
        &= \mu \, \delta_{ij},
    \end{align}
    \end{subequations}
    where we used that $\sum_{j} [F_Q]_{ik} [F_Q^{-1}]_{kj} = \delta_{ij}$ and $[F_Q]_{ij} = [F_Q]_{ji}$. Hence, we complete the proof.
\end{proof}
\subsubsection*{Proof of Eqs.~(\ref{eq:QCRBinequality_saturation:ii}, \ref{eq:CovgeqC})}
\begin{proof}  
    Let us begin by considering the Hermitian matrix $X = \sum_{i} x_i \Tilde{E}_i$ with $\Tilde{E}_i = \sum_\omega \tilde{e}_i(\omega) E_\omega$ in Eq.~(\ref{eq:tildeE_iL_ie_i}), where $\tilde{e}_i(\omega)$ and $x_i$ are respectively elements of real vectors $\vec{\tilde{e}}(\omega)$ and $\vec{x}$. From a straightforward calculation, we have
    \begin{equation}
        \vec{x}^\trans {\rm Cov} \vec{x}
        = \sum_\omega \tr(\vr_{\vec{\theta}} E_\omega) [\vec{x}^\trans \vec{\tilde{e}}(\omega)]^2,
        \quad
        \vec{x}^\trans \mathfrak{C} \vec{x}
        = \tr(\vr_{\vec{\theta}} X^2),
    \end{equation}
    where the dependencies on $\vec{\theta}$, $ \Tilde{\vec{\theta}}$, the state, and the POVM are omitted here and below.
    
    Now, we introduce the Hermitian matrix
    \begin{equation}
        A_\omega \equiv \sum_{i} x_i [\tilde{e}_i(\omega) \eins - \Tilde{E}_i].
    \end{equation}
    Since $X = \sum_\omega [\vec{x}^\trans \vec{\tilde{e}}(\omega)] E_\omega$ and then $A_\omega = [\vec{x}^\trans \vec{\tilde{e}}(\omega)] \eins - X$, we have that $\sum_\omega A_\omega E_\omega A_\omega = \sum_\omega [\vec{x}^\trans \vec{\tilde{e}}(\omega)]^2 E_\omega - X^2$. Using this relation, the difference between ${\rm Cov}$ and $\mathfrak{C}$ is given by
    \begin{equation} \label{eq:Cov-C:gap}
        \vec{x}^\trans ( {\rm Cov} - \mathfrak{C}) \vec{x}
        = \sum_\omega \tr(\vr_{\vec{\theta}} A_\omega E_\omega A_\omega)
        = \sum_\omega \tr(\mathcal{A}_\omega^\dagger \mathcal{A}_\omega),
    \end{equation}
    where we denoted $\mathcal{A}_\omega \equiv \sqrt{E_\omega} A_\omega \sqrt{\vr_{\vec{\theta}}}$. Since $\mathcal{A}_\omega^\dagger \mathcal{A}_\omega$ is positive-semidefinite, Eq.~(\ref{eq:Cov-C:gap}) is also nonnegative. Thus we arrive at the inequality ${\rm Cov} \geq \mathfrak{C}$ in Eq.~(\ref{eq:CovgeqC}).

    To prove the saturation condition in Eq.~(\ref{eq:QCRBinequality_saturation:ii}), we notice that the inequality ${\rm Cov} \geq \mathfrak{C}$ is saturated if and only if $\mathcal{A}_\omega$ vanish for all $\omega$, equivalently, 
    \begin{equation} \label{eq:saturation:ii_v1}
         \sum_{i} x_i
         E_\omega [\tilde{e}_i(\omega) \eins - \Tilde{E}_i] \vr_{\vec{\theta}} = 0,
         \ \ 
         \forall \omega,
    \end{equation}
    where we used the fact that $E_\omega$ and $\vr_{\vec{\theta}}$ are positive-semidefinite. Note that Eq.~(\ref{eq:saturation:ii_v1}) holds for all $x_i$ if and only if the componentwise condition as Eq.~(\ref{eq:QCRBinequality_saturation:ii}) holds. Hence, we complete the proof. 
\end{proof}
\subsection{Proof of necessity and sufficiency for the EC condition}\label{ap:Derivation_extended_commutativity}
Before demonstrating that the EC condition discussed in Section~\ref{subsubsec:ECcondition} is both necessary and sufficient for the saturation of the QCR bound~\cite{matsumoto2005geometricalPhD,suzuki2020quantum,conlon2022gap}, we summarize several notations that are used in the main text and below:
\begin{itemize}
    \item 
    $\mathcal{H}$ is a Hilbert space on which the state $\vr_{\vec{\theta}}$ and the POVM $\mathsf{E} = \{E_\omega\}$ are defined.
    \item
    $\mathcal{H}^\prime$ is an extended Hilbert space with $\mathcal{H}^\prime \supseteq \mathcal{H}$, and $\mathcal{V}$ is an isometry from $\mathcal{H}$ to $\mathcal{H}^\prime$ with $\mathcal{V}^\dagger \mathcal{V} = \eins \in \mathcal{H}$.
    
    \item 
    $\vr_{\vec{\theta}}^\prime = \mathcal{V} \vr_{\vec{\theta}} \mathcal{V}^\dagger$ is the isometric extension of the state $\vr_{\vec{\theta}} \in \mathcal{H}$ to $\mathcal{H}^\prime$.
    
    \item 
    $\Pi_\mathcal{V} \equiv \mathcal{V} \mathcal{V}^\dagger$ is an orthogonal projector (i.e., $\Pi_\mathcal{V}^2 = \Pi_\mathcal{V} = \Pi_\mathcal{V}^\dagger$) onto the subspace $\mathcal{V}(\mathcal{H}) \subseteq \mathcal{H}^\prime$ corresponding to the isometric image of $\mathcal{H}$ into $\mathcal{H}^\prime$. Here, $\mathcal{V}(\mathcal{H}) = \mathcal{V} ({\rm supp}(\vr_{\vec{\theta}}) ) + \mathcal{V} ( {\rm ker}(\vr_{\vec{\theta}}) )$ with $\mathcal{V} ( {\rm supp}(\vr_{\vec{\theta}}) ) = {\rm supp}(\vr_{\vec{\theta}}^\prime)$.
\end{itemize}

To proceed, let us note two facts that will be essential in the following:
\begin{itemize}
    \item 
    \textbf{Naimark's dilation theorem:} (see Ref.~\cite{holevo2011probabilistic,watrous2018theory})
    For every POVM $\mathsf{E} = \{E_\omega \} \in \mathcal{H}$, there exist an extended Hilbert space $\mathcal{H}^\prime$, an isometry $\mathcal{V}$, and a projection-valued measure (PVM) $\mathsf{E}^\prime = \{ E_\omega^\prime \}\in \mathcal{H}^\prime$ with the orthogonality relation $E_\omega^\prime E_\eta^\prime = \delta_{\omega \eta} E_\omega^\prime$ for all $\omega, \eta$, such that $E_\omega = \mathcal{V}^\dagger E_\omega^\prime \mathcal{V}$. This directly implies that
    \begin{equation} \label{eq:commute_E_w'}
        [E_\omega^\prime, E_\eta^\prime] = 0,
        \quad
        \forall \omega, \eta.
    \end{equation}
    Within Naimark's dilation, we notice that
    \begin{equation}
        \label{eq:P'=P}
        \tr_{\mathcal{H}^\prime} (\vr_{\vec{\theta}}^\prime E_\omega^\prime)
        = \tr_{\mathcal{H}^\prime} (\mathcal{V} \vr_{\vec{\theta}} \mathcal{V}^\dagger E_\omega^\prime)
        = \tr_\mathcal{H} (\vr_{\vec{\theta}} \mathcal{V}^\dagger E_\omega^\prime \mathcal{V})
        = \tr_\mathcal{H} (\vr_{\vec{\theta}} E_\omega),
    \end{equation}
    where $\tr_{\mathcal{H}}$ and $\tr_{\mathcal{H}^\prime}$ respectively represent the partial traces over $\mathcal{H}$ and $\mathcal{H}^\prime$.

    \item 
    \textbf{Extended SLD operator:}
    Extending Eq.~(\ref{eq:def_SLD_another}) in the main text to $\mathcal{H}^\prime$, we have that an Hermitian operator $\mathfrak{L}_i^\prime \in \mathcal{H}^\prime$ is called the SLD operator of $\vr_{\vec{\theta}}^\prime$ if and only if it satisfies the condition
    \begin{equation} \label{eq:def_extended_SLD_another}
        \mathfrak{L}_i^\prime \vr_{\vec{\theta}}^\prime = \mathcal{V} L_i \mathcal{V}^\dagger \vr_{\vec{\theta}}^\prime,
    \end{equation}
    corresponding to Eq.~(\ref{eq:H'space}) of the main text. We have that $\mathfrak{L}_i^\prime$ satisfies the bona-fide property $\partial_i \vr_{\vec{\theta}}^\prime = (1/2) (\mathfrak{L}_i^\prime \vr_{\vec{\theta}}^\prime +  \vr_{\vec{\theta}}^\prime \mathfrak{L}_i^\prime)$ if and only if Eq.~(\ref{eq:def_extended_SLD_another}) holds, see Eqs.~(\ref{eq:SLD'extendedwithK'}, \ref{eq:SLD'_block}) for more details. Here we also notice that
    \begin{equation}
        \label{eq:inv_L'L'=LL}
        \tr_{\mathcal{H}^\prime} (\vr_{\vec{\theta}}^\prime \mathfrak{L}_i^\prime \mathfrak{L}_j^\prime)
        = \tr_{\mathcal{H}} (\vr_{\vec{\theta}} L_i L_j).
    \end{equation}
    Then the QFIM remains invariant under the isometric extension of the state $\vr_{\vec{\theta}}$ if and only if Eq.~(\ref{eq:def_extended_SLD_another}) holds, i.e., $F_Q(\vr_{\vec{\theta}}^\prime) = F_Q(\vr_{\vec{\theta}})$.
    \end{itemize}

We remark that Naimark's dilation is not an isometric extension. In fact, $E_\omega^\prime$ is \textit{not} defined as the isometric extension of $E_\omega$, unlike $\vr_{\vec{\theta}}^\prime$ and $L_i^\prime$. That is, one \textit{cannot} write $E_\omega^\prime$ as $E_\omega^\prime = \mathcal{V} E_\omega \mathcal{V}^\dagger$ in general, because the isometry $\mathcal{V}$ is not unitary, i.e., $\mathcal{V} \mathcal{V}^\dagger$ is not the identity.

\subsubsection*{Proof of necessity} 
\begin{proof}
    Here we prove that if $\mathcal{Q}(\vr_{\vec{\theta}}) = 0$ holds, then there exist an isometry $\mathcal{V}$ and a set of SLD operators $\{ \mathfrak{L}_i^\prime \}$ on the extended Hilbert space $\mathcal{H}^\prime$, satisfying $\mathfrak{L}_i^\prime \vr_{\vec{\theta}}^\prime = \mathcal{V} L_i \mathcal{V}^\dagger \vr_{\vec{\theta}}^\prime$ in Eq.~(\ref{eq:def_extended_SLD_another}), such that $[\mathfrak{L}_i^\prime, \mathfrak{L}_j^\prime] = 0$ for all $i,j$. Let us begin by recalling that $E_\omega^\prime \in \mathcal{H}^\prime$ are orthogonal projectors according to Naimark's dilation theorem discussed above. We can thus extend Eq.~(\ref{eq:tildeE_iL_ie_i}) to $\mathcal{H}^\prime$ as follows:
    \begin{equation} \label{eq:def_tildeE_i'}
        \Tilde{E}_i^\prime = \sum_\omega \tilde{e}_i(\omega) E_\omega^\prime \in \mathcal{H}^\prime,
    \end{equation} 
    with $\Tilde{E}_i = \mathcal{V}^\dagger \Tilde{E}_i^\prime \mathcal{V}$ and $E_\omega = \mathcal{V}^\dagger E_\omega^\prime \mathcal{V}$. We notice that $[\Tilde{E}_i^\prime, \Tilde{E}_j^\prime] = 0$, due to Eq.~(\ref{eq:commute_E_w'}).

    Next, we define a Hermitian operator 
    \begin{equation} \label{eq:SLD_ansatz}
        \mathfrak{L}_i^\prime \equiv \sum_{j} [F_Q]_{ij} \Tilde{E}_j^\prime.
    \end{equation}
    We can immediately see that $[\mathfrak{L}_i^\prime, \mathfrak{L}_j^\prime] = 0$, since $[\Tilde{E}_i^\prime, \Tilde{E}_j^\prime] = 0$. In the following, we show that the operator $\mathfrak{L}_i^\prime$ in Eq.~(\ref{eq:SLD_ansatz}) is an SLD operator in the extended space, namely, it satisfies the condition Eq.~(\ref{eq:def_extended_SLD_another}). To this end, let us first note that Eq.~(\ref{eq:SLD_ansatz}) obeys the following relation that we will derive later:
    \begin{equation}
        \label{eq:PiLPirho=Lrho}
        \mathfrak{L}_i^\prime \vr_{\vec{\theta}}^\prime
        = \Pi_{\mathcal{V}} \mathfrak{L}_i^\prime \vr_{\vec{\theta}}^\prime.
    \end{equation}
    
    Inserting Eq.~(\ref{eq:SLD_ansatz}) into Eq.~(\ref{eq:PiLPirho=Lrho}) and using $\vr_{\vec{\theta}}^\prime = \Pi_{\mathcal{V}} \vr_{\vec{\theta}}^\prime$, we obtain
    \begin{subequations}
    \begin{align}
        \mathfrak{L}_i^\prime \vr_{\vec{\theta}}^\prime
        &= \sum_{j} [F_Q]_{ij} \Pi_{\mathcal{V}} \Tilde{E}_j^\prime \Pi_{\mathcal{V}} \vr_{\vec{\theta}}^\prime
        \\
        &= \sum_{j} [F_Q]_{ij} \mathcal{V} (\mathcal{V}^\dagger \Tilde{E}_j^\prime \mathcal{V}) \mathcal{V}^\dagger \vr_{\vec{\theta}}^\prime
        \\
        &= \sum_{j} [F_Q]_{ij} \mathcal{V} \Tilde{E}_j \vr_{\vec{\theta}} \mathcal{V}^\dagger
        \\
        &= \sum_{j,k} [F_Q]_{ij} [F_Q^{-1}]_{jk}\mathcal{V} L_k \vr_{\vec{\theta}} \mathcal{V}^\dagger
        \\
        &= \mathcal{V} L_i \vr_{\vec{\theta}} \mathcal{V}^\dagger
        \\
        &= \mathcal{V} L_i \mathcal{V}^\dagger \vr_{\vec{\theta}}^\prime,
    \end{align}
    \end{subequations}
    where we used the saturation condition in Eq.~(\ref{eq:QCRBinequality_saturation:i}) and $\sum_{j} [F_Q]_{ij} [F_Q^{-1}]_{jk} = \delta_{ik}$.

    Finally, we below show that Eq.~(\ref{eq:SLD_ansatz}) satisfies Eq.~(\ref{eq:PiLPirho=Lrho}), equivalently $\Pi_{\mathcal{V}}^\perp \mathfrak{L}_i^\prime \vr_{\vec{\theta}}^\prime = 0$, where $\Pi_{\mathcal{V}}^\perp = \eins - \Pi_{\mathcal{V}}$ is the projector orthogonal to $\Pi_{\mathcal{V}}$. To show this, it is enough to prove that 
    \begin{equation}
        \label{eq:consition_orthogonal_PErho=0}
        \Pi_{\mathcal{V}}^\perp \Tilde{E}_i^\prime \vr_{\vec{\theta}}^\prime = 0.
    \end{equation}
    Let us begin by rewriting the covariance matrix as
    \begin{subequations}
    \label{eq:Cov_refomulation}
    \begin{align}
        [{\rm Cov}(\vr_{\vec{\theta}}, \mathsf{E}, \Tilde{\vec{\theta}})]_{ij}
        &=  \sum_{\omega} \tr(\vr_{\vec{\theta}}^\prime E_\omega^\prime)
        \tilde{e}_i(\omega) \tilde{e}_j(\omega)
        \\
        &=  \sum_{\omega,\eta} \tr(\vr_{\vec{\theta}}^\prime E_\omega^\prime E_\eta^\prime)
        \tilde{e}_i(\omega) \tilde{e}_j(\eta)
        \\
        &= \tr(\vr_{\vec{\theta}}^\prime \Tilde{E}_i^\prime \Tilde{E}_j^\prime)
        \\
        \label{eq:Cov_decomposite}
        &= \tr(\vr_{\vec{\theta}}^\prime \Tilde{E}_i^\prime \Pi_{\mathcal{V}} \Tilde{E}_j^\prime )
        + \tr(\vr_{\vec{\theta}}^\prime \Tilde{E}_i^\prime  \Pi_{\mathcal{V}}^\perp \Tilde{E}_j^\prime ),
    \end{align}
    \end{subequations}
    where we used Eq.~(\ref{eq:P'=P}), $E_\omega^\prime E_\eta^\prime = \delta_{\omega \eta} E_\omega^\prime$, and Eq.~(\ref{eq:def_tildeE_i'}).
    
    Now we rewrite the first term in Eq.~(\ref{eq:Cov_decomposite}) as
    \begin{subequations}
    \label{eq:QFIM_diagonal_refor}
    \begin{align}
        \tr(\vr_{\vec{\theta}}^\prime \Tilde{E}_i^\prime  \Pi_{\mathcal{V}} \Tilde{E}_j^\prime )
        &= \tr(\vr_{\vec{\theta}}^\prime \Pi_{\mathcal{V}} \Tilde{E}_i^\prime  \Pi_{\mathcal{V}} \Tilde{E}_j^\prime  \Pi_{\mathcal{V}})
        \\
        &= \tr[\mathcal{V}^\dagger \vr_{\vec{\theta}}^\prime \mathcal{V} (\mathcal{V}^\dagger \Tilde{E}_i^\prime  \mathcal{V}) (\mathcal{V}^\dagger \Tilde{E}_j^\prime  \mathcal{V})]
        \\
        &= \tr(\vr_{\vec{\theta}} \Tilde{E}_i \Tilde{E}_j)
        \\
        &= \sum_{k,l} [F_Q^{-1}]_{ik} [F_Q^{-1}]_{jl} \tr(\vr_{\vec{\theta}} L_k L_l)
        \\
        &= \sum_{k,l} [F_Q^{-1}]_{ik} [F_Q^{-1}]_{jl} [F_Q]_{kl}
        \\
        &= [F_Q^{-1}]_{ij}.
    \end{align}
    \end{subequations}
    Here we used $\vr_{\vec{\theta}}^\prime = \Pi_{\mathcal{V}} \vr_{\vec{\theta}}^\prime \Pi_{\mathcal{V}}$ and the saturation condition in Eq.~(\ref{eq:QCRBinequality_saturation:i}) together with the trace cyclicity, and employed the fact that $\tr(\vr_{\vec{\theta}}^\prime \Tilde{E}_i^\prime  \Pi_{\mathcal{V}} \Tilde{E}_j^\prime ) = \tr(\vr_{\vec{\theta}}^\prime \Tilde{E}_j^\prime  \Pi_{\mathcal{V}} \Tilde{E}_i^\prime )$ due to the symmetry of the covariance matrix. Also, the second term in Eq.~(\ref{eq:Cov_decomposite}) are given by
    \begin{equation}
        \label{eq:anothergap_diagonal_refor}
        \tr(\vr_{\vec{\theta}}^\prime \Tilde{E}_i^\prime  \Pi_{\mathcal{V}}^\perp \Tilde{E}_j^\prime )
        = \tr\left[ \left(\Pi_{\mathcal{V}}^\perp \Tilde{E}_i^\prime \sqrt{\vr_{\vec{\theta}}^\prime} \right)^\dagger \left(\Pi_{\mathcal{V}}^\perp \Tilde{E}_j^\prime \sqrt{\vr_{\vec{\theta}}^\prime} \right) \right]
        = \tr(\mathcal{Z}_i^\dagger \mathcal{Z}_j),
    \end{equation}
    where we denoted $\mathcal{Z}_i \equiv \Pi_{\mathcal{V}}^\perp \Tilde{E}_i^\prime \sqrt{\vr_{\vec{\theta}}^\prime}$.

    Since we have supposed that $\mathcal{Q}(\vr_{\vec{\theta}}) = 0$ holds, i.e., ${\rm Cov}(\vr_{\vec{\theta}}, \mathsf{E}, \Tilde{\vec{\theta}}) = F_Q^{-1} (\vr_{\vec{\theta}})$, both diagonal elements must coincide with each other. According to Eqs.~(\ref{eq:Cov_refomulation}, \ref{eq:QFIM_diagonal_refor}, \ref{eq:anothergap_diagonal_refor}), we have that $\tr(\mathcal{Z}_i^\dagger \mathcal{Z}_i) = 0$ for all $i$. This implies that $\mathcal{Z}_i = 0$, because $\mathcal{Z}_i^\dagger \mathcal{Z}_i$ is positive-semidefinite. Finally, multiplying $\sqrt{\vr_{\vec{\theta}}^\prime}$ by $\mathcal{Z}_i$ from the right-hand side leads to $\Pi_{\mathcal{V}}^\perp \Tilde{E}_i^\prime \vr_{\vec{\theta}}^\prime = 0$. We thus arrive at Eq.~(\ref{eq:consition_orthogonal_PErho=0}). Hence, we complete the proof.
\end{proof}
\subsubsection*{Proof of sufficiency}
\begin{proof}
    Here we prove that if there exist an isometry $\mathcal{V}$ and a set of SLD operators $\{ \mathfrak{L}_i^\prime \}$ on the extended Hilbert space $\mathcal{H}^\prime$, which satisfy $\mathfrak{L}_i^\prime \vr_{\vec{\theta}}^\prime = \mathcal{V} L_i \mathcal{V}^\dagger \vr_{\vec{\theta}}^\prime$ in Eq.~(\ref{eq:def_extended_SLD_another}) and commute, namely $[\mathfrak{L}_i^\prime, \mathfrak{L}_j^\prime] = 0$ for all $i,j$, then $\mathcal{Q}(\vr_{\vec{\theta}}) = 0$ holds. Let us begin by defining a Hermitian operator
    \begin{equation} \label{eq:tildeLprime}
        \mathcal{E}_i^\prime \equiv \sum_j [F_Q^{-1}]_{ij} \mathfrak{L}_j^\prime \in \mathcal{H}^\prime.
    \end{equation}
    We notice that $[\mathcal{E}_i^\prime, \mathcal{E}_j^\prime] = 0$ for all $i,j$ due to $[\mathfrak{L}_i^\prime, \mathfrak{L}_j^\prime] = 0$. Thus, according to the spectral theorem, every $\mathcal{E}_i^\prime$ can be written in the same basis
    \begin{equation} \label{eq:spectral_tildeL_i}
        \mathcal{E}_i^\prime
        = \sum_{\omega} \lambda_{i}(\omega) E_\omega^\prime,
    \end{equation}
    where $\lambda_{i}(\omega)$ are real eigenvalues with orthogonal projectors $E_\omega^\prime$ according to Naimark’s dilation theorem. We emphasize that $\lambda_{i}(\omega)$ depends on the parameter $\vec{\theta}$ since $F_Q$ and $\mathfrak{L}_j^\prime$ depend on $\vec{\theta}$ via the state $\vr_{\vec{\theta}}$. However, no assumption about the explicit dependence of $\lambda_{i}(\omega)$ on $\vec{\theta}$ can be made at this stage. 

    Next, we define a real and symmetric matrix $\Lambda$ with elements
    \begin{equation} \label{eq:def_Lambda_ij}
        [\Lambda]_{ij} \equiv \sum_{\omega} p (\omega |\vec{\theta}) \lambda_{i}(\omega) \lambda_{j}(\omega),
    \end{equation}
    where $p (\omega |\vec{\theta}) = \tr(\vr_{\vec{\theta}} E_\omega)$. In the following, we will show two relations
    \begin{subequations}
    \begin{align}
        \label{eq:Lambda=F_Q}
        \Lambda &= F_Q^{-1},
        \\
        \label{eq:Lambda=F_Q-1FCFQ-1}
        \Lambda &= F_Q^{-1} F_C F_Q^{-1},
    \end{align}
    \end{subequations}
    where $F_C$ is the CFIM in Eq.~(\ref{eq:CFIM_form}) in the main text with elements $[F_C]_{ij} = \sum_{\omega} [1/p(\omega | \vec{\theta})] [\partial_i p(\omega | \vec{\theta})] [\partial_j p(\omega | \vec{\theta})]$. Combining Eqs.~(\ref{eq:Lambda=F_Q}, \ref{eq:Lambda=F_Q-1FCFQ-1}) directly leads to $F_C = F_Q$. Thus, we can conclude that $\mathcal{Q}(\vr_{\vec{\theta}}) = 0$. 

    Eq.~(\ref{eq:Lambda=F_Q}) is shown as follows: Let us begin by noting Eqs.~(\ref{eq:P'=P}, \ref{eq:inv_L'L'=LL}) and the condition $\mathfrak{L}_i^\prime \vr_{\vec{\theta}}^\prime = \mathcal{V} L_i \mathcal{V}^\dagger \vr_{\vec{\theta}}^\prime$ in Eq.~(\ref{eq:def_extended_SLD_another}). Thus we can rewrite the $(i,j)$-element of $\Lambda$ as
    \begin{subequations}
    \begin{align}
        [\Lambda]_{ij}
        &=\sum_{\omega} \tr(\vr_{\vec{\theta}}^\prime E_\omega^\prime) \lambda_{i}(\omega) \lambda_{j}(\omega)
        \\
        &= \sum_{\omega,\eta} \tr(\vr_{\vec{\theta}}^\prime E_\omega^\prime E_\eta^\prime) \lambda_{i}(\omega) \lambda_{j}(\eta)
        \\
        &= \tr(\vr_{\vec{\theta}}^\prime \mathcal{E}_i^\prime \mathcal{E}_j^\prime)
        \\
        &= \sum_{k,l} [F_Q^{-1}]_{ik} [F_Q^{-1}]_{jl} \tr(\vr_{\vec{\theta}}^\prime \mathfrak{L}_k^\prime \mathfrak{L}_l^\prime)
        \\
        &= \sum_{k,l} [F_Q^{-1}]_{ik} [F_Q^{-1}]_{jl} \tr(\vr_{\vec{\theta}} L_k L_l)
        \\
        &= [F_Q^{-1}]_{ij},
    \end{align}
    \end{subequations}
    where we used $E_\omega^\prime E_\eta^\prime = \delta_{\omega \eta} E_\omega^\prime$, Eqs.~(\ref{eq:tildeLprime}, \ref{eq:spectral_tildeL_i}), and the symmetry of the matrix $\Lambda$. Then we obtain Eq.~(\ref{eq:Lambda=F_Q}).

    Eq.~(\ref{eq:Lambda=F_Q-1FCFQ-1}) is shown as follows: To proceed, let us show that $\lambda_{i}(\omega)$ is given by
    \begin{equation} \label{eq:lambda_i_form}
        \lambda_{i}(\omega) = \frac{1}{p (\omega |\vec{\theta})} \sum_{j} [F_Q^{-1}]_{ij} \partial_j p (\omega |\vec{\theta}).
    \end{equation}
    In fact, inserting Eq.~(\ref{eq:lambda_i_form}) into Eq.~(\ref{eq:def_Lambda_ij}) directly leads to $\Lambda = F_Q^{-1} F_C F_Q^{-1}$. To derive Eq.~(\ref{eq:lambda_i_form}), we first multiply $[F_Q^{-1}]_{ij}$ by $\partial_j \vr_{\vec{\theta}}^\prime$ and then take a sum over $j$:
    \begin{subequations}
    \begin{align}
        \sum_j [F_Q^{-1}]_{ij} \partial_j \vr_{\vec{\theta}}^\prime
        &= \frac{1}{2} \sum_j [F_Q^{-1}]_{ij} (\mathfrak{L}_j^\prime \vr_{\vec{\theta}}^\prime + \vr_{\vec{\theta}}^\prime \mathfrak{L}_j^\prime)
        \\
        &= \frac{1}{2} (\mathcal{E}_i^\prime \vr_{\vec{\theta}}^\prime + \vr_{\vec{\theta}}^\prime \mathcal{E}_i^\prime)
        \\
        &= \frac{1}{2} \sum_{\omega} \lambda_{i}(\omega) (E_\omega^\prime \vr_{\vec{\theta}}^\prime + \vr_{\vec{\theta}}^\prime E_\omega^\prime),
    \end{align}
    \end{subequations}
    where we used the condition $\mathfrak{L}_i^\prime \vr_{\vec{\theta}}^\prime = \mathcal{V} L_i \mathcal{V}^\dagger \vr_{\vec{\theta}}^\prime$ and Eqs.~(\ref{eq:tildeLprime}, \ref{eq:spectral_tildeL_i}). 
    
    Next, multiplying $E_\eta^\prime$ by this and taking the trace yields
    \begin{subequations}
    \begin{align}
        \sum_j [F_Q^{-1}]_{ij} \tr[(\partial_j \vr_{\vec{\theta}}^\prime) E_\eta^\prime]
        &= \frac{1}{2}\sum_{\omega} \lambda_{j}(\omega) \tr(E_\eta^\prime E_\omega^\prime \vr_{\vec{\theta}}^\prime + E_\eta^\prime \vr_{\vec{\theta}}^\prime E_\omega^\prime)
        \\
        &= \lambda_{j}(\eta) \tr(\vr_{\vec{\theta}}^\prime E_\eta^\prime)
        \\
        &= \lambda_{j}(\eta) p (\eta |\vec{\theta}).
    \end{align}
    \end{subequations}
    Using $\tr[(\partial_j \vr_{\vec{\theta}}^\prime) E_\eta^\prime] = \partial_j p (\eta |\vec{\theta})$ and relabeling $\eta$ with $\omega$, we obtain Eq.~(\ref{eq:lambda_i_form}).
    
    Finally, we can now demonstrate the explicit functional dependence of $\lambda_{i}(\omega)$ on $\vec{\theta}$: it \textit{must} be
    \begin{equation}
        \label{eq:lambda_form_must}
        \lambda_{i}(\omega) = \Tilde{\theta}_i (\omega) - \theta_i.
    \end{equation}
    To see this, we note that the optimal locally-unbiased estimator $\Tilde{\vec{\theta}}$ to achieve the saturation of the classical Cramér-Rao inequality ${\rm Cov}(\vr_{\vec{\theta}}, \mathsf{E}, \Tilde{\vec{\theta}}) = F_C^{-1}$ can be given by~\cite{kay1993fundamentals,lehmann1998theory}:
    \begin{equation}\label{eq:optimal_estimator_form_FC}
        \Tilde{\theta}_i (\omega) = \theta_i + \sum_{j} [F_C^{-1}]_{ij} \partial_j {\rm ln} p(\omega| \vec{\theta}).
    \end{equation}
    Since we have shown that $\mathcal{Q}(\vr_{\vec{\theta}}) = 0$, we can replace $F_Q$ in Eq.~(\ref{eq:lambda_i_form}) with $F_C$. Comparing such a form with Eq.~(\ref{eq:optimal_estimator_form_FC}), we can arrive at Eq.~(\ref{eq:lambda_form_must}).
\end{proof}

\section{Derivation of Eq.~(\ref{eq:FC<=FQ})} \label{ap:derivation_FC<=FQ}
Here we show that the matrix inequality $F_C(\vr_{\vec{\theta}}, \mathsf{E}) \leq F_Q(\vr_{\vec{\theta}})$ in Eq.~(\ref{eq:FC<=FQ}) in the main text can be derived only using the classical and quantum Cramér-Rao (CCR and QCR) inequalities: ${\rm Cov}(\vr_{\vec{\theta}}, \mathsf{E}, \Tilde{\vec{\theta}}) \geq (1/n) F_C^{-1}(\vr_{\vec{\theta}}, \mathsf{E})$ and ${\rm Cov}(\vr_{\vec{\theta}}, \mathsf{E}, \Tilde{\vec{\theta}}) \geq (1/n) F_Q^{-1}(\vr_{\vec{\theta}})$, where $\Tilde{\vec{\theta}}(\vec{\omega}) = \{ \Tilde{\theta}_1, \ldots, \Tilde{\theta}_m \}$ is the locally-unbiased estimator and $p(\vec{\omega} | \vec{\theta}) = \prod_{k=1}^{n} p(\omega_k | \vec{\theta})$ is the joint probability distribution, corresponding to a sequence of $n$ independent outcomes $\vec{\omega} = \{ \omega_1, \ldots, \omega_n \}$ in the single-copy case $\nu=1$, as explained in Section~\ref{sec:II.A}. (also see Appendix~\ref{ap:notations_results} for other notations).

\begin{proof}
    We can prove $F_C(\vr_{\vec{\theta}}, \mathsf{E}) \leq F_Q(\vr_{\vec{\theta}})$ by contradiction as follows. First, recall that the optimal estimator $\Tilde{\vec{\theta}}$ to achieve the CCR inequality can be given by $\Tilde{\theta}_i = \theta_i + \sum_{j=1}^m [F_C^{-1}(\vr_{\vec{\theta}}, \mathsf{E})]_{ij} \partial_j {\rm ln} \prod_{k=1}^{n} p(\omega_k | \vec{\theta})$~\cite{kay1993fundamentals,lehmann1998theory}. This means that there always exists $\Tilde{\vec{\theta}}$ such that the CCR inequality is saturated for all $\vr_{\vec{\theta}}$ and $\mathsf{E}$. Now, let us take such an optimal estimator $\Tilde{\vec{\theta}}_*$, i.e., ${\rm Cov}(\vr_{\vec{\theta}}, \mathsf{E}, \Tilde{\vec{\theta}}_*) = (1/n) F_C^{-1}(\vr_{\vec{\theta}}, \mathsf{E})$.
    
    To proceed, suppose that there are certain POVMs and quantum states such that $\tr[M F_Q^{-1} (\vr_{\vec{\theta}})] > \tr[M F_C^{-1}(\vr_{\vec{\theta}}, \mathsf{E})]$, where $M$ is an $m \times m$ real, symmetric, and positive-definite matrix. However, since $\tr[M {\rm Cov}(\vr_{\vec{\theta}}, \mathsf{E}, \Tilde{\vec{\theta}}_*)] = (1/n) \tr[M F_C^{-1}(\vr_{\vec{\theta}}, \mathsf{E})]$ holds, this is in contradiction to the QCR inequality. Thus, $\tr[M F_Q^{-1} (\vr_{\vec{\theta}})] \leq \tr[M F_C^{-1}(\vr_{\vec{\theta}}, \mathsf{E})]$ should hold for all POVMs and quantum states. This scalar inequality allows us to arrive at the matrix inequality $F_C(\vr_{\vec{\theta}}, \mathsf{E}) \leq F_Q(\vr_{\vec{\theta}})$ by following Lemma~2 in Ref.~\cite{yang2019optimal}. 
\end{proof}

\section{Converse implication (I$\inv$) in Eq.~(\ref{eq:several_hierarchy})}
\subsection{Weak commutativity implies strong commutativity for pure states}\label{ap:proof_Iinv}
Here we prove that the converse implication (I$\inv$) in Eq.~(\ref{eq:several_hierarchy}) in the main text holds for pure states. Equivalently, it is sufficient to show that, for pure states, the WC condition implies the SC condition. To this end, we recall the results of Refs.~\cite{matsumoto2002new,pezze2017optimal}, which establish that whenever the WC condition holds for a pure state, there exists a set of optimal projective measurements $\{E_\omega\}$ such that $\mathcal{Q}(\ket{\psi_{\vec{\theta}}}) = 0$. Since these measurement operators are mutually orthogonal projectors, the argument used in the \textit{Proof of necessity} in Appendix~\ref{ap:Derivation_extended_commutativity} can be applied directly, without invoking Naimark's dilation theorem, to show that the corresponding SLD operators commute.

More explicitly, for the Hermitian operators $\mathfrak{L}_i \equiv \sum_{j} [F_Q]_{ij} \Tilde{E}_j$ where $\Tilde{E}_j$ is defined in Eq.~(\ref{eq:tildeE_iL_ie_i}), one finds that they satisfy $\mathfrak{L}_i \ket{\psi_{\vec{\theta}}} = L_i \ket{\psi_{\vec{\theta}}}$ due to the saturation condition in Eq.~(\ref{eq:QCRBinequality_saturation:i}). Hence, $\mathfrak{L}_i$ are valid SLD operators. Since $\Tilde{E}_j$ mutually commute, $\mathfrak{L}_i$ also commute. Therefore, the SC condition holds whenever the WC condition is satisfied, establishing the converse implication (I$\inv$) for pure states.

\subsection{Extended commutativity cannot imply strong commutativity for rank-deficient states} \label{ap:Iinv_counterexample}
Here we present a counterexample demonstrating that the converse implication (I$\inv$) in Eq.~(\ref{eq:several_hierarchy}) in the main text does not generally hold for rank-deficient states. Consider the rank-two state in a four-dimensional system: $\vr_{\rm I} = p \ket{00}\! \bra{00} + (1-p) \ket{01}\! \bra{01} \in \mathbb{C}^4$ for $p \in [0,1]$. The two-dimensional kernel space of $\vr_{\rm I}$ is spanned by $\{ \ket{10}, \ket{11} \}$. Following an approach similar to that of Ref.~\cite{conlon2025role}, we introduce two SLD operators ($L_1, L_2$) and two extended SLD operators ($\mathfrak{L}_1^\prime, \mathfrak{L}_2^\prime$), written in the block matrix forms of Eqs.~(\ref{eq:SLDcomponents}, \ref{eq:SLD'_block}):
\begin{subequations}
\begin{alignat}{3}
    \label{eq:Li_line}
    L_1 &= \left[
    \begin{array}{cccc}
    0 & 0 & -\tfrac{1}{\sqrt{3}} & 0 \\
    0 & 0 & \tfrac{2}{\sqrt{15}} & -2 \sqrt{\tfrac{3}{5}} \\
    -\tfrac{1}{\sqrt{3}} & \tfrac{2}{\sqrt{15}} & u_1 & x_1 + i y_1 \\
    0 & -2 \sqrt{\tfrac{3}{5}} & x_1 -i y_1 & v_1 \\
    \end{array}
    \right],
    &\quad
    L_2 &= \left[
    \begin{array}{cccc}
    0 & 0 & -\tfrac{2}{\sqrt{3}} & 0 \\
    0 & 0 & \tfrac{4}{\sqrt{15}} & \sqrt{\tfrac{3}{5}} \\
    -\tfrac{2}{\sqrt{3}} & \tfrac{4}{\sqrt{15}} & u_2 & x_2 + i y_2 \\
    0 & \sqrt{\tfrac{3}{5}} & x_2 -i y_2 & v_2 \\
    \end{array}
    \right],
    \\
    \label{eq:Li'_line}
    \mathfrak{L}_1^\prime &= \left[
    \begin{array}{ccccc}
    0 & 0 & -\tfrac{1}{\sqrt{3}} & 0 & 0 \\
    0 & 0 & \tfrac{2}{\sqrt{15}} & -2 \sqrt{\tfrac{3}{5}} & 0 \\
    -\tfrac{1}{\sqrt{3}} & \tfrac{2}{\sqrt{15}} & 0 & 0 & -\sqrt{\tfrac{2}{5}} \\
    0 & -2 \sqrt{\tfrac{3}{5}} & 0 & 0 & -2 \sqrt{\tfrac{2}{5}} \\
    0 & 0 & -\sqrt{\tfrac{2}{5}} & -2 \sqrt{\tfrac{2}{5}} & 0 \\
    \end{array}
    \right],
    &\quad
    \mathfrak{L}_2^\prime &= \left[
    \begin{array}{ccccc}
    0 & 0 & -\tfrac{2}{\sqrt{3}} & 0 & 0 \\
    0 & 0 & \tfrac{4}{\sqrt{15}} & \sqrt{\tfrac{3}{5}} & 0 \\
    -\tfrac{2}{\sqrt{3}} & \tfrac{4}{\sqrt{15}} & 0 & 0 & -2 \sqrt{\tfrac{2}{5}} \\
    0 & \sqrt{\tfrac{3}{5}} & 0 & 0 & \sqrt{\tfrac{2}{5}} \\
    0 & 0 & -2 \sqrt{\tfrac{2}{5}} & \sqrt{\tfrac{2}{5}} & 0 \\
    \end{array}
    \right].
\end{alignat}
\end{subequations}
Here, the parameters $u_i, v_i, x_i, y_i \in \mathbb{R}$ for $i=1,2$ in Eq.~(\ref{eq:Li_line}) are free to choose, while the entries involving the one-dimensional ancillary space in Eq.~(\ref{eq:Li'_line}) are fixed.

A straightforward algebraic calculation shows that $[\mathfrak{L}_1^\prime, \mathfrak{L}_2^\prime] = 0$, whereas there exists no choice of the parameters $u_i, v_i, x_i, y_i \in \mathbb{R}$ for which $[L_1, L_2]=0$. Therefore, the EC condition holds while the SC condition does not hold. This demonstrates that the converse implication (I$\inv$) in Eq.~(\ref{eq:several_hierarchy}) does not hold in general. Note that $\tr(\vr_{\rm I} L_i) = 0$ and the corresponding  QFIM is given by
\begin{equation}
    F_Q = \frac{1}{3} \left[
    \begin{array}{cc}
    8-7 p & 2 (2 p-1) \\
    2 (2 p-1) & 5-p \\
    \end{array}
    \right].
\end{equation}

\section{Derivation of Eqs.~(\ref{eq:unitaryGform_basisInd}, \ref{eq:unitaryGform_basisInd_hierarchical})}\label{ap:proof_another_Gform}
Here, without loss of generality, we can set $L_i^{\rm kk}$ to zero, as discussed in Section~\ref{sec:II.C} in the main text.
\subsection{Derivation of Eq.~(\ref{eq:unitaryGform_basisInd})}
\begin{proof}
We begin by noting that the SLD operator $L_i = L_i (\vr_{\vec{\theta}})$ for any state $\vr_{\vec{\theta}}$ can be written as the Lyapunov representation~\cite{paris2009quantum}:
\begin{equation}
    L_i = 2 \int_{0}^{\infty} ds \, e^{- \vr_{\vec{\theta}} s}
    (\partial_i \vr_{\vec{\theta}}) e^{- \vr_{\vec{\theta}} s}.
\end{equation}
For $\vr_{\vec{\theta}} = U_{\vec{\theta}} \vr U_{\vec{\theta}}^\dagger$, using $\partial_i \vr_{\vec{\theta}} = (\partial_i U_{\vec{\theta}}) \vr U_{\vec{\theta}}^\dagger + U_{\vec{\theta}} \vr (\partial_i U_{\vec{\theta}}^\dagger)$ and $\mathcal{H}_i = - i (\partial_i U_{\vec{\theta}}^\dagger) U_{\vec{\theta}} = i U_{\vec{\theta}}^\dagger (\partial_i U_{\vec{\theta}})$, the operator $\mathcal{L}_i \equiv U_{\vec{\theta}}^\dagger L_i U_{\vec{\theta}}$, defined in the proof of Observation~\ref{ob:Gformsummary} in the main text, is thus given by
\begin{equation}
    \mathcal{L}_i
    = 2i \int_{0}^{\infty} ds \, e^{- \vr s}
    (\mathcal{H}_i \vr -\vr \mathcal{H}_i) e^{- \vr s}.
\end{equation}
From a straightforward calculation, we obtain
\begin{equation}
    \tr (\vr \mathcal{L}_i \mathcal{L}_j )
    = 4 \int_{0}^{\infty} \! \! \! ds \int_{0}^{\infty} \! \! \! dt \,
    \tr\Bigl[
    \vr^3 e^{- \vr (s + t)} \mathcal{H}_i e^{- \vr (s + t)} \mathcal{H}_j 
    + \vr e^{- \vr (s + t)} \mathcal{H}_i \vr^2 e^{- \vr (s + t)} \mathcal{H}_j
    \Bigr],
\end{equation}
where we used $\vr e^{- \vr (s + t)} = e^{- \vr (s + t)} \vr$ and $\tr[ABC] = \tr[CAB]$ for operators $A,B,C$. By denoting
\begin{equation}
    \mathcal{T}_\vr
    \equiv
    \int_{0}^{\infty} ds \int_{0}^{\infty} dt \, e^{- \vr (s + t)} \otimes e^{- \vr (s + t)},
\end{equation}
we have that $\tr (\vr \mathcal{L}_i \mathcal{L}_j ) = 4 \tr[ \swap (\vr^3 \otimes \eins \!+\! \vr \otimes \vr^2 ) \mathcal{T}_\vr (\mathcal{H}_i \otimes \mathcal{H}_j ) ]$. Here we used the SWAP trick that $\tr[\swap(A \otimes B)] = \tr(AB)$ for operators $A, B$, and exchanged the integrals with the trace. Recalling that $[W(\vr_{\vec{\theta}})]_{ij} = \tr [\vr (\mathcal{L}_i \mathcal{L}_j - \mathcal{L}_i \mathcal{L}_j) ]$, we can arrive at Eq.~(\ref{eq:unitaryGform_basisInd}) in the main text.
\end{proof}
\subsection{Derivation of Eq.~(\ref{eq:unitaryGform_basisInd_hierarchical})}
\begin{proof}
We begin by noting that the form of $W(\vr_{\vec{\theta}})$, presented in Observation~\ref{ob:Gformsummary} and its proof in the main text, can be rewritten as
\begin{equation}
    [W(\vr_{\vec{\theta}})]_{ij}
    = 4 \sum_{k, l}
    \frac{(\lambda_k - \lambda_{l})^3}{(\lambda_k +\lambda_{l})^2}
    \tr[\swap \Pi_{kl} \mathcal{H}_i  \otimes  \mathcal{H}_j],
\end{equation}
where $\Pi_{kl} = \ket{\psi_k} \! \bra{\psi_k} \otimes \ket{\psi_l} \! \bra{\psi_l}$. To proceed, we denote
\begin{equation}
    \mathcal{W}
    \equiv \sum_{k, l}
    \frac{(\lambda_k - \lambda_{l})^3}{(\lambda_k +\lambda_{l})^2} \Pi_{kl}.
\end{equation}
Then, we have that $[W(\vr_{\vec{\theta}})]_{ij} = 4 \tr[\swap \mathcal{W} \mathcal{H}_i  \otimes  \mathcal{H}_j]$.

Now we notice that $\mathcal{W}$ can be rewritten as
\begin{equation}
    \mathcal{W} = \Bigg[ \sum_{k, l} (\lambda_k \!-\! \lambda_{l})^3 \ \Pi_{kl} \Bigg]
    \Bigg[\sum_{k,l} ( \lambda_{k} \!+\! \lambda_{l} )^{-2} \ \Pi_{kl} \Bigg].
\end{equation}
Using $( \lambda_{k} +  \lambda_{l} )^{-2} = \sum_{x = 0}^\infty (1 - \lambda_{k} - \lambda_{l})^{2x}$ for $\lambda_{k} \in [0,1]$ and $\lambda_{k} + \lambda_{l} > 0$, we obtain that $\mathcal{W} = \lim_{\alpha \to \infty} \mathcal{W}_\alpha$, where
\begin{equation}
     \mathcal{W}_\alpha
     \equiv
     \sum_{x = 0}^\alpha 
     (\vr \otimes \eins - \eins \otimes \vr)^3
     (\eins \otimes \eins - \vr \otimes \eins - \eins \otimes \vr)^{2x}.
\end{equation}
This leads to $[W(\vr_{\vec{\theta}})]_{ij} = 4 \lim_{\alpha \to \infty} [\mathcal{G}_\alpha (\vr_{\vec{\theta}})]_{ij}$, where $[\mathcal{G}_\alpha(\vr_{\vec{\theta}})]_{ij} \equiv \tr[\swap \mathcal{W}_\alpha \mathcal{H}_i  \otimes  \mathcal{H}_j]$.

Also, we can show the identity $\swap \mathcal{W}_\alpha \swap = - \mathcal{W}_\alpha$. This follows from the properties that $\swap (A \otimes B) \swap = B \otimes A$ and $\swap A^k \swap = (\swap A \swap)^k$ for operators $A,B$ and integers $k$, due to $\swap^2 = \eins$. The identity can imply that $\mathcal{W}_\alpha$ can be written in linear combinations of $\vr^a \otimes \vr^b - \vr^b \otimes \vr^a$ as follows: 
\begin{equation}
    \mathcal{W}_\alpha = \sum_{a,b=0}^{2\alpha} \xi_{ab}^{(\alpha)}
    (\vr^a \otimes \vr^b - \vr^b \otimes \vr^a),
\end{equation}
with some coefficients $\xi_{ab}^{(\alpha)}$ for integers $a,b=0,1,\ldots,2 \alpha$. Using $\tr[\swap (\vr^b \otimes \vr^a) (\mathcal{H}_i  \otimes  \mathcal{H}_j)] = \tr[\swap (\vr^a \otimes \vr^b) (\mathcal{H}_j  \otimes  \mathcal{H}_i)]$, we can thus arrive at Eq.~(\ref{eq:unitaryGform_basisInd_hierarchical}) in the main text. As examples, $\mathcal{W}_0$ and $ \mathcal{W}_1$ are given by
\begin{subequations}
    \begin{align} \nonumber
    \mathcal{W}_0
    &= (\vr \otimes \eins - \eins \otimes \vr)^3
    \\
    &= (\vr^3 \otimes \eins - \eins \otimes \vr^3)
    - 3 (\vr^2 \otimes \vr - \vr \otimes \vr^2),
    \\
    \mathcal{W}_1 \nonumber
    &=(\vr \otimes \eins - \eins \otimes \vr)^3
    (\eins \otimes \eins - \vr \otimes \eins - \eins \otimes \vr)^{2}
    \\ \nonumber
    &=(\vr^3 \otimes \eins - \eins \otimes \vr^3)
    - 2 (\vr^4 \otimes \eins- \eins \otimes \vr^4 )
    + (\vr^5 \otimes \eins - \eins \otimes \vr^5)
    - 3 (\vr^2 \otimes \vr - \vr \otimes \vr^2)
    \\ \nonumber
    &\quad
    + 4 (\vr^3 \otimes \vr - \vr \otimes \vr^3)
    -2 (\vr^3 \otimes \vr^2 - \vr^2 \otimes \vr^3)
     - (\vr^4 \otimes \vr - \vr \otimes \vr^4).
     \end{align}
\end{subequations}
\end{proof}
\section{Derivation of Eq.~(\ref{eq:GE:USU^daggwe})} \label{ap:GE:USU^daggwe}
We begin by considering a rank-$r$ state $\vr$ in the Hilbert space with dimension $D$ and recalling the operator $\mathcal{L}_i$ in Eq.~(\ref{eq:mathcal_L_i}) in the main text is given by
\begin{equation}
    \mathcal{L}_i
    = 2 i \sum_{\substack{k,l=1\\\lambda_k + \lambda_l > 0}}^D
    \frac{\lambda_{k} - \lambda_{l}}{\lambda_k + \lambda_l}
    \mathfrak{h}_{kl}^{(i)} 
    \ket{\psi_k} \! \bra{\psi_l}
    + \mathcal{L}_i^{{\rm kk}},
\end{equation}
where $\mathfrak{h}_{kl}^{(i)} = \braket{\psi_k|\mathcal{H}_i|\psi_l}$, and the term $\mathcal{L}_i^{{\rm kk}}$ acts within the kernel space of $\vr$ but is not uniquely determined from the definition of the SLD operator. Here, the sum runs over indices $k,l$ such that $\lambda_k + \lambda_l > 0$, and the number of nonzero eigenvalues is $r$. Letting $\Pi_k = \ket{\psi_k} \! \bra{\psi_k}$, we expand the term $\mathcal{L}_i \mathcal{L}_j$ as
\begin{align}
    \nonumber
    \mathcal{L}_i \mathcal{L}_j
    &=4
    \sum_{\substack{k,l,m=1
    \\ \lambda_k + \lambda_m > 0
    \\ \lambda_l + \lambda_m > 0}}^{D}
    \frac{\lambda_{k} - \lambda_{m}}{\lambda_k + \lambda_m}
    \frac{\lambda_{l} - \lambda_{m}}{\lambda_l + \lambda_m}
    \mathfrak{h}_{km}^{(i)} \mathfrak{h}_{ml}^{(j)}
    \ket{\psi_k} \! \bra{\psi_l}
    + \mathcal{L}_i \mathcal{L}_j^{{\rm kk}} + \mathcal{L}_i^{{\rm kk}} \mathcal{L}_j + \mathcal{L}_i^{{\rm kk}} \mathcal{L}_j^{{\rm kk}}
    \\
    \label{eq:L_iL_jexpansion}
    &= 4
    \left\{
    \sum_{\substack{k \neq l
    \\ \lambda_k + \lambda_l > 0}}^{D}
    \frac{(\lambda_{k} - \lambda_{l})^2}{(\lambda_k + \lambda_l)^2}
    \mathfrak{h}_{kl}^{(i)} \mathfrak{h}_{lk}^{(j)}
    \Pi_k
    +
    \sum_{\substack{k \neq l \neq m
    \\ \lambda_k + \lambda_m > 0
    \\ \lambda_l + \lambda_m > 0}}^{D}
    \frac{\lambda_{k} - \lambda_{m}}{\lambda_k + \lambda_m}
    \frac{\lambda_{l} - \lambda_{m}}{\lambda_l + \lambda_m}
    \mathfrak{h}_{km}^{(i)} \mathfrak{h}_{ml}^{(j)}
    \ket{\psi_k} \! \bra{\psi_l}
    \right\}
    \\
    &\quad
    +
    \mathcal{L}_i \mathcal{L}_j^{{\rm kk}} + \mathcal{L}_i^{{\rm kk}} \mathcal{L}_j + \mathcal{L}_i^{{\rm kk}} \mathcal{L}_j^{{\rm kk}},
\end{align}
where $k \neq l \neq m$ means $k \neq l$, $l \neq m$, and $k \neq m$.

In the following, we denote $\ket{\psi_{k}}$ for $k \in [r +1, D]$ as eigenstates with zero eigenvalue $\lambda_{k} = 0$ (there are $(D-r)$ number of zero eigenvalues since the state $\vr$ has rank $r$). Note that the kernel space of $\vr$ is spanned by $\{ \ket{\psi_{k}}\}$ for $k \in [r+1, D]$. Using the identity $(\lambda_{k} - \lambda_{l})^2 = (\lambda_{k} + \lambda_{l})^2 - 4 \lambda_{k} \lambda_{l}$, the first term in Eq.~(\ref{eq:L_iL_jexpansion}) is rewritten as
\begin{align}
    \label{eq:First_L_iL_jexpansion}
    \sum_{\substack{k \neq l
    \\ \lambda_k + \lambda_l > 0}}^D
    \frac{(\lambda_{k} - \lambda_{l})^2}{(\lambda_k + \lambda_l)^2}
    \mathfrak{h}_{kl}^{(i)} \mathfrak{h}_{lk}^{(j)}
    \Pi_k
    &=
    \sum_{k =1}^r \sum_{l=r+1}^D
    \Bigl[ 
    \mathfrak{h}_{kl}^{(i)} \mathfrak{h}_{lk}^{(j)}
    \Pi_k
    +
    \mathfrak{h}_{lk}^{(i)} \mathfrak{h}_{kl}^{(j)}
    \Pi_l
    \Bigr]
    +
    \sum_{k \neq l}^r
    \mathfrak{h}_{kl}^{(i)} \mathfrak{h}_{lk}^{(j)}
    \Pi_k
    - \sum_{k \neq l}^r
    \frac{4 \lambda_{k}\lambda_{l}}{(\lambda_k + \lambda_l)^2}
    \mathfrak{h}_{kl}^{(i)} \mathfrak{h}_{lk}^{(j)}
    \Pi_k.
\end{align}
Also the second term in Eq.~(\ref{eq:L_iL_jexpansion}) is rewritten as
\begin{align} \nonumber
    &\sum_{\substack{k \neq l \neq m
    \\ \lambda_k + \lambda_m > 0
    \\ \lambda_l + \lambda_m > 0}}^D
    \frac{\lambda_{k} - \lambda_{m}}{\lambda_k + \lambda_m}
    \frac{\lambda_{l} - \lambda_{m}}{\lambda_l + \lambda_m}
    \mathfrak{h}_{km}^{(i)} \mathfrak{h}_{ml}^{(j)}
    \ket{\psi_k} \! \bra{\psi_l}
    \\ \nonumber
    = \quad
    &\sum_{k \neq l}^r \sum_{m=r+1}^D
    \Bigl[
    \mathfrak{h}_{km}^{(i)} \mathfrak{h}_{ml}^{(j)}
    \ket{\psi_k} \! \bra{\psi_l}
    -
    \frac{\lambda_{k} - \lambda_{l}}{\lambda_k + \lambda_l}
    \bigl(
    \mathfrak{h}_{kl}^{(i)} \mathfrak{h}_{lm}^{(j)}
    \ket{\psi_k} \! \bra{\psi_m}
    +
    \mathfrak{h}_{ml}^{(i)} \mathfrak{h}_{lk}^{(j)}
    \ket{\psi_m} \! \bra{\psi_k}
    \bigr)
    \Bigr]
    \\
    + \
    &\sum_{m=1}^r \sum_{k \neq l}^D
    \mathfrak{h}_{km}^{(i)} \mathfrak{h}_{ml}^{(j)}
    \ket{\psi_k} \! \bra{\psi_l}
    + 
    \sum_{k \neq l \neq m}^r
    \frac{\lambda_{k} - \lambda_{m}}{\lambda_k + \lambda_m}
    \frac{\lambda_{l} - \lambda_{m}}{\lambda_l + \lambda_m}
    \mathfrak{h}_{km}^{(i)} \mathfrak{h}_{ml}^{(j)}
    \ket{\psi_k} \! \bra{\psi_l}.
    \label{eq:Second_L_iL_jexpansion}
\end{align}
Moreover, the term $\mathcal{L}_i \mathcal{L}_j^{{\rm kk}} + \mathcal{L}_i^{{\rm kk}} \mathcal{L}_j$ is rewritten as
\begin{align}
    \label{eq:prod_kernel-kernel_SLD}
    \mathcal{L}_i \mathcal{L}_j^{{\rm kk}} + \mathcal{L}_i^{{\rm kk}} \mathcal{L}_j
    = 2 i \sum_{k=1}^r \sum_{l=r+1}^D
    \mathfrak{h}_{kl}^{(i)} 
    \ket{\psi_k} \! \bra{\psi_l}
    \mathcal{L}_j^{{\rm kk}}
    -
    2 i \sum_{k=r+1}^D \sum_{l=1}^r
    \mathfrak{h}_{kl}^{(j)} 
    \mathcal{L}_i^{{\rm kk}}
    \ket{\psi_k} \! \bra{\psi_l}.
\end{align}

To proceed, we consider the projector onto the kernel space of $\vr$: $\Pi_{\vr}^\perp = \sum_{k=r+1}^D \ket{\psi_k} \! \bra{\psi_k}$, where $\vr \Pi_{\vr}^\perp = \Pi_{\vr}^\perp \vr = 0$. Notice that $\Pi_{\vr}^\perp \mathcal{L}_i^{{\rm kk}}\Pi_{\vr}^\perp = \mathcal{L}_i^{{\rm kk}}$. Employing the identity $\Pi_{\vr}^\perp = \eins - \Pi_\vr$ for $\Pi_\vr = \sum_{k =1}^r \Pi_k$ and recalling $\mathfrak{h}_{kl}^{(i)} = \braket{\psi_k|\mathcal{H}_i|\psi_l}$, the first, second, and third terms in Eq.~(\ref{eq:First_L_iL_jexpansion}) are given by
\begin{subequations}
\begin{align}
    \sum_{k =1}^r \sum_{l=r+1}^D
    \mathfrak{h}_{kl}^{(i)} \mathfrak{h}_{lk}^{(j)}
    \Pi_k
    &= \sum_{k =1}^r
    \braket{\psi_k|\mathcal{H}_i \Pi_{\vr}^\perp \mathcal{H}_j|\psi_k}
    \Pi_k
    = 
    \sum_{k = 1}^r
    \left[
    \Pi_k \mathcal{H}_i \mathcal{H}_j \Pi_k
    - \Pi_k \mathcal{H}_i \Pi_\vr \mathcal{H}_j \Pi_k
    \right],
    \\
    \sum_{k =1}^r \sum_{l=r+1}^D
    \mathfrak{h}_{lk}^{(i)} \mathfrak{h}_{kl}^{(j)}
    \Pi_l
    &= \sum_{l = r +1}^D
    \braket{\psi_l|\mathcal{H}_i \Pi_\vr \mathcal{H}_j|\psi_l}
    \Pi_l
    = 
    \sum_{k = r +1}^D \Pi_k
    \mathcal{H}_i \Pi_\vr \mathcal{H}_j
    \Pi_k
    \\
    \sum_{k \neq l}^r
    \mathfrak{h}_{kl}^{(i)} \mathfrak{h}_{lk}^{(j)}
    \Pi_k
    &= \sum_{k \neq l}^r
    \braket{\psi_k|\mathcal{H}_i|\psi_l} \!
    \braket{\psi_l|\mathcal{H}_j|\psi_k}
    \Pi_k
    = \sum_{k=1}^r 
    \Pi_k \mathcal{H}_i \Pi_\vr \mathcal{H}_j \Pi_k
    - 
    \sum_{k=1}^r
    \Pi_k \mathcal{H}_i \Pi_k \mathcal{H}_j \Pi_k.
\end{align}
\end{subequations}
Also several terms in Eq.~(\ref{eq:Second_L_iL_jexpansion}) are given by
\begin{subequations}
\begin{align}
    &\sum_{k \neq l}^r \sum_{m=r+1}^D
    \mathfrak{h}_{km}^{(i)} \mathfrak{h}_{ml}^{(j)}
    \ket{\psi_k} \! \bra{\psi_l}
    =
    \sum_{k \neq l}^r
    \braket{\psi_k|\mathcal{H}_i \Pi_{\vr}^\perp \mathcal{H}_j|\psi_l}
    \ket{\psi_k} \! \bra{\psi_l}
    =
    \sum_{k \neq l}^r
    \left[
    \Pi_k \mathcal{H}_i \mathcal{H}_j \Pi_l
    - \Pi_k \mathcal{H}_i \Pi_\vr \mathcal{H}_j \Pi_l
    \right],
    \\
    &\sum_{k \neq l}^r \sum_{m=r+1}^D
    \frac{\lambda_{k} - \lambda_{l}}{\lambda_k + \lambda_l}
    \bigl(
    \mathfrak{h}_{kl}^{(i)} \mathfrak{h}_{lm}^{(j)}
    \ket{\psi_k} \! \bra{\psi_m}
    +
    \mathfrak{h}_{ml}^{(i)} \mathfrak{h}_{lk}^{(j)}
    \ket{\psi_m} \! \bra{\psi_k}
    \bigr)
    =
    \sum_{k \neq l}^r
    \frac{\lambda_{k} - \lambda_{l}}{\lambda_k + \lambda_l}
    \left[
    \Pi_k \mathcal{H}_i \Pi_l \mathcal{H}_j \Pi_{\vr}^\perp
    +
    \Pi_{\vr}^\perp \mathcal{H}_i \Pi_l \mathcal{H}_j \Pi_k
    \right],
    \\
    &\sum_{m=1}^r \sum_{k \neq l}^D
    \mathfrak{h}_{km}^{(i)} \mathfrak{h}_{ml}^{(j)}
    \ket{\psi_k} \! \bra{\psi_l}
    =
    \sum_{m=1}^r \sum_{k \neq l}^D
    \braket{\psi_k|\mathcal{H}_i|\psi_m} \!
    \braket{\psi_m|\mathcal{H}_j|\psi_l}
    \ket{\psi_k} \! \bra{\psi_l}
    = \sum_{k \neq l}^D
    \Pi_k \mathcal{H}_i \Pi_\vr \mathcal{H}_j \Pi_l.
\end{align}
\end{subequations}
Furthermore, Eq.~(\ref{eq:prod_kernel-kernel_SLD}) is rewritten as
\begin{equation}
    \mathcal{L}_i \mathcal{L}_j^{{\rm kk}} + \mathcal{L}_i^{{\rm kk}} \mathcal{L}_j
    = 2 i 
    \Pi_\vr \mathcal{H}_i (\eins - \Pi_\vr) \mathcal{L}_j^{{\rm kk}}
    -
    2 i 
    \mathcal{L}_i^{{\rm kk}} (\eins - \Pi_\vr) \mathcal{H}_j \Pi_\vr
    = 2i \left(
    \Pi_\vr \mathcal{H}_i \mathcal{L}_j^{{\rm kk}}
    - \mathcal{L}_i^{{\rm kk}} \mathcal{H}_j \Pi_\vr
    \right),
\end{equation}
where $\Pi_\vr \mathcal{L}_j^{{\rm kk}} = \mathcal{L}_j^{{\rm kk}} \Pi_\vr = 0$.

To summarize these terms, let us denote 
\begin{equation}
    \mathcal{D}_{ij}^{\vr} \equiv \mathcal{H}_i \Pi_\vr \mathcal{H}_j - \mathcal{H}_j \Pi_\vr \mathcal{H}_i.
\end{equation}
Then we represent the term $[\mathcal{S} (\vr_{\vec{\theta}})]_{ij} =  \mathcal{L}_i \mathcal{L}_j - \mathcal{L}_j \mathcal{L}_i$ as follows:
\begin{align}
    \nonumber
    [\mathcal{S} (\vr_{\vec{\theta}})]_{ij}
    &= 4
    \Biggl\{
    \sum_{k =1}^r
    \left[
    \Pi_k (\mathcal{H}_i \mathcal{H}_j - \mathcal{H}_i \mathcal{H}_j) \Pi_k
    - \Pi_k \mathcal{D}_{ij}^{\vr} \Pi_k
    \right]
    +
    \sum_{k = r +1}^D \Pi_k \mathcal{D}_{ij}^{\vr} \Pi_k
    \\
    \nonumber
    &+
    \sum_{k=1}^r 
    \Pi_k \mathcal{D}_{ij}^{\vr} \Pi_k
    - 
    \sum_{k=1}^r
    \Pi_k (\mathcal{H}_i \Pi_k \mathcal{H}_j - \mathcal{H}_j \Pi_k \mathcal{H}_i) \Pi_k
    \\
    \nonumber
    &- 4
    \sum_{k \neq l}^r
    \frac{\lambda_{k}\lambda_{l}}{(\lambda_k + \lambda_l)^2}
    \Pi_k (\mathcal{H}_i \Pi_l \mathcal{H}_j - \mathcal{H}_j \Pi_l \mathcal{H}_i) \Pi_k
    +
    \sum_{k \neq l}^r
    \left[
    \Pi_k (\mathcal{H}_i \mathcal{H}_j - \mathcal{H}_j \mathcal{H}_i) \Pi_l
    - \Pi_k \mathcal{D}_{ij}^{\vr} \Pi_l
    \right]
    \\
    \nonumber
    &-
    \sum_{k \neq l}^r
    \frac{\lambda_{k} - \lambda_{l}}{\lambda_k + \lambda_l}
    \left[
    \Pi_k (\mathcal{H}_i \Pi_l \mathcal{H}_j - \mathcal{H}_j \Pi_l \mathcal{H}_i) \Pi_{\vr}^\perp
    +
    \Pi_{\vr}^\perp (\mathcal{H}_i \Pi_l \mathcal{H}_j - \mathcal{H}_j \Pi_l \mathcal{H}_i) \Pi_k
    \right]
    \\
    \nonumber
    &+
    \sum_{k \neq l}^D
    \Pi_k \mathcal{D}_{ij}^{\vr} \Pi_l
    +
    \sum_{k \neq l \neq m}^r
    \frac{\lambda_{k} - \lambda_{m}}{\lambda_k + \lambda_m}
    \frac{\lambda_{l} - \lambda_{m}}{\lambda_l + \lambda_m}
    \Pi_k \left(\mathcal{H}_i \Pi_m \mathcal{H}_j - \mathcal{H}_j \Pi_m \mathcal{H}_i\right) \Pi_l
    \Biggr\}
    \\
    &+
    2i \left(
    \Pi_\vr \mathcal{H}_i \mathcal{L}_j^{{\rm kk}}
    - \mathcal{L}_i^{{\rm kk}} \mathcal{H}_j \Pi_\vr
    \right)
    + \mathcal{L}_i^{{\rm kk}} \mathcal{L}_j^{{\rm kk}} - \mathcal{L}_j^{{\rm kk}} \mathcal{L}_i^{{\rm kk}},
\end{align}
where $\mathfrak{h}_{km}^{(i)} \mathfrak{h}_{ml}^{(j)} \ket{\psi_k} \! \bra{\psi_l} = \Pi_k \mathcal{H}_i \Pi_m \mathcal{H}_j \Pi_l$. To summarize this further, let us introduce
\begin{equation}
    \mathcal{D}_{ij}^k \equiv \mathcal{H}_i \Pi_k \mathcal{H}_j - \mathcal{H}_j \Pi_k \mathcal{H}_i,
    \quad
    \mathcal{D}_{ij}^\vr = \sum_{k=1}^r \mathcal{D}_{ij}^k.
\end{equation}
Using the relations
\begin{subequations}
    \begin{align}
        &\sum_{k =1}^r \Pi_k (\mathcal{H}_i \mathcal{H}_j - \mathcal{H}_i \mathcal{H}_j) \Pi_k
        + \sum_{k \neq l}^r \Pi_k (\mathcal{H}_i \mathcal{H}_j - \mathcal{H}_j \mathcal{H}_i) \Pi_l
        = \Pi_\vr (\mathcal{H}_i \mathcal{H}_j - \mathcal{H}_i \mathcal{H}_j) \Pi_\vr,
        \\
        &\sum_{k = r +1}^D \Pi_k \mathcal{D}_{ij}^{\vr} \Pi_k + \sum_{k \neq l}^D
        \Pi_k \mathcal{D}_{ij}^{\vr} \Pi_l = \Pi_{\vr}^\perp \mathcal{D}_{ij}^{\vr} \Pi_{\vr}^\perp,
    \end{align}
\end{subequations}
we obtain
\begin{align}
    \nonumber
    [\mathcal{S} (\vr_{\vec{\theta}})]_{ij}
    &= 4
    \Biggl\{
    \Pi_\vr (\mathcal{H}_i \mathcal{H}_j - \mathcal{H}_i \mathcal{H}_j) \Pi_\vr
    -
    \sum_{k \neq l}^r
    \Bigl[
    \Pi_k \mathcal{D}_{ij}^{\vr} \Pi_l
    +
    \frac{ 4\lambda_{k}\lambda_{l}}{(\lambda_k + \lambda_l)^2}
    \Pi_k \mathcal{D}_{ij}^l \Pi_k
    \Bigr]
    \\
    \nonumber
    &- 
    \sum_{k=1}^r
    \Pi_k \mathcal{D}_{ij}^k \Pi_k
    +
    \sum_{k \neq l \neq m}^r
    \frac{\lambda_{k} - \lambda_{m}}{\lambda_k + \lambda_m}
    \frac{\lambda_{l} - \lambda_{m}}{\lambda_l + \lambda_m}
    \Pi_k \mathcal{D}_{ij}^m \Pi_l
    \\
    \nonumber    
    &-
    \sum_{k \neq l}^r
    \frac{\lambda_{k} - \lambda_{l}}{\lambda_k + \lambda_l}
    \left[
    \Pi_k \mathcal{D}_{ij}^l \Pi_{\vr}^\perp
    +
    \Pi_{\vr}^\perp \mathcal{D}_{ij}^l \Pi_k
    \right]
    + \Pi_{\vr}^\perp \mathcal{D}_{ij}^{\vr} \Pi_{\vr}^\perp
    \Biggr\}
    \\
    &+2i \left(
    \Pi_\vr \mathcal{H}_i \mathcal{L}_j^{{\rm kk}}
    - \mathcal{L}_i^{{\rm kk}} \mathcal{H}_j \Pi_\vr
    \right)
    + \mathcal{L}_i^{{\rm kk}} \mathcal{L}_j^{{\rm kk}} - \mathcal{L}_j^{{\rm kk}} \mathcal{L}_i^{{\rm kk}}.
\end{align}
For simplicity, we denote that
$\mathcal{I}_{\rm ss}(\vr_{\vec{\theta}})$,
$\mathcal{I}_{\rm ss}^\prime(\vr_{\vec{\theta}})$,
$\mathcal{I}_{\rm sk}(\vr_{\vec{\theta}})$,
$\mathcal{I}_{\rm ks}(\vr_{\vec{\theta}})$,
$\mathcal{I}_{\rm kk}(\vr_{\vec{\theta}})$,
$\mathcal{J}_{\rm sk}(\vr_{\vec{\theta}})$,
$\mathcal{J}_{\rm ks}(\vr_{\vec{\theta}})$, and
$\mathcal{J}_{\rm kk}(\vr_{\vec{\theta}})$
with elements defined as
\begin{subequations}
\begin{align}
    [\mathcal{I}_{\rm ss}(\vr_{\vec{\theta}})]_{ij}
    &= 4 \Pi_\vr (\mathcal{H}_i \mathcal{H}_j - \mathcal{H}_i \mathcal{H}_j) \Pi_\vr,
    \\
    [\mathcal{I}_{\rm ss}^\prime(\vr_{\vec{\theta}})]_{ij}
    &= -4 \Biggl\{
    \sum_{k=1}^r
    \Pi_k \mathcal{D}_{ij}^k \Pi_k   
    + \sum_{k \neq l}^r
    \Bigl[
    \Pi_k \mathcal{D}_{ij}^{\vr} \Pi_l
    +
    \frac{ 4\lambda_{k}\lambda_{l}}{(\lambda_k + \lambda_l)^2}
    \Pi_k \mathcal{D}_{ij}^l \Pi_k
    \Bigr]
    - \sum_{k \neq l \neq m}^r
    \eta_{km} \eta_{lm}
    \Pi_k \mathcal{D}_{ij}^m \Pi_l
    \Biggr\},
    \\
    [\mathcal{I}_{\rm sk}(\vr_{\vec{\theta}})]_{ij}
    &= - 4 \sum_{k \neq l}^r \eta_{kl}
    \Pi_k \mathcal{D}_{ij}^l \Pi_{\vr}^\perp,
    \\
    [\mathcal{I}_{\rm ks}(\vr_{\vec{\theta}})]_{ij}
    &=  - 4 \sum_{k \neq l}^r \eta_{kl}
    \Pi_{\vr}^\perp \mathcal{D}_{ij}^l \Pi_k,
    \\
    [\mathcal{I}_{\rm kk}(\vr_{\vec{\theta}})]_{ij}
    &= 4 \Pi_{\vr}^\perp \mathcal{D}_{ij}^{\vr} \Pi_{\vr}^\perp,
    \\
    [\mathcal{J}_{\rm sk}(\vr_{\vec{\theta}})]_{ij}
    &= 2i \Pi_\vr \mathcal{H}_i \mathcal{L}_j^{{\rm kk}},
    \\
    [\mathcal{J}_{\rm ks}(\vr_{\vec{\theta}})]_{ij}
    &= -2i \mathcal{L}_i^{{\rm kk}} \mathcal{H}_j \Pi_\vr,
    \\
    [\mathcal{J}_{\rm kk} (\vr_{\vec{\theta}})]_{ij}
    &=\mathcal{L}_i^{{\rm kk}} \mathcal{L}_j^{{\rm kk}} - \mathcal{L}_j^{{\rm kk}} \mathcal{L}_i^{{\rm kk}},
\end{align}
\end{subequations}
where we denote that $\eta_{kl} = (\lambda_{k} - \lambda_{l})/(\lambda_{k} + \lambda_{l})$. Hence, we arrive at the form of $\mathcal{S} (\vr_{\vec{\theta}})$ in Eq.~(\ref{eq:GE:USU^daggwe}) in the main text.

\bibliography{ref.bib}
\end{document}